\title{On Selective Unboundedness of VASS}
\author{St\'ephane Demri
\institute{LSV, CNRS, ENS de Cachan, INRIA, France}
}
\let\c@definition\c@theorem
\let\c@lemma\c@theorem
\let\c@corollary\c@theorem
\let\c@proposition\c@theorem
\newif\iflong
\institute{LSV, ENS Cachan, CNRS, INRIA Saclay IdF, France}
\newcommand{\styletheo}[1]{\sf #1}
\newcommand{\set}[1]{\{ #1 \}}
\newcommand{\pair}[2]{( #1,#2 )}
\newcommand{\triple}[3]{( #1,#2,#3 )}
\newcommand{\tuple}[2]{( #1,\ldots,#2 )}
\newcommand{\Nat}{\mathbb{N}}
\newcommand{\Zed}{\mathbb{Z}}
\newcommand{\card}[1]{{\rm card}(#1)}
\newcommand{\aset}{X}
\newcommand{\asetter}{Z}
\newcommand{\length}[1]{|#1|}
\newcommand{\amap}{f}
\newcommand{\egdef}{\stackrel{\mbox{\begin{tiny}def\end{tiny}}}{=}} 
\newcommand{\equivdef}{\stackrel{\mbox{\begin{tiny}def\end{tiny}}}{\equivaut}} 
\newcommand{\equivaut}{\;\Leftrightarrow\;}
\newcommand{\vect}[1]{\vec{#1}}
\newcommand{\apath}{\pi}
\newcommand{\step}[1]{\xrightarrow{\!\!#1\!\!}}
\newcommand{\alocation}{q}
\newcommand{\locations}{Q}
\newcommand{\arun}{\rho}
\newcommand{\pspace}{\textsc{PSpace}}
\newcommand{\expspace}{\textsc{ExpSpace}}
\newcommand{\defstyle}[1]{\emph{#1}}
\newcommand{\pb}[1]{\textsc{#1}}
\newcommand{\atransition}{t}
\newcommand{\incr}{{\rm INC}}
\newcommand{\decr}{{\rm DEC}}
\newcommand{\amode}{\vect{mode}}
\newcommand{\avass}{\mathcal{V}}
\newcommand{\interval}[2]{[#1,#2]}
\newcommand{\avas}{\mathcal{T}}
\newcommand{\pic}[1]{{\rm maxneg}(#1)}
\newcommand{\absmaximum}[1]{{\rm scale}(#1)}
\newcommand{\maximum}[1]{{\rm max}(#1)}
\newcommand{\abs}[1]{|#1|}
\newcommand{\apseudorun}{\arun}
\newcommand{\aconstant}{{\tt C}}
\newcommand{\ainterval}{\mathcal{I}}
\newcounter{openproblem}
 \def\theopenproblem{\arabic{openproblem}}
 \newcounter{definition}[section]
 \def\thedefinition{\thesection.\arabic{definition}}
 \newcounter{theoreme}[section]
 \def\thetheoreme{\thesection.\arabic{theoreme}}
\hfill\mbox{\ $\bigcirc$}\endtrivlist\par}
 \newenvironment{definition}{\refstepcounter{definition}\trivlist\item[\hskip\labelsep{\styletheo{Definition} \thedefinition}.]}%
                            {\samepage\hfill\mbox{\
 $\nabla$}\endtrivlist\par}
 \newenvironment{lemma}{\refstepcounter{theoreme}\trivlist\item[\hskip\labelsep{\styletheo{Lemma}
 \thetheoreme}.]}%
                       {\endtrivlist\par}
 \newenvironment{corollary}{\refstepcounter{theoreme}\trivlist\item[\hskip\labelsep{\styletheo{Corollary}
 \thetheoreme}.]}%
                            {\endtrivlist\par}
 \newenvironment{theorem}{\refstepcounter{theoreme}\trivlist\item[\hskip\labelsep{\styletheo{Theorem} \thetheoreme}.]}%
                            {\endtrivlist\par}
\begin{document}
\maketitle
%
%
%
%
\begin{abstract} 
Numerous properties of vector addition systems with states am\-ount to checking 
the (un)boundedness of some
selective feature (e.g., number of reversals, \iflong counter value, \fi run length). 
Some of these features can be checked in exponential space by using
Rackoff's proof or its variants, combined with Savitch's theorem.
However, the question is still open for many others, e.g., reversal-boundedness.
In the paper, we introduce the class of generalized unboundedness properties 
that can be verified in exponential space
by extending Rackoff's technique, sometimes in an unorthodox way. 
We obtain new 
optimal upper bounds, 
for example for place boundedness problem, reversal-boundedness detection 
(several variants exist), strong promptness detection problem and regularity detection.
Our analysis is sufficiently refined so as we  also obtain a polynomial-space  bound when the 
dimension is fixed. 
\iflong
Other problems are discussed in the paper (for instance  about reversal-boundedness), mainly related
to computational complexity issues.
\fi 
\end{abstract}
\section{Introduction}
\label{section-introduction}
\iflong
\paragraph{Reversal-boundedness.} 
\else {\bf Reversal-boundedness.}
\fi
A standard approach to circumvent the undecidability of the 
 reachability problem for counter 
\iflong automata~\cite{Minsky67} 
\else automata~\cite{Minsky67} 
\fi consists in designing
subclasses with simpler decision problems. For instance, the reachability problem is decidable for
vector addition systems with 
\iflong states (VASS)~\cite{Karp&Miller69}, flat relational counter  
automata~\cite{Comon&Jurski98} 
\else  states (VASS)~\cite{Karp&Miller69}
\fi
or for lossy counter automata~\cite{Abdulla&Jonsson96}. 
Among the other interesting subclasses of counter automata, 
reversal-bounded counter automata verify that any counter has a boun\-ded 
number of reversals, alternations
between a nonincreasing mode and a nondecreasing mode, and vice versa. 
Reversal-boundedness remains a standard concept that 
was initially introduced in~\cite{Baker&Book74} for multistack automata.  
 A major property of such operational models is that reachability sets are effectively
definable in Presburger arithmetic~\cite{Ibarra78}, which allows decision procedures
for LTL existential model-checking and other related problems, see e.g.~\cite{Dang&Ibarra&SanPietro01}. 
However, many natural problems related to verification remain undecidable for reversal-bounded
counter automata, 
\iflong
see e.g.~\cite{Dang&Ibarra&SanPietro01,Demri&Sangnier10}, 
\else
see e.g.~\cite{Dang&Ibarra&SanPietro01,Demri&Sangnier10}, 
\fi
and the 
class of reversal-bounded counter automata is not recursive~\cite{Ibarra78}. 
A significant breakthrough was achieved in~\cite{Finkel&Sangnier08} by 
designing a procedure to determine when a VASS is reversal-bounded (or weakly reversal-bounded as defined
later), 
even though the decision procedure can be nonprimitive recursive in the
worst-case. 
This means that reversal-bounded VASS 
can benefit from the known techniques for Presburger arithmetic
in order to solve their verification problems. \iflong \else \\ \fi 
\iflong 
Moreover, not only  
we wonder what is the computational complexity of the problem of determining whether
      a VASS is reversal-bounded but also in case of reversal-boundedness, it is important to evaluate
the size of the maximal reversal $r$ in terms of the size of the VASS, see e.g. the recent 
work~\cite{To10} following~\cite{Howell&Rosier87} that uses in an essential way the value $r$. 
\fi 
\iflong
\paragraph{Selective  unboundedness.}
\else \noindent {\bf Selective  unboundedness.}
\fi
In order to characterize the complexity of detecting reversal-boundedness on VASS (the initial
motivation for this work), we  make a 
detour to selective
unboundedness, as explained below. 
Numerous properties of vector addition systems with states 
amounts to checking 
the (un)boundedness of some
selective feature. Some of these features can be verified in exponential space by using
Rackoff's proof or its variants, whereas the question is still open for many of 
them. In the paper, we advocate that many 
properties can be
decided as soon as we are able to decide selective unboundedness, which is a generalization
of place unboundedness for Petri nets (known to be equivalent to VASS). 
The boundedness problem was first considered in~\cite{Karp&Miller69} 
and shown decidable by simply inspecting Karp and Miller trees: the presence of
the infinity value $\infty$ (also denoted by $\omega$) is equivalent to unboundedness. 
So, 
unboundedness is equivalent to the existence of a witness 
run of the form $\vect{x_0} \step{*} \vect{x_1} \step{\apath} \vect{x_2}$ 
such that 
\iflong
$\vect{x_1} \prec \vect{x_2}$, assuming that the initial
configuration is  $\vect{x}_0$.
\else
$\vect{x_1} \prec \vect{x_2}$ ( $\prec$ is the standard strict ordering on tuples of natural numbers).
\fi 
  In~\cite{Rackoff78}, it is shown that if there is such a run, there
is one of length at most doubly exponential. This leads to the \expspace-completeness of the
boundedness problem for VASS using the lower bound from~\cite{Lipton76} and 
\iflong
Savitch's theorem~\cite{Savitch70}. 
\else
Savitch's theorem.
\fi
A variant problem consists in checking whether the $i$th component is bounded, i.e., 
is there  a bound $B$ such that for every configuration
reachable from $\vect{x}_0$, its $i$th component is bounded by $B$? 
Again, inspecting Karp and Miller trees reveals the answer: 
 the presence of
the infinity value $\infty$ at the $i$th position of some extended configuration is equivalent
to $i$-unboundedness. Surprisingly, the literature often mentions this alternative problem, see e.g.~\cite{Reisig&Rozenberg98}, but never
specifies its complexity: \expspace-hardness can be obtained 
from~\cite{Lipton76}
but as far as we know, no elementary complexity upper bound has been shown. 
\iflong
A natural adaptation from boundedness is certainly that  
$i$-unboundedness could be witnessed by the existence of
a run of the form $\vect{x_0} \step{*} \vect{x_1} \step{\apath} \vect{x_2}$ 
such that $\vect{x_1} \prec \vect{x_2}$ with $\vect{x_1}(i) < \vect{x_2}(i)$. By inspecting the 
proof in~\cite{Rackoff78}, one can  show that if there is such a run, then there is one of length at most
doubly exponential. However, although existence of such a run is a sufficient condition for
$i$-unboundedness (simply iterate  $\apath$ infinitely), this is not a necessary
condition.
\fi 
It might be explained by the fact that, if a VASS is unbounded, then there is a witness 
infinite run with an infinite number of distinct
configurations.  By contrast, 
it may happen that a VASS is $i$-unbounded but no infinite run has an infinite amount
of distinct values at the $i$th position of the configurations of the run. 
In the paper, we present a generalization of  place unboundedness 
by checking whether a set of components is simultaneously unbounded, possibly with some 
ordering (see  Section~\ref{section-helpful-generalization}).
This amounts to specifying in the Karp and Miller trees, the ordering with which the  value 
$\infty$ appears in the different components. \iflong \else \\ \fi
\iflong
Such a generalization is particularly useful since
we shall show that many problems such  as reversal-boundedness~\cite{Ibarra78}, strong 
reversal-boundedness~\cite{ibarra-counter-02}, 
reversal-boundedness from~\cite{Finkel&Sangnier08}, control-state repeating problem~\cite{Habermehl97} 
can be naturally reduced
to simultaneous  unboundedness.  Moreover, this allows to extend the class of properties
for which \expspace \ can be obtained, see e.g. standard results 
in~\cite{Rackoff78,Habermehl97,Faouzi&Habermehl09}.
\fi 
\iflong \paragraph{Our contribution.}
\else {\bf Our contribution.}
\fi 
In the paper we show the following results.
\begin{enumerate}
\itemsep 0 cm
\item Detecting whether a VASS is reversal-boundedness in the sense of~\cite{Ibarra78} 
      or~\cite{Finkel&Sangnier08}  is \expspace-complete by refining the decidability results
      from~\cite{Finkel&Sangnier08} (see Theorem~\ref{theorem-RB}). 
\item We introduce the generalized unboundedness problem in 
      which many problems can be captured such as the 
      reversal-boundedness detection problems,
      \iflong  the boundedness problem, \fi
      the place boundedness problem, termination, 
      \iflong control-state repeating problem, \fi
      strong promptness detection problem, regularity detection
      and many other decision problems on VASS.
      We show that this problem can be solved in exponential space by adapting~\cite{Rackoff78} 
      even though it does not fall into the class of increasing path formulae
      \iflong recently \fi introduced in~\cite{Faouzi&Habermehl09} (see Theorem~\ref{theorem-gene}).
\item Consequently, we show that regularity and strong promptness
      detection problems for VASS are in \expspace. The \expspace \ upper bound has been
      left open in~\cite{Faouzi&Habermehl09}.  Even though most of our results essentially 
      rest on the
fact that place boundedness can be solved in \expspace, our slight generalization is introduced
to obtain new complexity
upper bound for other related problems.  
\iflong 
 On our way to 
this complexity result, 
we provide a witness run characterization
for place unboundedness that can still be expressed in Yen's path logic~\cite{Yen92,Faouzi&Habermehl09}
but with a path formula of exponential size in the dimension. 
\fi 
\item As a by-product of our analysis and following 
      a parameterized analysis initiated in~\cite{Rosier&Yen86,Howell&Rosier87}, 
      for all the above-mentionned problems, we show that fixing the dimension of the VASS allows
to get a \pspace \ upper bound.
\end{enumerate}
\iflong
The descriptive complexity of our witness run characterization for selective unboundedness 
partly explains why it has been ignored so far. 
It is clear that whenever the place boundedness problem is decidable, 
the boundedness problem is decidable too.
However, the converse does not always hold true: for instance the boundedness problem for transfert nets
 is decidable unlike the place boundedness problem~\cite{Dufourd&Jancar&Schnoebelen99}.
Place boundedness problem can be therefore intrinsically more difficult than
the boundedness problem: there is always a simple way to be unbounded but if one looks for $i$-unboundedness,
it might be much more difficult to detect it, if possible at all.  \\
\else
\fi
The paper has also original contributions as far as  proof techniques are concerned.
First, simultaneous  unboundedness has a simple characterization in terms of
Karp and Miller trees, but we provide in the paper a witness run characterization, which allows us
to provide a complexity analysis along the lines of~\cite{Rackoff78}. 
We also provide a witness pseudo-run characterization
in which we  sometimes admit negative component values. 
This happens to be the right approach when a characterization from 
coverability graphs~\cite{Karp&Miller69,Valk&VidalNaquet81} already exists.
Apart from this unorthodox
adaptation of~\cite{Rackoff78},
 in the counterpart of Rackoff's proof
about the induction on the dimension, we provide an induction on the dimension and on the length
of the properties to be verified (see Lemma~\ref{lemma-main-induction}).
This is a genuine breakthrough compared 
to~\cite{Rackoff78,Rosier&Yen86,Habermehl97,Faouzi&Habermehl09}. 
We believe this approach is still subject to 
extensions. \\
\iflong
In a sense, this is comparable to a more special situation when a decision procedure
for the covering problem is used to solve efficiently an instance of the 
boundedness problem, see e.g., \cite{Demrietal09,Finkel&Sangnier10}. \\
\fi
\iflong 
{\em Because of lack of space, omitted proofs can be found in the technical appendix.}
\fi 
\section{Preliminaries}

In this section, we recall the main definitions for vector addition systems with states (VASS),
without states (VAS) as well as the notions of reversal-boundedness introduced 
in~\cite{Ibarra78,Finkel&Sangnier08}. We also present the simultaneous unboundedness problem,
which slightly generalizes place unboundedness problem for Petri nets. 
 First, we write $\Nat$ [resp. $\Zed$] for the set of natural numbers [resp. integers]
and $\interval{m}{m'}$ with $m,m' \in \Zed$ to denote the set $\set{j \in \Zed: m \leq j \leq m'}$.
\iflong
Given a dimension $n \geq 1$ and $a \in \Zed$, we write $\vect{a}$ to denote
the vector with all values equal to $a$. 
\fi 
For $\vect{x} \in \Zed^n$,  we write
$\vect{x}(1)$, \ldots, $\vect{x}(n)$ for the entries of $\vect{x}$.
For $\vect{x}, \vect{y} \in \Zed^n$, $\vect{x} \preceq \vect{y}$ $\equivdef$
for $i \in \interval{1}{n}$, we have $\vect{x}(i) \leq \vect{y}(i)$.
We also write $\vect{x} \prec \vect{y}$ when  $\vect{x} \preceq \vect{y}$ and
 $\vect{x} \neq \vect{y}$. 

\subsection{Simultaneous unboundedness problem for VASS}
\iflong \paragraph{VASS.} 
\else {\bf VASS.}
\fi 
A \defstyle{vector addition system with states}~\cite{Hopcroft&Pansiot79} (VASS for short)
is  a finite-state automaton with transitions labelled
      by tuples of integers viewed as update functions. A \defstyle{VASS} is a structure $\avass = 
 \triple{\locations}{n}{\delta}$  such that  $\locations$ is a nonempty finite set of \defstyle{control states},
 $n \geq 1$ is the \defstyle{dimension}, and 
$\delta$ is the \defstyle{transition relation} defined as a finite set of triples in $\locations \times \Zed^n \times \locations$.  
Elements $\atransition  = \triple{\alocation}{\vect{b}}{\alocation'} \in \delta$ are called
\defstyle{transitions} and are often represented by $\alocation \step{\vect{b}} \alocation'$.
\iflong
Moreover,  a VASS has no initial control state and no final control state
but in the sequel we shall introduce such control states on demand. 
\fi 
VASS with a unique control state are called \defstyle{vector addition systems} 
(VAS for short)~\cite{Karp&Miller69}.
In the sequel, a VAS $\avas$ is represented by a finite nonempty subset 
of $\Zed^n$, encoding naturally the transitions. VASS and VAS are  equivalent 
to Petri nets, see e.g.~\cite{Reutenauer90}. 
In this paper, the decision problems are defined with the VASS  model
and the decision procedures are designed for VAS, assuming that we know how the problems can be 
reduced, see e.g.~\cite{Hopcroft&Pansiot79}.
Indeed, we prefer to define problems with the help of the VASS model since 
when infinite-state transition 
systems arise in the  modeling of computational processes, there is often a natural factoring of
each system state into a control component and a memory component,
where the set of control states is typically finite. \iflong \else \\ \fi 
\iflong
In this paper, we use the reduction from VASS to VAS defined in~\cite{Hopcrotf&Pansiot79}
that allows to simulate a VASS of dimension $n$ by a VAS of dimension $n+3$, independently
of its number of control states (formal definition is recalled in the proof of Lemma~\ref{lemma-rb-hp}). 
Even though, a much simpler reduction exists that increments
the dimension by the cardinal of the set of control states,
the reduction from~\cite{Hopcrotf&Pansiot79} is what we need, since sometimes, 
at some intermediate stage, 
we may increase exponentially
the number of control states. 
\fi 
\iflong \paragraph{Runs.} 
\else \noindent {\bf Runs.}
\fi 
A \defstyle{configuration} of $\avass$ is defined as a pair $\pair{\alocation}{\vect{x}} \in 
\locations \times \Nat^n$ (for VAS, we simply omit the control state). 
An \defstyle{initialized VASS} is a pair of a VASS and a configuration. 
Given two configurations $\pair{\alocation}{\vect{x}}$,  $\pair{\alocation'}{\vect{x'}}$ 
and a transition $\atransition = \alocation \step{\vect{b}} \alocation'$, we 
write $\pair{\alocation}{\vect{x}} \step{\atransition} \pair{\alocation'}{\vect{x'}}$
whenever $\vect{x'} = \vect{x} + \vect{b}$.
We also write $\pair{\alocation}{\vect{x}} \step{} \pair{\alocation'}{\vect{x'}}$
when there is no need to specify the transition $\atransition$. 
The operational semantics of
VASS updates configurations, runs of such systems are essentially sequences of configurations.
Every VASS  induces 
a (possibly infinite) directed graph of configurations. Indeed, all the interesting problems on VASS
can be formulated on its \defstyle{transition system} $\pair{\locations \times \Nat^n}{\step{}}$. 
Given a VASS $\avass = \triple{\locations}{n}{\delta}$, 
a \defstyle{run} $\arun$ is a nonempty (possibly infinite) sequence 
$\arun = \pair{\alocation_0}{\vect{x_0}}, \ldots, \pair{\alocation_k}{\vect{x_k}}, \ldots$ 
of configurations 
such that $\pair{\alocation_i}{\vect{x_i}} \step{} \pair{\alocation_{i+1}}{\vect{x_{i+1}}}$ for all $i$. 
We set $Reach(\avass, \pair{\alocation_0}{\vect{x_0}}) \egdef
\set{\pair{\alocation_k}{\vect{x_k}}:  
     {\rm there \ is \ a \ finite \ run} \ 
     \pair{\alocation_0}{\vect{x_0}}, \ldots, \pair{\alocation_k}{\vect{x_k}}}$. 
\iflong
A run can be alternatively represented by an initial configuration and a sequence of transitions,
assuming that firability holds true for the intermediate configurations. 
\fi 
A \defstyle{path} $\apath$ is a finite sequence of transitions whose successive control states respect
$\delta$ (actually this notion is mainly used for VAS without control states).
A \defstyle{pseudo-configuration} is defined as an element of $\locations \times \Zed^n$. 
When  $\apath = \atransition_1 \ldots \atransition_k$ is a path, the \defstyle{pseudo-run} 
$\apseudorun = \pair{\apath}{\pair{\alocation}{\vect{x}}}$ is defined as the sequence of pseudo-configurations
$\pair{\alocation_0}{\vect{x}_0} \cdots \pair{\alocation_k}{\vect{x}_k}$ such that
 $\pair{\alocation_0}{\vect{x}_0} = \pair{\alocation}{\vect{x}}$, and
for $i \in \interval{1}{k}$, there is 
$\atransition = \alocation_{i} \step{\vect{b}} \alocation_{i+1}$
such that
$\vect{x}_i = \vect{x}_{i-1} + \vect{b}$. 
So, we deliberately distinguish the notion of path (sequence of transitions)
from the notion of pseudo-run (sequence of elements in $\locations \times \Zed^n$ respecting
the transition from $\avass$). 
\iflong
The pseudo-run $\apseudorun$ 
is \defstyle{induced} by the path $\apath$ and of \defstyle{length} $k+1$; 
the path $\apath$ is of \defstyle{length} $k$.  
$\pair{\alocation_0}{\vect{x}_0}$ is called the \defstyle{initial} pseudo-configuration and
$\pair{\alocation_k}{\vect{x}_k}$ is called the \defstyle{final} pseudo-configuration in the pseudo-run 
$\apseudorun$. 
\fi 
We also use the notation $\pair{\alocation}{\vect{x}} \step{\atransition} \pair{\alocation'}{\vect{x'}}$
with pseudo-configurations. 
Given a VASS $\avass$ [resp. 
a pseudo-configuration $\pair{\alocation}{\vect{x}}$, etc.] of dimension $n$,
we write $\avass(I)$ [resp. $\pair{\alocation}{\vect{x}}(I)$, etc.]
to denote the restriction of $\avass$ [resp.  $\pair{\alocation}{\vect{x}}$,  etc.]
to the components in $I \subseteq \interval{1}{n}$. \\
\iflong \paragraph{Sizes.}
\else {\bf Sizes.}
\fi 
\iflong
It is now time to fix a few definitions. Let us start by defining the size of some 
VAS $\avas$ of dimension $n \geq 1$.
\fi 
Given $\vect{x} \in \Zed^n$, we write $\pic{\vect{x}}$ 
\iflong [resp. $\maximum{\vect{x}}$] \fi to denote the value 
$\maximum{\set{\maximum{0, - \vect{x}(i)}: i \in \interval{1}{n}}}$.
\iflong [resp. $\maximum{\set{\vect{x}(i): i \in \interval{1}{n}}}$].\fi
By extension, we write $\pic{\avass}$ to denote  
${\rm max} \set{\pic{\vect{b}}: \alocation \step{\vect{b}} \alocation'  \in \delta}$. 
Furthermore, we write $\absmaximum{\avass}$ to denote the value
$
\maximum{\set{\abs{\vect{b}(i)}: \alocation \step{\vect{b}} \alocation'  \in \delta, \
 i \in \interval{1}{n}}}
$.
For instance $\pic{\pair{-2}{3}} = 2$ and $\absmaximum{\set{\pair{-2}{3}}} = 3$. 
\iflong
The size of $\avas$, written $\length{\avas}$, is defined by the value below:
$
n \times \card{\avas} \times (2 + \left \lceil log_2(1 + \absmaximum{\avas}) \right\rceil)
$.
Given a finite subset $\aset$ of $\Zed^n$, we also write  
$\length{\aset}$ to denote
$
n \times \card{\aset} \times (2 + \left \lceil log_2(1 + \absmaximum{\aset}) \right\rceil)
$. We write $\length{\vect{x}}$ to denote the size of $\vect{x} \in \Zed^n$ defined as
the size of the singleton set $\set{\vect{x}}$. 
\fi
Given a VASS $\avass = \triple{\locations}{n}{\delta}$, we write $\length{\avass}$ to denote its
size defined by
\iflong
$$
\card{\locations} + 
n \times \card{\delta} \times
 (2 \times \card{\locations} + (2 + \left \lceil log_2(1 + \absmaximum{\avass}) \right\rceil))
$$
\else
$
\card{\locations} + 
n \times \card{\delta} \times
 (2 \times \card{\locations} + (2 + \left \lceil log_2(1 + \absmaximum{\avass}) \right\rceil))
$.
\fi
Observe that $2 + \left \lceil log_2(1 + a) \right\rceil$ is a sufficient number of bits
      to encode integers in $\interval{-a}{a}$ for $a > 0$.
Moreover $\absmaximum{\avass} \geq \pic{\avass}$, 
$\absmaximum{\avass} \leq 2^{\length{\avass}}$ and $\length{\avass} \geq 2$. \\ 
\iflong
\paragraph{Standard problems.}
The reachability problem for VASS is decidable~\cite{Mayr84,Kosaraju82}.
Nevertheless, the exact complexity of the reachability problem is open: we know 
\iflong
it is  \expspace-hard~\cite{Lipton76,Cardoza&Lipton&Meyer76,Esparza98} 
\else
it is  \expspace-hard~\cite{Lipton76,Esparza98} 
\fi 
and no primitive recursive upper bound exists.
By contrast, the covering problem and boundedness problems seem easier since they are 
\expspace-complete~\cite{Lipton76,Rackoff78}.
\iflong
Decidability is established in~\cite{Karp&Miller69} but with a worst-case non primitive recursive
bound. 
The \expspace \ lower bound is due to Lipton and the upper bound to Rackoff.
In order to be complete, one should precise how vectors in $\Zed^n$ are encoded.
The upper bound holds true with a binary representation of integers whereas
the lower bound holds true already with the values -1, 0 and 1.
Consequently, the problem is \expspace-hard even with an unary encoding.
\fi 
The proof technique in~\cite{Rackoff78} has been also used to established
that LTL model-checking problem for VASS is \expspace-complete~\cite{Habermehl97}.
\iflong
By adding the possibility to reset counters in the system (providing the class of \defstyle{reset VASS}),
the boundedness and the reachability problems becomes undecidable, see 
e.g.~\cite{Dufourd&Finkel&Schnoebelen98}. 
By contrast, the covering problem for VASS with resets is decidable
by using the theory of well-structured transition systems, see 
e.g.~\cite{Finkel&Schnoebelen01}.
\fi 
\fi 
\iflong \paragraph{Simultaneous unboundedness problem.}
\else \noindent {\bf Simultaneous unboundedness problem.}
\fi
Let $\pair{\avass}{\pair{\alocation_0}{\vect{x}_0}}$ be an initialized VASS of dimension $n$ 
and $\aset \subseteq \interval{1}{n}$. We say that 
$\pair{\avass}{\pair{\alocation_0}{\vect{x}_0}}$
is \defstyle{simultaneously $\aset$-unbounded} if for any
$B \geq 0$, there is a run from  $\pair{\alocation_0}{\vect{x_0}}$ to  $\pair{\alocation}{\vect{y}}$ 
such that
for $i \in \aset$, we have $\vect{y}(i) \geq B$. 
When $\aset = \set{j}$, we say that $\pair{\avass}{\pair{\alocation_0}{\vect{x}_0}}$
 is \defstyle{$j$-unbounded}. 
It is clear that  $\pair{\avass}{\pair{\alocation_0}{\vect{x}_0}}$ 
is bounded (i.e., $Reach(\avass,\pair{\alocation_0}{\vect{x}_0})$ is finite) iff for all $j$, 
$\pair{\avass}{\pair{\alocation_0}{\vect{x}_0}}$  is not $j$-unbounded. 
\iflong
So, here is the simultaneous unboundedness problem.

\noindent
\pb{Simultaneous unboundedness problem}:
\begin{description}
\item[Input:]  an initialized VAS $\pair{\avas}{\vect{x}_0}$ of dimension $n$ and $\aset 
             \subseteq \interval{1}{n}$. 
\item[Question:] is $\pair{\avas}{\vect{x}_0}$ simultaneously $\aset$-unbounded?
\end{description}
\else The \defstyle{simultaneous unboundedness problem} is defined as follows: 
given an initialized VASS 
$\pair{\avass}{\pair{\alocation_0}{\vect{x}_0}}$ of dimension $n$ and $\aset 
\subseteq \interval{1}{n}$, is $\pair{\avass}{\pair{\alocation_0}{\vect{x}_0}}$ 
simultaneously $\aset$-un\-boun\-ded?
\fi
\begin{theorem} \cite{Karp&Miller69}
Simultaneous unboundedness problem is decidable.
\end{theorem}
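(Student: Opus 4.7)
The plan is to reduce the problem to inspection of the Karp--Miller coverability tree of the initialized VASS. Given $\pair{\avass}{\pair{\alocation_0}{\vect{x}_0}}$, one first constructs the finite Karp--Miller tree $\atree$ whose nodes are labelled by extended configurations in $\locations \times (\Nat \cup \set{\omega})^n$, using the standard rule: whenever a newly generated node $v$ has a strictly smaller ancestor $u$ on the path from the root (in the componentwise $\prec$ ordering on $\Nat \cup \set{\omega}$ extended with $\omega$ as top element), replace the strictly increasing components of the label of $v$ by $\omega$, and propagate $\omega$'s thereafter. The standard proof via Dickson's lemma shows that $\atree$ is finite.

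Next, I would prove the following characterization: $\pair{\avass}{\pair{\alocation_0}{\vect{x}_0}}$ is simultaneously $\aset$-unbounded iff $\atree$ contains a node whose label $\pair{\alocation}{\vect{y}}$ satisfies $\vect{y}(i) = \omega$ for every $i \in \aset$. This is the only non-routine step, and it splits into two directions.

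For the easy direction (``if''), suppose such a node exists. By construction of $\atree$, each $\omega$ at a component $i \in \aset$ is produced by some loop on the path from the root that strictly increases component $i$ (and leaves it nondecreasing thereafter through further loops introducing other $\omega$'s). By iterating these loops in the order in which they appear along the path --- each sufficiently many times --- one builds, for any bound $B$, a genuine (non-$\omega$) run of $\avass$ from $\pair{\alocation_0}{\vect{x}_0}$ reaching a configuration whose components in $\aset$ all exceed $B$, because earlier loops can be pumped before the later ones consume resources. For the converse (``only if''), suppose $\pair{\avass}{\pair{\alocation_0}{\vect{x}_0}}$ is simultaneously $\aset$-unbounded. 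For arbitrarily large $B$ there is a run reaching a configuration with all $\aset$-components $\geq B$; applying Dickson's lemma to the sequence of configurations visited along such runs as $B \to \infty$ yields a decomposition with increasing loops that simultaneously pump all components in $\aset$, and the Karp--Miller construction necessarily records this by producing $\omega$ at every $i \in \aset$ in some node of $\atree$.

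The main obstacle is the ``if'' direction of this characterization: although each individual $\omega$ is witnessed by some loop on the root-to-node path, one must argue that the loops responsible for the various $\aset$-components can be combined into a single run that pumps them \emph{simultaneously}. This is where the ordering of the loops along the path matters --- components made $\omega$ earlier remain nondecreasing, so later loops do not destroy the gains of earlier ones --- and it is precisely this feature of the Karp--Miller construction that makes the characterization work. Once the characterization is established, decidability is immediate since $\atree$ is finite and effectively constructible, and one only has to scan its nodes for a label whose $\aset$-components are all $\omega$.
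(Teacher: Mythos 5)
Your proposal is correct and takes essentially the same route as the paper: the paper also reduces simultaneous $\aset$-unboundedness to the presence, in the Karp--Miller coverability graph, of an extended configuration with $\infty$ on every component of $\aset$, citing~\cite{Karp&Miller69,Valk&VidalNaquet81} for that characterization and for finiteness/effectiveness of the construction. Your additional discussion of why the loops producing the various $\omega$'s can be pumped in path order to witness simultaneous unboundedness is exactly the content the paper delegates to those references (and revisits later in its Lemma on ${\rm PB}_{\sigma}$).
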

This follows from~\cite{Karp&Miller69,Valk&VidalNaquet81}:
$\pair{\avass}{\pair{\alocation_0}{\vect{x}_0}}$ 
is simultaneously $\aset$-unbounded iff
the coverability graph 
$CG(\avass,\pair{\alocation_0}{\vect{x}_0})$ (see e.g.,~\cite{Karp&Miller69,Valk&VidalNaquet81}) contains an extended configuration $\pair{\alocation}{\vect{y}}$ 
such that
$\vect{y}(\aset) = \vect{\infty}$ (for $\alpha \in \Zed \cup \set{\infty}$, we write
$\vect{\alpha}$ to denote any vector of dimension $n \geq 1$ whose component values
are $\alpha$). 
\subsection{Standard reversal-boundedness and its new variant}
A \defstyle{reversal} for a counter occurs in a run 
when there is an alternation from nonincreasing mode to nondecreasing mode and vice-versa.
\iflong
For instance, in the sequence below, there are three reversals identified by an upper line:
$$00 11 22 333 4444 \overline{3}33 222 \overline{3} 33 4444 55555 \overline{4}$$
Similarly, the sequence $00 111 22222 333333 4444$ has no reversal. 
\fi 
A VASS is \defstyle{reversal-bounded} whenever 
there is $r \geq 0$ such that
for any run, every counter makes no more than $r$ reversals.
This class of VASS  has been introduced and studied in~\cite{Ibarra78}, partly inspired
by similar restrictions on multistack automata~\cite{Baker&Book74}. 
\iflong
A formal definition will follow, but before going any further, 
it is worth pointing out a few peculiarities
of this  subclass. 
Indeed, reversal-bounded VASS
are augmented with an initial configuration so that existence of the bound $r$ is 
relative to the initial configuration.
Secondly, this class is not defined  from the class of VASS by imposing
syntactic restrictions but rather semantically.
\fi 
In spite of the fact
that the problem of deciding  whether a counter automaton (VASS with zero-tests) is reversal-bounded 
is undecidable~\cite{Ibarra78}, we shall see that reversal-bounded counter automata have 
numerous fundamental properties.
Moreover, a breakthrough has been achieved
in~\cite{Finkel&Sangnier08} by establishing that checking whether a VASS is reversal-bounded is decidable.
The decidability proof in~\cite{Finkel&Sangnier08} provides a decision procedure that requires
nonprimitive recursive time in the worst-case since Karp and Miller trees need to be 
built~\cite{Karp&Miller69,Valk&VidalNaquet81}. 
\iflong  \fi
Let $\avass = \triple{\locations}{n}{\delta}$ be a VASS.
Let us define the auxiliary VASS $\avass_{rb}  = \triple{\locations'}{2n}{\delta'}$
such that essentially, 
the $n$ new components in $\avass_{rb}$ count the number of reversals for each component
from $\avass$. 
We set
$\locations' = \locations \times  \set{\decr,\incr}^n$ and,
for each $\vect{v} \in \set{\decr,\incr}^n$ and $i \in \interval{1}{n}$, $\vect{v}(i)$ 
encodes whether component $i$ is in a decreasing mode or in an increasing mode. 
Moreover,
$\pair{\alocation}{\amode} \step{\vect{b'}} \pair{\alocation'}{\amode'}  \in \delta'$ 
(with $\vect{b'} \in \Zed^{2n}$)  $\equivdef$
      there is $\alocation \step{\vect{b}} \alocation' \in \delta$ such that
     $\vect{b'}(\interval{1}{n}) = \vect{b}$ and 
for every $i \in \interval{1}{n}$, one of the conditions below is satisfied:
\begin{itemize}
\itemsep 0 cm
\item $\vect{b}(i) < 0$, $\amode(i) = \amode'(i) = \decr$ and $\vect{b'}(n+i) = 0$,
\item $\vect{b}(i) < 0$, $\amode(i) = \incr$, $\amode'(i) = \decr$ and $\vect{b'}(n+i) = 1$,
\item $\vect{b}(i) > 0$, $\amode(i) = \amode'(i) = \incr$ and $\vect{b'}(n+i) = 0$,
\item $\vect{b}(i) > 0$, $\amode(i) = \decr$, $\amode'(i) = \incr$ and $\vect{b'}(n+i) = 1$,
\item $\vect{b}(i) = 0$, $\amode(i) = \amode'(i)$ and $\vect{b'}(n+i) = 0$. 
\end{itemize} 
Initialized VASS $\pair{\avass}{\pair{\alocation}{\vect{x}}}$ is 
\defstyle{reversal-bounded}~\cite{Ibarra78}
$\equivdef$ for  $i \in \interval{n+1}{2n}$, $\set{\vect{y}(i): \ \exists \ {\rm run} \ \pair{\alocation_{rb}}{\vect{x}_{rb}} \step{*}  
\pair{\alocation'}{\vect{y}} \ {\rm in } \ \avass_{rb}}$ 
is finite with $\alocation_{rb} = \pair{\alocation}{\vect{\incr}}$, 
$\vect{x}_{rb}$ restricted to the $n$ first components is $\vect{x}$
and $\vect{x}_{rb}$ restricted to the $n$ last components is $\vect{0}$. 
When $r \geq \maximum{\set{\vect{y}(i): \ \exists \ {\rm run} \ \pair{\alocation_{rb}}{\vect{x}_{rb}} \step{*}  
\pair{\alocation'}{\vect{y}} \ {\rm in } \ \avass_{rb}}: i 
\in \interval{n+1}{2n}}$  $\pair{\avass}{\pair{\alocation}{\vect{x}}}$ is said to 
be \defstyle{$r$-reversal-bounded}.
For \iflong a fixed \fi  $i \in \interval{1}{n}$, when 
$\set{\vect{y}(n+i): \ \exists \ {\rm run} \ \pair{\alocation_{rb}}{\vect{x}_{rb}} \step{*}  
\pair{\alocation'}{\vect{y}} \ {\rm in } \ \avass_{rb}}$ 
is finite, we say that $\pair{\avass}{\pair{\alocation}{\vect{x}}}$ is 
\defstyle{reversal-bounded with respect to $i$}. 
\iflong
\defstyle{Strong reversal-boundedness} has been also defined~\cite{ibarra-counter-02} by considering
that periods in which a counter remains constant is also a phasis. Typically, in the construction 
of the auxiliary VASS $\avass_{srb}$ on the model for $\avass_{rb}$, 
we need to introduce a third phasis symbol (apart from $\decr$ and $\incr$)
and adapt slightly the definition for the transition relation. Details are omitted herein. 
\fi 
\iflong
The question of checking whether a VASS $\avass$ is globally reversal-bounded [resp. globally strongly
reversal-bounded], i.e. for every configuration $\pair{\alocation}{\vect{x}}$, 
 $\pair{\avass}{\pair{\alocation}{\vect{x}}}$ is 
reversal-bounded  [resp. strongly
reversal-bounded] can be reduced to reversal-boundedness  [resp. strong
reversal-boundedness]. Indeed, it is sufficient to introduce a new control state $\alocation_{new}$ that contains as many self-loops
as the dimension $n$ and each self-loop $i$ increments the $i$th component. Then, nondeterministically we jump
to the rest of the VASS with no effect on the counters. 
In this way, $\pair{\avass'}{\pair{\alocation_{new}}{\vect{0}}}$ is 
reversal-bounded ($\avass'$ is the new VASS obtained as a variant of $\avass$) iff
 $\avass$ is globally reversal-bounded. 
\else
A VASS $\avass$ is \defstyle{ globally reversal-bounded} iff  there is $r \geq 0$ such that
for every configuration $\pair{\alocation}{\vect{x}}$, 
 $\pair{\avass}{\pair{\alocation}{\vect{x}}}$ is 
$r$-reversal-bounded. Global reversal-boundedness detection can be easily reduced
to  reversal-boundedness detection. 
Indeed, it is sufficient to introduce a new control state $\alocation_{new}$ that contains as many self-loops
as the dimension $n$ and each self-loop $i$ increments the $i$th component. Then, nondeterministically we jump
to the rest of the VASS. 
In this way, $\pair{\avass'}{\pair{\alocation_{new}}{\vect{0}}}$ is 
reversal-bounded ($\avass'$ is the new VASS obtained as a variant of $\avass$) iff
 $\avass$ is globally reversal-bounded (forthcoming upper bounds will apply to this
problem too). 
\fi 

Reversal-boundedness for counter automata, and {\em a fortiori} for VASS, is very
appealing because reachability sets
are semilinear as recalled below.

\begin{theorem} \label{theorem-ibarra} \cite{Ibarra78} Let $\pair{\avass}{\pair{\alocation}{\vect{x}}}$
be an $r$-reversal-bounded VASS. For each control state $\alocation'$,
the set $\set{\vect{y} \in \Nat^n: \ \exists \ {\rm run} \ 
\pair{\alocation}{\vect{x}} \step{*}  \pair{\alocation'}{\vect{y}}}$
is effectively semilinear. 
\end{theorem}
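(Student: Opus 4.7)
The plan is to reduce to the case where all counters are monotone within a given ``phase'' and then invoke Parikh's theorem for the semilinearity of the resulting phase-reachability relations.

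First, I would absorb the mode information into the control state. Building on the construction of $\avass_{rb}$ already given in the excerpt, I extend the state space by a mode vector $\amode \in \set{\incr,\decr}^n$ and a tuple of reversal counts $\vect{\kappa} \in \interval{0}{r}^n$; since $\pair{\avass}{\pair{\alocation}{\vect{x}}}$ is $r$-reversal-bounded, every reachable extended configuration satisfies $\vect{\kappa}(i) \leq r$, so pruning transitions that would push some $\vect{\kappa}(i)$ above $r$ yields an equivalent VASS $\widehat{\avass}$ of dimension $n$ whose control state set $\widehat{\locations} = \locations \times \set{\incr,\decr}^n \times \interval{0}{r}^n$ is finite. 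Reachability in $\avass$ of a configuration $\pair{\alocation'}{\vect{y}}$ from $\pair{\alocation}{\vect{x}}$ is then equivalent to reachability in $\widehat{\avass}$ of some configuration whose $\locations$-component equals $\alocation'$.

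Next, I would slice any run of $\widehat{\avass}$ at each transition where some coordinate of $\amode$ flips, yielding a decomposition into at most $n(r+1)$ \defstyle{phases}; within one phase the mode vector is constant, so counter $i$ is monotone (nondecreasing if $\amode(i)=\incr$, nonincreasing if $\amode(i)=\decr$). The critical gain from monotonicity is that the intermediate nonnegativity constraints collapse to boundary constraints: for $\amode(i)=\incr$ the minimum along the phase equals its initial value, and for $\amode(i)=\decr$ it equals its final value. Consequently, inside a fixed phase with source $\widehat{\alocation}_0$ and destination $\widehat{\alocation}_1$, reachability from $\vect{x}$ to $\vect{y}$ is captured by: there exists $\vect{\lambda} \in \Nat^{|\delta|}$, the Parikh image of some path of the underlying finite transition graph connecting $\widehat{\alocation}_0$ to $\widehat{\alocation}_1$, such that $\vect{y} - \vect{x}$ equals the signed sum of transition effects weighted by $\vect{\lambda}$, supplemented by the boundary nonnegativity constraints on $\vect{x}$ and $\vect{y}$. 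Parikh's theorem guarantees that the set of such $\vect{\lambda}$ is an effectively computable semilinear set, hence the phase-reachability relation itself is effectively semilinear.

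Finally, to assemble full runs I would enumerate the finitely many mode sequences $\amode_0,\ldots,\amode_k$ with $k \leq n(r+1)$ that are realizable in $\widehat{\avass}$, and for each such sequence compose the semilinear relations of the successive phases by existentially quantifying the intermediate configurations. Since semilinear sets are effectively closed under intersection with linear constraints, projection and finite union, the resulting reachability set to each $\alocation'$ is effectively semilinear. The main obstacle is executing the monotonicity reduction cleanly: it hinges on the equivalence between checking nonnegativity throughout a phase and checking it only at its two endpoints, which relies essentially on the fact that every mode flip is recorded in the control state of $\widehat{\avass}$ and therefore cannot be silently hidden inside a single phase.
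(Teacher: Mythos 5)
Your argument is correct, but note that the paper does not actually prove this statement: Theorem~\ref{theorem-ibarra} is imported verbatim from Ibarra's 1978 paper with a citation and no proof, so there is nothing in the source to compare against line by line. What you have reconstructed is essentially the classical proof of Ibarra's theorem, specialized to VASS: push the mode vector and the (bounded, hence finitely encodable) reversal counts into the control state, cut runs into at most $nr+1$ phases of constant mode, observe that monotonicity of each counter within a phase reduces the intermediate nonnegativity constraints to the two endpoint constraints, capture each phase-reachability relation via the (effectively semilinear) Parikh image of a regular path language together with the linear displacement equation $\vect{y}=\vect{x}+A\vect{\lambda}$, and compose finitely many such Presburger-definable relations by existential quantification. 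All the steps you flag as delicate do go through: the endpoint-versus-intermediate equivalence is exactly right (INC counters are bounded below by their initial value, DEC counters by their final value), the pruning of transitions exceeding $r$ reversals is sound precisely because $r$-reversal-boundedness is assumed, and effectiveness holds because $r$ is part of the hypothesis. The only bookkeeping point worth making explicit is where the mode-flipping transition is assigned (it must open the new phase, which is consistent since a flip to $\decr$ is itself nonincreasing on that counter), but this is a convention, not a gap.
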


This means that one can compute effectively a Presburger formula that characterizes
precisely the reachable configurations whose control state is $\alocation'$.
So, detecting reversal-boundedness for VASS, which can be easily reformulated as an unboundedness problem,
is worth the effort since semilinearity follows and then decision procedures for Presburger arithmetic can be used. 
\begin{lemma} \label{lemma-reversalbounded-unboundedness}
$\pair{\avass}{\pair{\alocation}{\vect{x}}}$  
is reversal-bounded with respect to $i$  iff
$\pair{\avass_{rb}}{\pair{\alocation_{rb}}{\vect{x}_{rb}}}$ 
 is not  $(n+i)$-unbounded. 
\end{lemma}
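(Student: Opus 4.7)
The plan is to show that the lemma follows directly by unpacking the two definitions and using the elementary fact that a subset of $\Nat$ is finite iff it is bounded above. Let me abbreviate
\[
S_i \egdef \set{\vect{y}(n+i) : \exists \ {\rm run} \ \pair{\alocation_{rb}}{\vect{x}_{rb}} \step{*} \pair{\alocation'}{\vect{y}} \ {\rm in} \ \avass_{rb}} \subseteq \Nat.
\]
By the definition of reversal-boundedness with respect to $i$ given just before the lemma, $\pair{\avass}{\pair{\alocation}{\vect{x}}}$ is reversal-bounded with respect to $i$ iff $S_i$ is finite. By the definition of $j$-unboundedness given in Section 2.1, $\pair{\avass_{rb}}{\pair{\alocation_{rb}}{\vect{x}_{rb}}}$ is $(n+i)$-unbounded iff for every $B \geq 0$ there is a reachable configuration $\pair{\alocation'}{\vect{y}}$ with $\vect{y}(n+i) \geq B$, that is, iff $S_i$ is not bounded above.

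Since $S_i \subseteq \Nat$, we have $S_i$ finite iff $S_i$ bounded above. Chaining the equivalences,
\[
\text{reversal-bounded w.r.t. } i \ \equivaut \ S_i \text{ finite} \ \equivaut \ S_i \text{ bounded above} \ \equivaut \ \text{not } (n+i)\text{-unbounded},
\]
which is exactly the statement of the lemma.

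There is no real obstacle here: the construction of $\avass_{rb}$ was designed precisely so that the $(n+i)$-th component records the cumulative number of reversals of the $i$-th counter along a run (the five cases in the definition of $\delta'$ add $1$ to component $n+i$ exactly when a mode switch occurs for counter $i$, and add $0$ otherwise). Thus the $(n+i)$-th component is monotone non-decreasing along any run, and the supremum over all reachable configurations is nothing but the worst-case reversal count of counter $i$. The only thing one might wish to spell out in the writeup is the trivial equivalence between finiteness and bounded\-ness of a subset of $\Nat$; everything else is a direct translation of the two definitions.
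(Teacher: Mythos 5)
Your proof is correct and takes the same (purely definitional) route the paper relies on: both sides of the equivalence refer to the same set $S_i$ of values of the $(n+i)$-th component reachable in $\avass_{rb}$ from $\pair{\alocation_{rb}}{\vect{x}_{rb}}$, and for a subset of $\Nat$ finiteness coincides with boundedness above. The paper treats the lemma as an immediate consequence of the two definitions, which is exactly what you have spelled out.
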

An interesting extension of 
reversal-boundedness is introduced in~\cite{Finkel&Sangnier08,Sangnier08} 
for which we only count the number of reversals when they occur for a counter value above a given bound $B$.
For instance, finiteness of the reachability set implies reversal-boundedness in 
the sense of~\cite{Finkel&Sangnier08,Sangnier08}, which we shall call \defstyle{weak reversal-boundedness}.
Let $\avass = \triple{\locations}{n}{\delta}$ be a VASS and a bound 
$B \in \Nat$.
Instead of defining a counter automaton $\avass_{rb}$ as done to characterize (standard) reversal-boundedness, we define
directly an infinite directed graph that corresponds to a variant of the transition system of $\avass_{rb}$: still, there
are $n$ new counters that record the number of reversals but only if they occur above a bound $B$. That is why, 
the  infinite directed graph
$TS_B = \pair{\locations \times  \set{\decr,\incr}^n \times \Nat^{2n}}{\step{}_B}$
is defined as follows:
$\pair{\alocation,\amode}{\vect{x}}  \step{}_B \pair{\alocation',\amode'}{\vect{x}'}$ $\equivdef$ there 
is a transition
$\alocation \step{\vect{b}} \alocation' \in \delta$ such that $\vect{x}'(\interval{1}{n}) = \vect{x}(\interval{1}{n}) + \vect{b}$,
 and for every $i \in \interval{1}{n}$, one of the conditions below is satisfied:
\begin{itemize}
\itemsep 0 cm
\item $\vect{b}(i) < 0$, $\amode(i) = \amode'(i) = \decr$ and $\vect{b'}(n+i) = 0$,
\item $\vect{b}(i) < 0$, $\amode(i) = \incr$, $\amode'(i) = \decr$, $\vect{x}(i) \leq B$ and $\vect{b'}(n+i) = 0$,
\item $\vect{b}(i) < 0$, $\amode(i) = \incr$, $\amode'(i) = \decr$, $\vect{x}(i) > B$ and $\vect{b'}(n+i) = 1$,
\item $\vect{b}(i) > 0$, $\amode(i) = \amode'(i) = \incr$ and $\vect{b'}(n+i) = 0$,
\item $\vect{b}(i) > 0$, $\amode(i) = \decr$, $\amode'(i) = \incr$, $\vect{x}(i) > B$ and $\vect{b'}(n+i) = 1$,
\item $\vect{b}(i) > 0$, $\amode(i) = \decr$, $\amode'(i) = \incr$, $\vect{x}(i) \leq B$ and $\vect{b'}(n+i) = 0$,
\item $\vect{b}(i) = 0$, $\amode(i) = \amode'(i)$ and $\vect{b'}(n+i) = 0$. 
\end{itemize} 
Initialized VASS $\pair{\avass}{\pair{\alocation}{\vect{x}}}$ is 
 \defstyle{weakly reversal-bounded}~\cite{Finkel&Sangnier08}
$\equivdef$ there is some $B \geq 0$ such that for 
$i \in \interval{n+1}{2n}$, $\set{\vect{y}(i): \  \pair{\alocation_{rb}}{\vect{x}_{rb}} \step{*}_B  
\pair{\alocation'}{\vect{y}} \ {\rm in } \ TS_B}$ 
is finite. 
When $r \geq \maximum{\set{\vect{y}(i): \pair{\alocation_{rb}}{\vect{x}_{rb}} \step{*}_B  
\pair{\alocation'}{\vect{y}} \ {\rm in } \ TS_B}: i 
\in \interval{n+1}{2n}}$  $\pair{\avass}{\pair{\alocation}{\vect{x}}}$ is said to 
be \defstyle{$r$-reversal-$B$-bounded}.
Observe that whenever  $\pair{\avass}{\pair{\alocation}{\vect{x}}}$ is  $r$-reversal-bounded, 
 $\pair{\avass}{\pair{\alocation}{\vect{x}}}$ is $r$-reversal-$0$-bounded. 
\iflong
Reversal-boundedness for counter automata, and {\em a fortiori} for VASS, is very
appealing because reachability sets
are semilinear as stated below. 
\begin{theorem}\cite{Ibarra78,Finkel&Sangnier08}
Let  $\pair{\avass}{\pair{\alocation}{\vect{x}}}$ be an initialized VASS that is 
(weakly) $r$-reversal-$B$-bounded for some $r, B \geq 0$. For each control state $\alocation'$,
the set $\set{\vect{y} \in \Nat^n: \ {\rm run} \ 
\pair{\alocation}{\vect{x}} \step{*}  \pair{\alocation'}{\vect{y}}}$
is effectively semilinear.
\end{theorem}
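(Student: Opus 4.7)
The strategy is to reduce weak reversal-$B$-boundedness to standard reversal-boundedness of an auxiliary VASS, and then invoke Theorem~\ref{theorem-ibarra} (the Ibarra78 result) on that auxiliary system. The guiding observation is that, for a weakly $r$-reversal-$B$-bounded $\pair{\avass}{\pair{\alocation}{\vect{x}}}$, the dynamics of each counter $i$ splits naturally into two regimes: the \emph{low regime} where $\vect{x}(i) \leq B$ and the \emph{high regime} where $\vect{x}(i) > B$. In the high regime, reversals for counter $i$ are bounded by $r$ by hypothesis; in the low regime, the value lies in the finite set $\interval{0}{B}$ and can therefore be stored in the finite control. The whole low-regime behaviour is thus absorbed into an enlarged finite state space, leaving only genuinely unbounded (but $r$-reversal-bounded) behaviour on the VASS counters.

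Concretely, I would build $\avass^{*} = \triple{\locations^{*}}{n}{\delta^{*}}$ with $\locations^{*} = \locations \times \set{\decr,\incr}^{n} \times \set{\mathit{lo},\mathit{hi}}^{n} \times \interval{0}{B}^{n}$. A state $\pair{\alocation, \amode, \vect{r}, \vect{\ell}}{\vect{c}}$ records the original control state $\alocation$, the current monotonicity mode $\amode$, a flag $\vect{r}(i) \in \set{\mathit{lo},\mathit{hi}}$ indicating the regime, and the exact value $\vect{\ell}(i) \in \interval{0}{B}$ whenever $\vect{r}(i) = \mathit{lo}$. The VASS counter $\vect{c}(i)$ stores $\vect{x}(i)$ when $\vect{r}(i) = \mathit{hi}$ (or equivalently the excess above $B$, after a fixed offset) and is kept at $0$ when $\vect{r}(i) = \mathit{lo}$. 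Each original transition is simulated in $\avass^{*}$ by a family of transitions that either (i) move $\vect{\ell}(i)$ inside $\interval{0}{B}$ when counter $i$ stays low, (ii) transfer the value $B$ from $\vect{\ell}(i)$ into $\vect{c}(i)$ when counter $i$ crosses from low to high, (iii) transfer the value back into $\vect{\ell}(i)$ when it returns below $B$, or (iv) modify $\vect{c}(i)$ when counter $i$ stays high. Reversal counters are incremented on mode switches exactly as in the construction of $\avass_{rb}$, but \emph{only while in the high regime}, matching the definition of weak reversal-$B$-boundedness.

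By construction, the reachable configurations of $\avass^{*}$ from the initial configuration corresponding to $\pair{\alocation}{\vect{x}}$ are in one-to-one correspondence with those of $\avass$, and the simulated system satisfies standard $r$-reversal-boundedness because above $B$ the number of reversals is bounded by $r$ by hypothesis, while below $B$ no reversals are recorded. Hence Theorem~\ref{theorem-ibarra} applies to $\avass^{*}$ and yields, for each target control state in $\locations^{*}$, an effectively semilinear description of the reachable configurations. The set of reachable $\vect{y} \in \Nat^{n}$ in $\avass$ with control state $\alocation'$ is the image, under the Presburger-definable map $\vect{y}(i) = \vect{\ell}(i)$ if $\vect{r}(i) = \mathit{lo}$ and $\vect{y}(i) = \vect{c}(i)$ otherwise, of the union over all $\pair{\alocation', \amode, \vect{r}, \vect{\ell}}{\vect{c}}$ of the corresponding reachability sets; semilinearity is preserved under finite union and Presburger-definable transformations, so the claim follows.

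The main technical obstacle is the bookkeeping at regime switches: care must be taken that when counter $i$ crosses the threshold $B$, the total effect of the simulated transition on $\vect{x}(i)$ agrees with $\vect{b}(i)$ even though part of the value is carried by $\vect{\ell}(i)$ and part by $\vect{c}(i)$. Since $\absmaximum{\avass}$ is fixed, a counter can cross $B$ only by a bounded amount in a single step, so the transfer between $\vect{\ell}(i)$ and $\vect{c}(i)$ can be handled by finitely many transitions that each update both components in a consistent way; this is where the enlargement of $\locations^{*}$ to include $\interval{0}{B}^{n}$ is essential. Once this book\-keeping is set up correctly, the reduction is exact and the semilinearity transfer is immediate.
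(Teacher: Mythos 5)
Your proof is correct and takes essentially the same route as the paper, which offers no detailed argument of its own but cites \cite{Ibarra78,Finkel&Sangnier08} and sketches exactly your reduction --- ``whenever a counter value is below $B$, this information is encoded in the control state which provides a reduction to (standard) reversal-boundedness'' --- after which Theorem~\ref{theorem-ibarra} yields effective semilinearity. One minor nitpick: the auxiliary system need not be exactly $r$-reversal-bounded (each threshold crossing of a counter can contribute a bounded number of extra reversals of the simulating counter), but since only reversal-boundedness for \emph{some} finite bound is needed to invoke Theorem~\ref{theorem-ibarra}, this does not affect the argument.
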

This means that one can compute effectively a Presburger formula that characterizes
precisely the reachable configurations whose control state is $\alocation'$. The original proof for
reversal-boundedness can be found in~\cite{Ibarra78} and its extension for
 weak reversal-boundedness is presented in~\cite{Finkel&Sangnier08}; whenever a counter value is below $B$, this 
information is encoded in the control state which provides a reduction to (standard) reversal-boundedness. 
\else
As shown in~\cite{Finkel&Sangnier08},  $r$-reversal-$B$-boundedness for some known $r$ and $B$ also leads
to effective semilinearity of reachability sets and therefore detecting weak reversal-boundedness is also
worth the effort. 
\fi
\iflong
\noindent 
\pb{Reversal-boundedness detection problem}
\begin{description}
\itemsep 0 cm
\item[Input:] Initialized VASS $\pair{\avass}{\pair{\alocation}{\vect{x}}}$  of dimension $n$
and $i \in \interval{1}{n}$.
\item[Question:] Is $\pair{\avass}{\pair{\alocation}{\vect{x}}}$  reversal-bounded with respect to the component
$i$?
\end{description}
\else
The \defstyle{reversal-boundedness detection problem} is defined as follows: given an 
initialized VASS $\pair{\avass}{\pair{\alocation}{\vect{x}}}$  of dimension $n$
and $i \in \interval{1}{n}$, is $\pair{\avass}{\pair{\alocation}{\vect{x}}}$  
reversal-bounded with respect to the component $i$? We also consider  the variant with
weak reversal-boundedness.
\fi

Let us conclude this section by Lemma~\ref{lemma-rb-hp} below. 
The proof is essentially based on~\cite[Lemma 2.1]{Hopcroft&Pansiot79} and on the definition
of the initialized VASS $\pair{\avass_{rb}}{\pair{\alocation_{rb}}{\vect{x}_{rb}}}$.
The key properties are that the dimension increases only linearly and the scale ``only'' exponentially
in the dimension.

\begin{lemma} \label{lemma-rb-hp} 
Given \iflong a VASS \fi $\avass = \triple{\locations}{n}{\delta}$ and a configuration 
$\pair{\alocation}{\vect{x}}$, one can effectively build in po\-ly\-no\-mial space an initialized VAS
$\pair{\avas}{\vect{x}'}$ of dimension $2n+3$ such that  $\pair{\avass}{\pair{\alocation}{\vect{x}}}$ is reversal-bounded
with respect to $i$ iff $\pair{\avas}{\vect{x}'}$ is  not $(n+i)$-unbounded. Moreover,
$\absmaximum{\avas} = \maximum{(\card{\locations} \times 2^n + 1)^2, \absmaximum{\avass}}$.
\end{lemma}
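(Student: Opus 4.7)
The plan is to combine two constructions: first pass from $\avass$ to its reversal-counting companion $\avass_{rb}$ of dimension $2n$ (already defined in the excerpt), then apply Hopcroft--Pansiot's encoding of VASS into VAS to eliminate the control states at the price of three additional counters. The final correctness comes from Lemma~\ref{lemma-reversalbounded-unboundedness}, together with the observation that neither step disturbs the behaviour of the $(n+i)$\nobreakdash-th component on reachable configurations.

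More concretely, I would first describe $\avass_{rb}=\triple{\locations'}{2n}{\delta'}$ as in the paragraph preceding Lemma~\ref{lemma-reversalbounded-unboundedness}, with $\locations'=\locations\times\{\decr,\incr\}^n$. Although $\card{\locations'}=\card{\locations}\times 2^n$ is exponential in $n$, each control state has a representation of size $O(\log\card{\locations}+n)$ and, given any transition $\alocation\step{\vect b}\alocation'$ of $\avass$ together with a mode vector $\amode$, the at most one compatible transition of $\avass_{rb}$ is computable from a case inspection of the signs of the entries of $\vect b$. Hence the transitions of $\avass_{rb}$, and more generally its description, can be produced on the fly using only polynomial space in $\length{\avass}$. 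The initial configuration becomes $\pair{\alocation_{rb}}{\vect x_{rb}}$ with $\alocation_{rb}=\pair{\alocation}{\vect\incr}$, $\vect x_{rb}(\interval{1}{n})=\vect x$ and $\vect x_{rb}(\interval{n+1}{2n})=\vect 0$, and Lemma~\ref{lemma-reversalbounded-unboundedness} gives $\pair{\avass}{\pair{\alocation}{\vect x}}$ reversal\nobreakdash-bounded with respect to $i$ iff $\pair{\avass_{rb}}{\pair{\alocation_{rb}}{\vect x_{rb}}}$ is not $(n+i)$\nobreakdash-unbounded.

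Next I would invoke Hopcroft and Pansiot's Lemma~2.1 from~\cite{Hopcroft&Pansiot79}, which simulates any VASS of dimension $d$ and $k$ control states by a VAS of dimension $d+3$ whose additional three counters encode the current control state through a pair/successor trick. Applied to $\avass_{rb}$ with $d=2n$ and $k=\card{\locations}\times 2^n$, it produces an initialized VAS $\pair{\avas}{\vect x'}$ of dimension $2n+3$ in which the first $2n$ coordinates faithfully track the counter values of $\avass_{rb}$ along any simulated run, so $(n+i)$\nobreakdash-unboundedness is preserved and reflected. The encoding uses integer weights of absolute value at most $(k+1)^2$ for the three control\nobreakdash-state counters, while the first $2n$ components still receive updates of magnitude at most $\absmaximum{\avass}$; consequently $\absmaximum{\avas}=\max\bigl((\card{\locations}\times 2^n+1)^2,\absmaximum{\avass}\bigr)$. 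The HP construction is again polynomial in its input once $\avass_{rb}$ is available, and because the transitions of $\avass_{rb}$ can be streamed, the whole composition runs in polynomial space.

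The main obstacle is really a bookkeeping one: the intermediate object $\avass_{rb}$ is exponentially large in $n$, so one must be careful to argue that the procedure never materialises it in full. Handling this requires iterating over the polynomially described data (transitions of $\avass$, indices of counters, mode vectors written as $n$\nobreakdash-bit strings, control states of $\avass_{rb}$ identified by a pair of $\locations$ and a bit string), emitting each transition of $\avas$ as soon as it is generated, and carefully verifying that the $(k+1)^2$ bound on the control\nobreakdash-state encoding in the Hopcroft--Pansiot reduction indeed gives the announced $\absmaximum{\avas}$, which then yields $\length{\avas}$ polynomial in $\length{\avass}$ in the bit\nobreakdash-length sense even though the output size is exponential.
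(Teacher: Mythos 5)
Your proposal is correct and follows essentially the same route as the paper: compose the reversal-counting construction $\avass_{rb}$ (dimension $2n$, control states $\locations\times\set{\decr,\incr}^n$) with the Hopcroft--Pansiot reduction from \cite[Lemma 2.1]{Hopcroft&Pansiot79}, which adds three components and yields the scale bound $(\card{\locations}\times 2^n+1)^2$, with correctness via Lemma~\ref{lemma-reversalbounded-unboundedness}. Your explicit attention to streaming the exponentially many control states of $\avass_{rb}$ so as to stay within polynomial space is exactly the point the paper relies on when it stresses that the dimension grows only linearly and the scale only exponentially.
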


Note also that by using the simple reduction from VASS to VAS that increases the dimension by the number of
control states, we would increase exponentially the dimension, which would disallow us to obtain forthcoming
optimal complexity bounds. 
\iflong \input{proof-lemma-rb-hp} \fi 
In Lemma~\ref{lemma-weak-reversal-boundedness}, 
we shall explain how to reduce weak reversal-boundedness detection
to a generalization of $(n+i)$-unboundedness.
\section{Generalized Unboundedness Properties}
\label{section-other-properties}

In this section, we essentially introduce the  generalized unboundedness problem
and we show how several detection problems can be naturally reduced to it. 

\subsection{Witness runs for simultaneous unboundedness}

We know that  $\pair{\avass}{\pair{\alocation_0}{\vect{x}_0}}$ is $i$-unbounded iff 
the coverability graph $CG(\avass,\pair{\alocation_0}{\vect{x_0}})$ 
(see e.g.,~\cite{Karp&Miller69,Valk&VidalNaquet81})
contains an extended configuration with  $\infty$ on the $i$th component.
This is a simple characterization whose main disadvantage is to induce 
 a nonprimitive recursive decision procedure in the worst-case. 
By contrast, un\-boun\-ded\-ness of   $\pair{\avass}{\pair{\alocation_0}{\vect{x}_0}}$
(i.e. $i$-un\-boun\-ded\-ness for some $i \in \interval{1}{n}$)
is equivalent to the existence of witness run of the form 
$\pair{\alocation_0}{\vect{x}_0} \step{*} \pair{\alocation_1}{\vect{x}_1} \step{+} 
\pair{\alocation_2}{\vect{x}_2}$ 
such that $\vect{x}_1 \prec \vect{x}_2$ and $\alocation_1 = \alocation_2$. 
In~\cite{Rackoff78}, it is shown that if there is such a run, there
is one of length at most doubly exponential. Given a component $i \in \interval{1}{n}$, a natural adaptation
to 
$i$-unboundedness is to check the existence of
a run of the form 
$\pair{\alocation_0}{\vect{x}_0} \step{*} \pair{\alocation_1}{\vect{x}_1} \step{\apath} 
\pair{\alocation_2}{\vect{x}_2}$ 
such that $\vect{x}_1 \prec \vect{x}_2$, $\alocation_1 = \alocation_2$ 
and $\vect{x}_1(i) < \vect{x}_2(i)$. By inspecting the 
proof in~\cite{Rackoff78}, one can  show that if there is such a run, then there is one of length at most
doubly exponential. However, although existence of such a run is a sufficient condition for
$i$-unboundedness (simply iterate $\apath$ infinitely), this is  not necessary
as shown on the VASS below:
$$
\begin{picture}(100,7)(-30,0)
\setlength{\unitlength}{0.3mm}
 
  \node(A)(10,2){$A$}
  \node(B)(90,2){$B$}
  
  \drawloop[loopangle=180](A){${\tiny \left(
  \begin{array}{c}
   1 \\ 0
   \end{array}
  \right)}$}  
  \drawedge(A,B){${\tiny \left(
  \begin{array}{c}
   0 \\ 0
   \end{array}
  \right)}$}
  \drawloop[loopangle=0](B){${\tiny \left(
  \begin{array}{c}
   -1 \\ 1
   \end{array}
  \right)}$}
 
\end{picture}
$$
The second component is unbounded from $\pair{A}{\vect{0}}$ 
but no run
 $\pair{A}{\vect{0}} \step{*} \pair{\alocation}{\vect{x_1}} \step{\apath} \pair{\alocation}{\vect{x_2}}$ 
with $\vect{x_1} \prec \vect{x_2}$, $\vect{x_1}(2) < \vect{x_2}(2)$ and $\alocation \in \set{A,B}$ exists.
Indeed, in order to increment the second component, the first component needs first to be incremented.
\iflong 
A second attempt consists in looking for the existence of a witness run of the
form $\pair{\alocation_0}{\vect{x}_0} \step{*} \pair{\alocation_1}{\vect{x}_1} 
\step{*} \pair{\alocation_2}{\vect{x}_2} \step{*} \pair{\alocation_3}{\vect{x}_3} \step{*} 
\pair{\alocation_4}{\vect{x}_4}$ 
where $\vect{x_1} \preceq \vect{x_2}$,  $\vect{x_3}(i) < \vect{x_4}(i)$, $\alocation_1 = \alocation_2$,
 $\alocation_3 = \alocation_4$,  and whenever
 $\vect{x_4}(j) < \vect{x_3}(j)$, we have  $\vect{x_1}(j) <  \vect{x_2}(j)$. The above-mentioned
VASS clearly admits such a run. However, the VASS below is $i$-unbounded 
but does not admit 
such a run.
$$
\begin{picture}(180,17)(-20,0)
\setlength{\unitlength}{0.3mm}
 
  \node(A)(10,2){$A$}
  \node(B)(90,2){$B$}
  \node(C)(170,2){$C$}
  
  \drawloop(A){${\tiny \left(
  \begin{array}{c}
   1 \\ 0 \\ 0 
   \end{array}
  \right)}$}  
   \drawedge(A,B){${\tiny \left(
  \begin{array}{c}
   0 \\ 0 \\ 0 
   \end{array}
  \right)}$}
   \drawloop(B){${\tiny \left(
  \begin{array}{c}
   -1 \\ 1 \\ 0 
   \end{array}
  \right)}$}  
  \drawedge(B,C){${\tiny \left(
  \begin{array}{c}
   0 \\ 0 \\ 0 
   \end{array}
  \right)}$}
  \drawloop(C){${\tiny \left(
  \begin{array}{c}
   -1 \\ -1 \\ 1 
   \end{array}
  \right)}$}
\end{picture}
$$
\fi 
The ultimate condition for simultaneous unboundedness needs to specify
 the different ways to introduce
the value $\infty$ along a given branch of the Karp and Miller trees. 
This is done thanks to the condition 
 ${\rm PB}_{\sigma}$ defined below and generalized in Section~\ref{section-helpful-generalization}.
A \defstyle{disjointness sequence} is a nonempty sequence 
$\sigma = \aset_1 \cdot \cdots \cdot \aset_K$ of nonempty subsets of 
$\interval{1}{n}$ such that for $i \neq i'$, $\aset_i \cap \aset_{i'} = \emptyset$ (consequently $K \leq n$). 
A run of the form $$\pair{\alocation_0}{\vect{x}_0} \step{\apath_0'} \pair{\alocation_1}{\vect{x}_1} 
\step{\apath_1} \pair{\alocation_2}{\vect{x}_2}
             \step{\apath_1'} 
              \cdots 
             \step{\apath_{K-1}'} \pair{\alocation_{2K-1}}{\vect{x}_{2K-1}} \step{\apath_K} 
             \pair{\alocation_{2K}}{\vect{x}_{2K}}$$
satisfies the \defstyle{property  ${\rm PB}_{\sigma}$} (Place Boundedness with respect to a disjointness sequence $\sigma$)
iff the  conditions below hold true:
\begin{description}
\itemsep 0 cm
\item[(P0)]  For  $l \in \interval{1}{K}$, $\alocation_{2l-1} = \alocation_{2l}$. 
\item[(STRICT)] for  $l \in \interval{1}{K}$ and  $j \in \aset_l$,
   $\vect{x_{2l-1}}(j) < \vect{x_{2l}}(j)$.
\item[(NONSTRICT)] For  $l \in \interval{1}{K}$ and
             $j \in (\interval{1}{n} \setminus \aset_l)$, 
            $\vect{x_{2l}}(j) < \vect{x_{2l-1}}(j)$ implies 
            $j \in  \underset{l' \in \interval{1}{l-1}}{\bigcup} \aset_{l'}$.
\end{description}
Observe that when (STRICT) holds, the condition (NONSTRICT) is equivalent to: 
for all $l \in \interval{1}{K}$ and
            all $j \not \in   \underset{l' \in \interval{1}{l-1}}{\bigcup} \aset_{l'}$,
            we have $\vect{x}_{2l-1}(j) \leq \vect{x}_{2l}(j)$.
Consequently, for all $l \in \interval{1}{K}$ and all paths of the form $(\apath_l)^k$ for some $k \geq 1$,
the effect on the $j$th component may be negative only if
$j \in \underset{l' \in \interval{1}{l-1}}{\bigcup} \aset_{l'}$. 
It is now time to provide a witness run characterization for simultaneous $\aset$-unboundedness
that is a direct consequence of the properties of the coverability 
graphs~\cite{Valk&VidalNaquet81}. 
\begin{lemma} \label{lemma-simul}
 Let $\pair{\avass}{\pair{\alocation_0}{\vect{x}_0}}$ be an initialized VASS of dimension $n$
and $\aset \subseteq \interval{1}{n}$. Then, 
 $\pair{\avass}{\pair{\alocation_0}{\vect{x}_0}}$ 
is simultaneously $\aset$-unbounded  iff
there is a run $\arun$ starting at $\pair{\alocation_0}{\vect{x}_0}$ satisfying 
${\rm PB}_{\sigma}$
for some disjointness sequence $\sigma =  \aset_1 \cdot \cdots \cdot \aset_K$ such that
$\aset \subseteq (\aset_1 \cup \cdots \cup \aset_K)$
and $\aset \cap \aset_K \neq \emptyset$. 
\end{lemma}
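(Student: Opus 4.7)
I would prove the two directions separately, the $(\Leftarrow)$ direction by an iterated pumping argument and the $(\Rightarrow)$ direction by an extraction from the coverability graph.

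For the easy direction $(\Leftarrow)$, suppose a run of the prescribed shape satisfies ${\rm PB}_\sigma$ with $\sigma = \aset_1\cdots \aset_K$. Given an arbitrary target bound $B$, I would fire the loops $\apath_1,\ldots,\apath_K$ in order, choosing iteration counts $N_1 \gg N_2 \gg \cdots \gg N_K$ decreasing fast enough, i.e.\ I fire $\apath_0',\apath_1^{N_1},\apath_1',\apath_2^{N_2},\apath_2',\ldots,\apath_K^{N_K}$. By (STRICT), iterating $\apath_l$ strictly increases every component in $\aset_l$; by (NONSTRICT), iterating $\apath_l$ can decrease only components lying in $\aset_1\cup\cdots\cup\aset_{l-1}$, i.e.\ components that have already been pumped arbitrarily high at earlier stages. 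Choosing $N_l$ large enough compared to $N_{l+1},\ldots,N_K$ and to the absolute effect of later loops guarantees (i) fireability throughout (counters remain nonnegative) and (ii) that every component in $\aset_1\cup\cdots\cup\aset_K \supseteq \aset$, including those in $\aset_K$ pumped last, exceeds $B$. Hence simultaneous $\aset$-unboundedness.

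For the harder direction $(\Rightarrow)$, I would exploit the classical Karp--Miller / Valk--Vidal-Naquet coverability graph $CG(\avass,\pair{\alocation_0}{\vect{x}_0})$. Simultaneous $\aset$-unboundedness is equivalent to the existence of an extended configuration $\pair{\alocation}{\vect{y}}$ in $CG$ with $\vect{y}(j)=\infty$ for every $j\in\aset$. Fix a path in $CG$ from the root to such a node, and inspect the order in which $\omega$-entries appear along it. Each introduction of new $\omega$-components corresponds to a simple cycle in the underlying reachability graph whose effect is $\succeq \vect{0}$ on all components already covered and strictly positive on the freshly covered ones. Let $\aset_l$ be exactly the set of components turned into $\omega$ at the $l$th such stage (trimming so that each $\aset_l$ is nonempty and disjoint from the previous ones, automatic by construction). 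I would then stop at the first stage $K$ for which $\aset\subseteq\aset_1\cup\cdots\cup\aset_K$, which ensures $\aset\cap\aset_K\neq\emptyset$. The cycles at stage $l$ can be unfolded into loops $\apath_l$ based at the corresponding control state, and the path segments between them become $\apath_{l-1}'$; the resulting run meets (P0) and (STRICT) by construction of $CG$, and (NONSTRICT) is precisely the statement that the $l$th cycle's effect is nonnegative on every component not yet covered by $\aset_1\cup\cdots\cup\aset_{l-1}$, which is exactly the defining property of an accelerated $\omega$-cycle in the Karp--Miller construction.

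The main obstacle is the backward direction: I must choose the stages $\aset_l$ canonically so that the sequence is disjoint, that $\aset\cap\aset_K\neq\emptyset$, and simultaneously that the corresponding cycles in the underlying VASS exist as actual firable paths (not merely as transitions in $CG$ after $\omega$-acceleration). The right way to package this is to invoke the standard property of coverability graphs stating that for each node with $\omega$-components there is a finite run in $\avass$ whose final configuration dominates any prescribed bound on those components; combining this statement for consecutive prefixes of the branch yields the decomposed run. With this at hand, verifying (P0), (STRICT), (NONSTRICT) is routine.
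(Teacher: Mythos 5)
Your proposal is correct and follows essentially the same route as the paper: the forward direction extracts the disjointness sequence from the order in which $\omega$-entries appear along a branch of the coverability graph of~\cite{Karp&Miller69,Valk&VidalNaquet81} (truncating at the first stage covering $\aset$ to get $\aset\cap\aset_K\neq\emptyset$), and the backward direction realizes unboundedness by the hierarchical pumping $N_1\gg N_2\gg\cdots\gg N_K$, which is exactly the "paths repeated hierarchically" argument the paper uses here and in Lemma~\ref{lemma-pseudorun-run}. No gaps.
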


Consequently, $\pair{\avass}{\pair{\alocation_0}{\vect{x}_0}}$ is $i$-unbounded iff
there is a run $\arun$ starting at $\pair{\alocation_0}{\vect{x}_0}$ satisfying 
${\rm PB}_{\sigma}$
for some disjointness sequence $\sigma =  \aset_1 \cdot \cdots \cdot \aset_K$ with
$i \in \aset_K$. 
\iflong \input{proof-lemma-simul} \fi
This
can be expressed in the logical formalisms from~\cite{Yen92,Faouzi&Habermehl09}
but this requires a formula of exponential size in the dimension because an exponential number of 
disjointness sequences needs to be taken into account. 
By contrast, each disjunct has only polynomial-size in $n$. 
The
path formula 
\iflong
looks like that (in order to fit exactly the syntax from~\cite{Yen92,Faouzi&Habermehl09}
we would need a bit more work since existential quantification cannot occur in the scope of disjunction):
\else
looks like that:
\fi
$$
\bigvee_{\aset_1 \cdots \aset_K, i \in \aset_K}
\hspace{-0.2in} \exists \ \vect{x}_1, \ldots, \vect{x}_{2K}  \ 
\bigwedge_{l=1}^{K}
(\bigwedge_{j \in \aset_l} \vect{x}_{2l-1}(j) <  \vect{x}_{2l}(j)) \wedge
(\bigwedge_{j \not \in (\aset_1 \cup \cdots \cup \aset_{l-1})} 
\hspace{-0.3in} \vect{x}_{2l-1}(j) \leq  \vect{x}_{2l}(j))
$$
It is worth noting that 
the satisfaction of ${\rm PB}_{\sigma}$
does not imply  
$\vect{x}_1 \preceq \vect{x}_{2K}$. This prevents us 
from defining this condition with an increasing path formula~\cite{Faouzi&Habermehl09}
and therefore the \expspace \ upper bound established in~\cite{Faouzi&Habermehl09} 
does not apply directly to $i$-unboundedness.  
\subsection{A helpful  generalization}
\label{section-helpful-generalization}
We introduce below a slight generalization of the above properties ${\rm PB}_{\sigma}$ in order
to underline their essential features and to provide a uniform treatment. Moreover,
this will allow us to express new properties, for instance  for regularity detection.
The conditions (STRICT) and (NONSTRICT) specify inequality constraints between
component values. We  introduce intervals in place of such
constraints.
An \defstyle{interval} is an expression of one of the forms 
$]-\infty,+\infty[$,  $[a,+\infty[$, 
$]-\infty,b]$ or $[a,b]$  for some $a,b \in \Zed$ (with the obvious interpretation). 
\iflong
\begin{definition} A \defstyle{generalized unboundedness property} $\mathcal{P} = 
 \tuple{\ainterval_1}{\ainterval_K}$ is
a nonempty sequence  of $n$-tuples of intervals. 
\end{definition}
\else
A \defstyle{generalized unboundedness property} $\mathcal{P} = 
 \tuple{\ainterval_1}{\ainterval_K}$ is
a nonempty sequence  of $n$-tuples of intervals. 
\fi
The \defstyle{length} of $\mathcal{P}$ is $K$ and its \defstyle{scale} is equal to 
the maximum between 1 and 
the maximal absolute value of integers occurring in the interval expressions of 
$\mathcal{P}$ (if any).
A run of the form
 $\pair{\alocation_0}{\vect{x_0}} \step{\apath_0'} \pair{\alocation_1}{\vect{x}_1} \step{\apath_1} 
\pair{\alocation_2}{\vect{x}_2} 
             \step{\apath_1'} \pair{\alocation_3}{\vect{x}_3}  
             \cdots 
             \step{\apath_{K-1}'} \pair{\alocation_{2K-1}}{\vect{x}_{2K-1}} \step{\apath_K} 
             \pair{\alocation_{2K}}{\vect{x}_{2K}}$
satisfies the property $\mathcal{P}$ $\equivdef$ 
 (P0) and the conditions below hold true:
\begin{description}
\itemsep 0 cm
\item[(P1)]  For  $l \in \interval{1}{K}$ and $j \in \interval{1}{n}$,
             we have $\vect{x}_{2l}(j) - \vect{x}_{2l-1}(j) \in \ainterval_l(j)$.
\item[(P2)]  For  $l \in \interval{1}{K}$ and $j \in \interval{1}{n}$,
             if $\vect{x_{2l}}(j) - \vect{x_{2l-1}}(j) < 0$, then there is
             $l' < l$ \iflong such that \else s.t. \fi  $\vect{x}_{2l'}(j) - \vect{x}_{2l'-1}(j) > 0$. 
\end{description}
Given a run $\arun$, we say that it \defstyle{satisfies $\mathcal{P}$} if it admits a decomposition
satisfying the adequate conditions. 
By extension,  $\pair{\avass}{\pair{\alocation_0}{\vect{x}_0}}$ 
\defstyle{satisfies $\mathcal{P}$} $\equivdef$
there is a finite run starting at $\pair{\alocation_0}{\vect{x}_0}$ 
satisfying $\mathcal{P}$. 
It is easy to see that condition (P1) [resp. (P2)] is a quantitative counterpart
for condition (STRICT) [resp. (NONSTRICT)]. 
\iflong

Let us now introduce below our most general problem, specially tailored to 
capture selective unboundedness.\\
\noindent 
\pb{Generalized Unboundedness Problem}
\begin{description}
\itemsep 0 cm
\item[Input:] Initialized VASS $\pair{\avass}{\pair{\alocation_0}{\vect{x}_0}}$ and 
generalized unboundedness property 
$\mathcal{P}$.
\item[Question:] Does $\pair{\avass}{\pair{\alocation_0}{\vect{x}_0}}$ satisfy $\mathcal{P}$?
\end{description}
\else
The \defstyle{generalized unboundedness problem} is defined as follows: 
given an initialized VASS $\pair{\avass}{\pair{\alocation_0}{\vect{x}_0}}$ and 
a generalized unboundedness property 
$\mathcal{P}$, does $\pair{\avass}{\pair{\alocation_0}{\vect{x}_0}}$ satisfy $\mathcal{P}$?
\fi 
Let us first forget about control states: 
 we can safely restrict ourselves to VAS without any loss of generality, as
it is already the case for many properties. 
\begin{lemma} \label{lemma-reduction}
 There is a logspace many-one reduction from
the generalized unboundedness problem for VASS to the generalized unboundedness problem for VAS.
Moreover, an instance \iflong of the form \fi  $\pair{\pair{\avass}{\pair{\alocation}{\vect{x}}}}{\mathcal{P}}$
is reduced to an instance \iflong of the form \fi $\pair{\pair{\avas}{\vect{x}'}}{\mathcal{P}'}$ such that
(1) if $\avass$ is of dimension $n$, then $\avas$ is of dimension $n+3$,
(2) $\mathcal{P}$ and $\mathcal{P}'$ have the same length and scale
and (3) $\absmaximum{\avas} = 
\maximum{(\card{\locations} + 1)^2, 
\absmaximum{\avass}}$ where $\locations$ is the set of control states of $\avass$.
\end{lemma}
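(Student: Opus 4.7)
The plan is to invoke the Hopcroft--Pansiot reduction already used in Lemma~\ref{lemma-rb-hp} (based on~\cite[Lemma~2.1]{Hopcroft&Pansiot79}) in order to simulate $\avass = \triple{\locations}{n}{\delta}$ by a VAS $\avas$ over dimension $n+3$, where the three auxiliary components encode the current control state through a distinctive fingerprint $(f_1(q),f_2(q),f_3(q))$ for each $q \in \locations$, with coordinate values bounded by $(\card{\locations}+1)^2$. This immediately yields the required dimension bound, and by construction $\absmaximum{\avas} = \maximum{(\card{\locations}+1)^2,\absmaximum{\avass}}$. The initial configuration $\vect{x}'$ is obtained by appending to $\vect{x}$ the fingerprint of $\alocation$.

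Next, I lift the property $\mathcal{P} = \tuple{\ainterval_1}{\ainterval_K}$ to a property $\mathcal{P}' = \tuple{\ainterval'_1}{\ainterval'_K}$ of the same length by setting $\ainterval'_l(j) \egdef \ainterval_l(j)$ for $j \in \interval{1}{n}$ and $\ainterval'_l(n+j) \egdef [0,0]$ for $j \in \set{1,2,3}$ and every $l \in \interval{1}{K}$. Since the only new integers introduced in $\mathcal{P}'$ are zeros, the scale is preserved. The intuition is that requiring the three extra components to have equal values at the two endpoints of each segment $l$ amounts to requiring identical fingerprints, which---at valid encoding points---translates back to the condition (P0) on the original control states.

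For the forward direction, starting from a VASS run witnessing $\mathcal{P}$, I simulate it step by step in $\avas$; each VASS transition is mimicked by a short macro-sequence that ends at a valid fingerprint encoding. Choosing the VAS pseudo-configurations $\vect{x}'_1,\ldots,\vect{x}'_{2K}$ precisely at the encoding points corresponding to the configurations $\pair{\alocation_1}{\vect{x}_1},\ldots,\pair{\alocation_{2K}}{\vect{x}_{2K}}$ of the VASS run, one verifies that the first $n$ coordinates inherit (P1) and (P2) directly, and that the extra three coordinates have zero net change on each segment, thanks to $\alocation_{2l-1} = \alocation_{2l}$. For the backward direction, from a VAS run witnessing $\mathcal{P}'$, the constraints $\ainterval'_l(n+j) = [0,0]$ guarantee that the triples of fingerprint-components coincide at positions $2l-1$ and $2l$; using the structural property of the H-P simulation that the fingerprint of $\avas$ uniquely determines the currently encoded control state (up to transient micro-steps, which can be collapsed to the nearest valid encoding without affecting the first $n$ coordinates or the bracketing into segments), we extract a VASS run satisfying (P0)--(P2).

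The main obstacle I expect is precisely this last step: arguing that when a VAS pseudo-configuration in the witness lies at a transient point of the macro-simulation, one can shift it to a neighbouring valid encoding point while preserving the interval constraints (P1) and the sign pattern used in (P2). This is handled by the fact that every transient segment has bounded length (independent of the input) and null net effect on the first $n$ components, so shifting the selected pseudo-configurations to the boundary of a macro-step changes neither the differences $\vect{x}_{2l}(j) - \vect{x}_{2l-1}(j)$ for $j \in \interval{1}{n}$, nor the vanishing differences on the auxiliary coordinates.
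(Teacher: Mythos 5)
Your proposal is correct and follows essentially the same route as the paper: the Hopcroft--Pansiot simulation of the control states by three extra components (which accounts for the dimension $n+3$ and the value $(\card{\locations}+1)^2$ in $\absmaximum{\avas}$), together with the natural lifting of $\mathcal{P}$ that puts $[0,0]$ on the auxiliary components so that (P0) is recovered from the injectivity of the state encoding, leaving length and scale unchanged. The one point to be careful about---realigning the marked pseudo-configurations of a witness VAS run to valid encoding points without disturbing the differences used in (P1) and (P2)---is exactly the technical step you identify, and your resolution via the bounded, counter-neutral transient micro-steps of the simulation is the right one.
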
 

The proof is essentially based on~\cite[Lemma 2.1]{Hopcroft&Pansiot79}.
\iflong \input{proof-lemma-reduction} \fi 
Generalized unboundedness properties can be expressed in even more general formalisms
for which decidability is known. However, in Section~\ref{section-expspace}, we shall establish 
\expspace-completeness. 

\begin{theorem} \cite{Faouzi&Habermehl09} The generalized unboundedness problem  is decidable.
\end{theorem}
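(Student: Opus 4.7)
The plan is to cast satisfaction of a generalized unboundedness property $\mathcal{P}$ as truth of a formula in Yen's path logic (or, more precisely, in the extended fragment handled in~\cite{Faouzi&Habermehl09}) and then invoke the decidability of that logic. By Lemma~\ref{lemma-reduction}, I may first assume that we work with a VAS $\pair{\avas}{\vect{x}_0}$ so that the control-state matching condition (P0) becomes trivial and only the arithmetical conditions (P1) and (P2) remain.

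Next, a run realising $\mathcal{P}$ decomposes into the scheme $\apath_0'\apath_1\apath_1'\cdots\apath_{K-1}'\apath_K$ with checkpoint configurations $\vect{x}_1,\ldots,\vect{x}_{2K}$. I would write a path formula that quantifies over these $2K$ checkpoints and asserts, in order: the firability of each intermediate segment (naturally available in Yen-style path logics through the concatenation of reachability segments); the interval condition (P1), which is a finite conjunction of Presburger inequalities over the differences $\vect{x}_{2l}(j)-\vect{x}_{2l-1}(j)$; and the implication (P2).

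The one genuinely delicate step is (P2): it is an implication whose consequent existentially quantifies over an earlier index $l'$, and, unlike (P1), it ties negative effects of a later segment to positive effects of an earlier one. I would massage it into the syntactic shape required by~\cite{Faouzi&Habermehl09} by a DNF expansion: for each pair $(l,j) \in \interval{1}{K} \times \interval{1}{n}$ I enumerate the finitely many cases of either asserting $\vect{x}_{2l}(j)-\vect{x}_{2l-1}(j) \geq 0$, or choosing a witness index $l' < l$ and conjoining $\vect{x}_{2l'}(j)-\vect{x}_{2l'-1}(j) > 0$. This yields a disjunction of at most $(K+1)^{Kn}$ disjuncts, each a straightforward Presburger condition on the $2K$ checkpoint vectors.

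The main obstacle is precisely that (P2) is not monotonic in the sense of increasing path formulae, so it escapes the restricted fragment used in~\cite{Faouzi&Habermehl09} for sharper complexity bounds. For mere decidability, however, what one needs is only expressibility in a path-logic for which the Karp and Miller coverability graph provides a decision procedure; the full logic of~\cite{Faouzi&Habermehl09} (itself an extension of Yen's) is such a logic and encompasses the DNF formula above. Applying the decidability theorem of that work to this formula then settles the statement; the detailed \expspace\ upper bound, which does require a careful Rackoff-style argument rather than a black-box appeal, is postponed to Section~\ref{section-expspace}.
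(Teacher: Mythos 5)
Your proposal follows essentially the same route as the paper: express the existence of a run satisfying $\mathcal{P}$ as a path formula in Yen's logic and invoke the decidability result of~\cite{Faouzi&Habermehl09} (which reduces satisfiability of such formulae to VASS reachability, decidable by~\cite{Mayr84,Kosaraju82}), while correctly observing that the non-increasing nature of (P2) blocks the \expspace\ bound of that paper and forces the separate Rackoff-style analysis of Section~\ref{section-expspace}. One small correction: the decision procedure behind~\cite[Theorem 3]{Faouzi&Habermehl09} rests on the reachability problem, not on Karp and Miller coverability graphs, but this does not affect the validity of your argument.
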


Given $\pair{\avass}{\pair{\alocation_0}{\vect{x_0}}}$, 
the existence of a run from $\pair{\alocation_0}{\vect{x_0}}$
satisfying $\mathcal{P}$ can be easily 
expressed in Yen's path logic~\cite{Yen92} and the generalized unboundedness problem is therefore 
decidable by~\cite[Theorem 3]{Faouzi&Habermehl09} and~\cite{Mayr84,Kosaraju82}. 
We cannot rely on~\cite[Theorem 3.8]{Yen92} for decidability since \cite[Lemma 3.7]{Yen92} contains a flaw,
as observed  in~\cite{Faouzi&Habermehl09}. \cite{Faouzi&Habermehl09} 
 precisely  establishes that satisfiability in Yen's path logic
is equivalent to the reachability problem for VASS.  
Moreover, it is worth noting that
the reduction from the reachability problem to 
satisfiability~\cite[Theorem 2]{Faouzi&Habermehl09} uses path formulae that cannot be expressed
as generalized unboundedness properties. 
Observe that the \expspace \ upper bound obtained 
for increasing path formulae in~\cite[Section 6]{Faouzi&Habermehl09} cannot be used
herein since obviously generalized unboundedness properties are not 
necessarily increasing.
That is why, we need
directly to extend Rackoff's proof for boundedness~\cite{Rackoff78}. 

\subsection{From regularity to reversal-boundedness detection}
\label{section-expressive-power}

In this section, we briefly explain how
simultaneous unboundedness problem, regularity detection, strong promptness detection
and weak reversal-boundness detection can be reduced to generalized unboundedness problem.
This will allow us to obtain \expspace \ upper bound for all these problems. 

\iflong
\begin{lemma} \label{lemma-pb-gup}
Every property ${\rm PB}_{\sigma}$ can be encoded as a generalized 
unboundedness property $\mathcal{P}_{\sigma}$ with length $K \leq n$ and 
$\absmaximum{\mathcal{P}_{\sigma}} = 1$.
\end{lemma}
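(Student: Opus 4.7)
My plan is to give an explicit construction of $\mathcal{P}_\sigma$ from $\sigma = \aset_1 \cdot \cdots \cdot \aset_K$ and verify the equivalence by matching conditions component-wise.

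Given the disjointness sequence $\sigma$, I would set $\mathcal{P}_\sigma = \tuple{\ainterval_1}{\ainterval_K}$ where, for each $l \in \interval{1}{K}$ and each $j \in \interval{1}{n}$, the interval $\ainterval_l(j)$ is chosen as follows. If $j \in \aset_l$, set $\ainterval_l(j) = [1, +\infty[\,$, to mirror (STRICT). If $j \notin \aset_l$ but $j \in \aset_{l'}$ for some $l' < l$, set $\ainterval_l(j) = \,]-\infty, +\infty[\,$, imposing no restriction. Otherwise (so $j$ does not belong to any $\aset_{l'}$ with $l' \leq l$), set $\ainterval_l(j) = [0, +\infty[\,$. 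The length is trivially $K \leq n$ and only the integers $0$ and $1$ occur explicitly, so the scale equals $1$.

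For the equivalence between ${\rm PB}_\sigma$ and $\mathcal{P}_\sigma$, I would argue both directions in parallel on a fixed run decomposition (note that condition (P0) is common to both). For the forward direction, assume (STRICT) and (NONSTRICT). When $j \in \aset_l$, (STRICT) gives $\vect{x}_{2l}(j) - \vect{x}_{2l-1}(j) \geq 1$, matching the interval $[1,+\infty[\,$. When $j \notin \aset_l$ and $j \notin \aset_1 \cup \cdots \cup \aset_{l-1}$, (NONSTRICT) forbids a strict decrease, yielding a nonnegative difference, matching $[0,+\infty[\,$. In the remaining case the interval is unconstrained, so (P1) holds. For (P2), a negative difference at $(l,j)$ can only come from $j \in \aset_{l'}$ with $l' < l$, and (STRICT) applied at $l'$ delivers the required strictly positive witness. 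Conversely, if $\mathcal{P}_\sigma$ holds, then (STRICT) follows from the $[1,+\infty[\,$ constraints when $j \in \aset_l$, and (NONSTRICT) follows since for any $j \notin \aset_l$ not contained in $\aset_1 \cup \cdots \cup \aset_{l-1}$ the interval forces $\vect{x}_{2l}(j) \geq \vect{x}_{2l-1}(j)$, ruling out a strict decrease.

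I do not anticipate a serious obstacle; the only subtle point is verifying (P2), which might look like an additional obligation but in fact is free because a negative difference at position $l$ is only permitted by the construction when $j$ has appeared in some earlier $\aset_{l'}$, and at that $l'$ the interval $[1,+\infty[\,$ forces a strictly positive difference. This observation is precisely why the pairwise disjointness of the $\aset_l$'s is crucial for the construction to be well-defined and for the scale to remain $1$.
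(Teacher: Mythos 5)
Your construction of $\mathcal{P}_{\sigma}$ is exactly the one in the paper ($[1,+\infty[$ on $\aset_l$, $[0,+\infty[$ on components not yet seen, $]-\infty,+\infty[$ on components from earlier blocks), and your case-by-case verification of the equivalence with (STRICT)/(NONSTRICT) — including the observation that (P2) comes for free from (P1) under this choice of intervals — is correct; the paper simply leaves that check to the reader. No issues.
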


\begin{proof} From a disjointness sequence $\sigma = \aset_1 \cdots \aset_K$, 
we define  $\mathcal{P}_{\sigma} = \tuple{\ainterval_1}{\ainterval_K}$ as follows.
For $l \in \interval{1}{K}$ and  $j \in \interval{1}{n}$, if
     $j \in \aset_l$ then $\ainterval_l(j) = [1,+\infty[$.
     Otherwise, if $j \in (\interval{1}{n} \setminus  (\bigcup_{1 \leq l' \leq l} \aset_{l'}))$,
     then  $\ainterval_l(j) = [0,+\infty[$, otherwise $\ainterval_l(j) = ]-\infty,+\infty[$.
It is easy to check that $\sigma$ and ${\rm PB}_{\sigma}$ define the same set of runs. 
\end{proof}
\else
\paragraph{Simultaneous unboundedness problem.} It is easy to show that 
every property ${\rm PB}_{\sigma}$ can be encoded as a generalized 
unboundedness property $\mathcal{P}_{\sigma}$ with length $K \leq n$ and 
$\absmaximum{\mathcal{P}_{\sigma}} = 1$.
Indeed, from a disjointness sequence $\sigma = \aset_1 \cdots \aset_K$, 
we define  $\mathcal{P}_{\sigma} = \tuple{\ainterval_1}{\ainterval_K}$ as follows.
For $l \in \interval{1}{K}$ and  $j \in \interval{1}{n}$, if
     $j \in \aset_l$ then $\ainterval_l(j) = [1,+\infty[$.
     Otherwise, if $j \in (\interval{1}{n} \setminus  (\bigcup_{1 \leq l' \leq l} \aset_{l'}))$,
     then  $\ainterval_l(j) = [0,+\infty[$, otherwise $\ainterval_l(j) = ]-\infty,+\infty[$.
It is then easy to check that $\sigma$ and ${\rm PB}_{\sigma}$ define the same set of runs.
\fi 

\paragraph{Regularity detection.}
Another  example of properties that can be  encoded by generalized unboundedness properties
comes from the witness run characterization for nonregularity, 
see e.g.~\cite{Valk&VidalNaquet81,Faouzi&Habermehl09}.
Nonregularity of an initialized VASS $\pair{\avass}{\pair{\alocation_0}{\vect{x_0}}}$  
is equivalent to the existence of a run of the form
$
\pair{\alocation_0}{\vect{x_0}} \step{\apath_0'}
\pair{\alocation_1}{\vect{x_1}} \step{\apath_1}
\pair{\alocation_2}{\vect{x_2}} \step{\apath_1'}
\pair{\alocation_3}{\vect{x_3}} \step{\apath_2} \pair{\alocation_4}{\vect{x_4}}
$
such that $\alocation_1 = \alocation_2$, $\alocation_3 = \alocation_4$, 
there is $i \in \interval{1}{n}$ such that
$\vect{x_1} \prec \vect{x_2}$,  $\vect{x_4}(i) < \vect{x_3}(i)$ and
for all $j \in \interval{1}{n}$ such that $\vect{x_4}(j) < \vect{x_3}(j)$,
we have $\vect{x_1}(j) < \vect{x_2}(j)$, see e.g.~\cite{Valk&VidalNaquet81,Faouzi&Habermehl09}. 
Consequently, nonregularity condition can be viewed as a disjunction of generalized 
unboundedness properties of the form
$\pair{\ainterval_1^i}{\ainterval_2^i}$ where
$\ainterval_1^i(i) = [1,+\infty[$, $\ainterval_2^i(i) = ]-\infty,-1]$,
and for $j \neq i$, we have  $\ainterval_1^i(j) = [0,+\infty[$ and 
 $\ainterval_2^i(j) = ]-\infty,+\infty[$. 

\paragraph{Strong promptness detection.}
We show below how the strong promptness detection 
problem can be reduced to the simultaneous unboundedness problem, leading to an \expspace 
\ upper bound.
\iflong
\noindent
\pb{Strong promptness detection problem}
\begin{description}
\itemsep 0 cm
\item[Input:] An initialized VASS $\pair{\triple{\locations}{n}{\delta}}{\pair{\alocation}{\vect{x}}}$ and a partition
              $\pair{\delta_I}{\delta_E}$ of $\delta$. 
\item[Question:] Is there $k \in \Nat$ such that for every run $\pair{\alocation}{\vect{x}} \step{*} 
\pair{\alocation'}{\vect{x}'}$, there is no run $\pair{\alocation'}{\vect{x}'} \step{\apath} 
\pair{\alocation''}{\vect{x}''}$ using only transitions 
from $\delta_I$ and of length more than $k$ ($\apath \in \delta^*_I$)?
\end{description}
\else
 The \defstyle{strong promptness detection problem} is defined as follows~\cite{Valk&Jantzen85}: given 
\iflong an initialized VASS \fi 
$\pair{\triple{\locations}{n}{\delta}}{\pair{\alocation}{\vect{x}}}$ and a partition
              $\pair{\delta_I}{\delta_E}$ of $\delta$, 
is there $B \in \Nat$ such that for every run $\pair{\alocation}{\vect{x}} \step{*} 
\pair{\alocation'}{\vect{x}'}$, there is no run 
$\pair{\alocation'}{\vect{x}'} \step{\apath} 
\pair{\alocation''}{\vect{x}''}$ using only transitions 
from $\delta_I$ and of length more than $B$  ($\apath \in \delta^*_I$) ?
The transitions in $\delta_I$ are called \defstyle{internal} and strong promptness guarantees
that sequences of internal transitions cannot be arbitrarily long.
\fi 
Let us consider below the VASS $\avass$ of dimension 1 with $\delta_I$ made of the
two transitions in bold. 
$$
\begin{picture}(100,5)(-40,0)
\setlength{\unitlength}{0.2mm}
 
  \node(A)(10,2){$A$}
  \node(B)(90,2){$B$}
  \node(C)(170,2){$C$}
  
  \drawloop[loopangle=180](A){$+1$}  
  \drawedge(A,B){$0$}
  \drawedge[curvedepth=5,linewidth=1.7](B,C){
  $-1$
  }
  \drawedge[curvedepth=5,linewidth=1.7](C,B){
  $-1$
  }  
\end{picture}
$$
$\pair{\avass}{\pair{A}{0}}$ is not strongly prompt and
there is no \iflong run \fi  $\pair{A}{0} \step{*} \pair{\alocation}{\vect{x}} \step{\apath} 
\pair{\alocation}{\vect{y}}$ for some $\alocation \in \set{A,B,C}$ such that 
$\vect{x} \preceq \vect{y}$, $\apath$ is nonempty and contains only 
transitions in $\delta_I$.
\begin{lemma} \label{lemma-promptness}
There is a logspace reduction from strong promptness detection problem to 
the complement of simultaneous unboundedness problem. 
\end{lemma}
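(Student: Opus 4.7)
The plan is to reduce strong promptness to the complement of simultaneous $\{n+1\}$-unboundedness by building a VASS $\avass'$ with one extra counter that counts, in a guessed ``tail'' portion of a run, the number of internal transitions fired. Unboundedness of this counter then characterizes the existence of arbitrarily long internal paths from reachable configurations, which is precisely the negation of strong promptness.

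More concretely, from $\avass = \triple{\locations}{n}{\delta}$ with partition $\pair{\delta_I}{\delta_E}$ I would construct $\avass' = \triple{\locations'}{n+1}{\delta'}$ where $\locations' = \locations \cup \set{\alocation^c : \alocation \in \locations}$ consists of two disjoint copies of $\locations$. In the first copy, for every $\alocation \step{\vect{b}} \alocation' \in \delta$ I add a transition $\alocation \step{\pair{\vect{b}}{0}} \alocation'$ in $\delta'$ (simulating $\avass$ while leaving the new counter untouched). For every $\alocation$ I add a silent jump $\alocation \step{\pair{\vect{0}}{0}} \alocation^c$ into the second copy. Finally, in the second copy I keep only internal transitions: for every $\alocation \step{\vect{b}} \alocation' \in \delta_I$ I add $\alocation^c \step{\pair{\vect{b}}{1}} (\alocation')^c$ in $\delta'$, so each firing of an internal transition increments the new counter. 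The initial configuration is $\pair{\alocation}{\pair{\vect{x}}{0}}$. This is clearly producible in logarithmic space.

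The equivalence $\pair{\avass}{\pair{\alocation}{\vect{x}}}$ is strongly prompt iff $\pair{\avass'}{\pair{\alocation}{\pair{\vect{x}}{0}}}$ is not $\set{n+1}$-unbounded splits into two straightforward directions. If $\avass$ is not strongly prompt, then for every $B \in \Nat$ there is a reachable configuration $\pair{\alocation'}{\vect{x}'}$ in $\avass$ admitting an internal path of length at least $B$; simulating this witness in $\avass'$ (running in the first copy up to $\pair{\alocation'}{\vect{x}'}$, switching to the second copy, then executing the internal path) yields a reachable configuration in $\avass'$ whose $(n{+}1)$th component is at least $B$, so the new counter is unbounded. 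Conversely, any run of $\avass'$ reaching value $B$ on the $(n{+}1)$th component necessarily splits into a phase-$1$ prefix followed by a phase-$2$ suffix of length $\geq B$ using only transitions from $\delta_I$; projecting this run on the first $n$ components gives the required reachable configuration of $\avass$ from which an internal path of length $\geq B$ exists, contradicting strong promptness.

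The construction is essentially routine, so I do not expect a hard conceptual obstacle; the only point deserving care is the correct separation between the two phases so that, on the one hand, internal transitions in phase $2$ can be fired unrestricted (mirroring exactly the internal paths of $\avass$) and, on the other hand, no non-internal transition can sneak in after the jump and inflate the counter value. The disjoint-copy encoding of $\locations$ handles this bookkeeping cleanly, and since the dimension grows only by $1$ and the number of control states only doubles, the reduction meets the logspace requirement.
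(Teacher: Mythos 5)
Your reduction is correct and is essentially the construction the paper uses: an auxiliary $(n{+}1)$th counter that stays at $0$ during a first phase simulating $\avass$, after which a nondeterministic jump into a second copy of the control graph restricts firing to $\delta_I$ and increments the new counter once per internal transition, so that non-strong-promptness is exactly $\set{n+1}$-unboundedness of the new initialized VASS. Both directions of the equivalence and the logspace bound are argued correctly, so there is nothing to add.
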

\iflong \input{proof-lemma-promptness} \fi

\paragraph{Weak reversal-boundedness detection.}
Complement of weak reversal-boundedness involves two universal quantifications (on $B$ and $r$)
that can be understood as  simultaneous unboundedness properties.
Lemma~\ref{lemma-weak-reversal-boundedness} below is a key intermediate result in our investigation. 

\begin{lemma} \label{lemma-weak-reversal-boundedness}
Given a VASS $\avass = \triple{\locations}{n}{\delta}$ and a configuration 
$\pair{\alocation}{\vect{x}}$,
$\pair{\avass}{\pair{\alocation}{\vect{x}}}$ is not weakly reversal-bounded
with respect to $i$ iff $\pair{\avass_{rb}}{\pair{\alocation_{rb}}{\vect{x}_{rb}}}$
has a run  satisfying ${\rm PB}_{\sigma}$ for some 
disjointness sequence $\sigma = \aset_1 \cdots \aset_K$
with $n+i \in \aset_K$ and $i \in  (\aset_1 \cup \cdots \cup \aset_{K-1})$.
\end{lemma}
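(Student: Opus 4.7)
The plan is to prove both implications separately. A useful common observation is that any run in $\avass_{rb}$ and its corresponding run in $TS_B$ share the same underlying sequence of transitions and agree on the first $n$ components; they differ only on the reversal counters, where a reversal of component $j$ (for $j \in \interval{1}{n}$) increments $n+j$ unconditionally in $\avass_{rb}$ but increments $n+j$ in $TS_B$ only when the value of component $j$ at that moment strictly exceeds $B$.

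For the ($\Leftarrow$) direction, I pump the witness. Let $\beta > 0$ be the per-iteration gain of $\apath_K$ on the reversal counter $n+i$, and let $\alpha > 0$ be the per-iteration gain of $\apath_{l_0}$ on counter $i$ for some $l_0 < K$ with $i \in \aset_{l_0}$. Given arbitrary $B, r \geq 0$, I choose $M$ with $M\beta \geq r$ and then $N$ large enough that, at the beginning of the final $M$ iterations of $\apath_K$, the value of counter $i$ exceeds $B + \mu M + C$, where $\mu$ bounds the absolute per-iteration variation of counter $i$ inside $\apath_K$ and $C$ bounds the (constant) contribution to counter $i$ from all remaining cycles and bridges, controlled by (NONSTRICT). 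Counter $i$ then stays strictly above $B$ throughout the final $M$ iterations of $\apath_K$, so every reversal occurring in that last segment is counted in $TS_B$ and yields $n+i \geq M\beta \geq r$.

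For the ($\Rightarrow$) direction, assume $\pair{\avass}{\pair{\alocation}{\vect{x}}}$ is not weakly reversal-bounded with respect to $i$. Setting $B = 0$ yields $TS_0 = \avass_{rb}$, so $\avass_{rb}$ is $(n+i)$-unbounded; letting $B$ range forces $\avass$ to be $i$-unbounded too, since each counted reversal in $TS_B$ requires component $i$ to exceed $B$. Hence $\avass_{rb}$ is simultaneously $\{i, n+i\}$-unbounded and Lemma~\ref{lemma-simul} supplies a disjointness sequence $\sigma = \aset_1 \cdots \aset_K$ with a ${\rm PB}_\sigma$-witness satisfying $\{i, n+i\} \subseteq \aset_1 \cup \cdots \cup \aset_K$ and $\{i, n+i\} \cap \aset_K \neq \emptyset$. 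I then reshape $\sigma$. If $n+i \notin \aset_K$, then $i \in \aset_K$, and $\apath_K$ must still strictly increase $n+i$ (otherwise $n+i$ would be bounded in $TS_B$ for $B$ larger than the bounded value of counter $i$ attained during the pumped early cycles, contradicting the hypothesis), so we may enlarge $\aset_K$ to include $n+i$. At that stage $n+i \in \aset_K$; if moreover $i \in \aset_K$, I split $\aset_K$ into $(\aset_K \setminus \{n+i\}) \cdot \{n+i\}$ and duplicate $\apath_K$ with an empty intermediate bridge, after first pumping the earlier cycles enough to absorb the decrements that (NONSTRICT) allows $\apath_K$ to produce (those decremented components lying necessarily in $\aset_1 \cup \cdots \cup \aset_{K-1}$). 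The outcome is a valid ${\rm PB}_{\sigma'}$-witness in which $i$ sits in a strictly earlier block than $n+i$, as required.

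The main technical difficulty is this final rearrangement in the ($\Rightarrow$) direction: one has to verify that splitting or enlarging the last block yields a legitimate disjointness sequence (with disjointness and (STRICT)/(NONSTRICT) preserved) and a run that remains executable after the required pre-pumping of the earlier cycles. Apart from this rearrangement, both directions amount to standard Rackoff-style quantitative pumping arguments exploiting the structural constraints of ${\rm PB}_\sigma$ and the monotonicity of the reversal counters.
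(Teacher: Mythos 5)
Your right-to-left direction (from a ${\rm PB}_{\sigma}$ witness of the stated shape to failure of weak reversal-boundedness) is a standard hierarchical pumping argument and is essentially sound. The left-to-right direction, however, has a genuine gap at its very first move: you reduce the hypothesis to simultaneous $\set{i,n+i}$-unboundedness of $\avass_{rb}$, and that reduction is lossy, because simultaneous $\set{i,n+i}$-unboundedness is \emph{strictly weaker} than failure of weak reversal-boundedness. Concretely, take $n=1$, $i=1$, control states $q_0,q_1,q_2$, transitions $q_0 \step{+1} q_1$, $q_1 \step{-1} q_0$, $q_0 \step{0} q_2$, $q_2 \step{+1} q_2$, and initial configuration $\pair{q_0}{0}$. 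Iterating the $q_0q_1$-loop $N$ times and then the $q_2$-loop $M$ times reaches a configuration of $\avass_{rb}$ whose first component is $M$ and whose reversal counter is about $2N$, so $\avass_{rb}$ is simultaneously $\set{1,2}$-unbounded; yet every reversal occurs at counter value $0$ or $1$, so the system \emph{is} weakly reversal-bounded (take $B=1$). Here Lemma~\ref{lemma-simul} returns the witness $\sigma = \set{n+i}\cdot\set{i}$, and no witness of the shape required by the lemma exists (by your own right-to-left direction, it would contradict weak reversal-boundedness). So the witness you start from cannot, in general, be reshaped into one of the required form.

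The attempted repair in your case analysis --- asserting that when $n+i \notin \aset_K$ the loop $\apath_K$ ``must still strictly increase $n+i$, otherwise contradiction with the hypothesis'' --- conflates a property of the \emph{system} with a property of the particular \emph{run} you are holding. Non-weak-reversal-boundedness guarantees that for every $B$ \emph{some} run of $TS_B$ pumps the counter $n+i$; it says nothing about the witness Lemma~\ref{lemma-simul} handed you, which may live in an entirely different part of the system (consider the disjoint union of the gadget above with a genuinely non-weakly-reversal-bounded gadget: the bad witness still exists and your argument may pick it). The correct route is to extract the witness directly from the family of $TS_B$-witnesses, e.g.\ via the coverability graph of $\avass_{rb}$ as the paper does for Lemma~\ref{lemma-simul}: since the coverability graph is finite, infinitely many of the $TS_B$-witnesses factor through the same branch, and on that branch the loop introducing $\infty$ on component $n+i$ must sit at a node where component $i$ already carries $\infty$ (or acquires it in that very loop, in which case your splitting trick for $\aset_K$ applies and is fine, modulo the pre-pumping you describe). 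That extraction step is what is missing. A secondary, fixable slip: ``$i$-unbounded and $(n+i)$-unbounded'' does not by itself imply simultaneous $\set{i,n+i}$-unboundedness; the latter does follow from the hypothesis, but only by observing that a single $TS_B$-run with a late counted reversal has both components large at the same configuration.
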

\iflong \input{proof-lemma-weak} \fi 
As a corollary, we are in a position to present a witness 
run characterization for weak reversal-boun\-ded\-ness detection.
$\pair{\avass}{\pair{\alocation_0}{\vect{x}_0}}$ is not weakly reversal-bounded with respect to $i$ iff
there exist a disjointness sequence $\sigma = \aset_1  \cdots \aset_K$ and a run 
$\pair{\alocation_0}{\vect{x}_0} \step{\apath_0'} \pair{\alocation_1}{\vect{x}_1} 
\step{\apath_1} \pair{\alocation_2}{\vect{x}_2}
             \step{\apath_1'} 
              \cdots 
             \step{\apath_{K}'} \pair{\alocation_{2K+1}}{\vect{x}_{2K+1}} \step{\apath_{K+1}} 
             \pair{\alocation_{2K+2}}{\vect{x}_{2K+2}}$
such that (1) $\apath_{K+1}$ contains a reversal for the $i$th component,
(2) the subrun $\pair{\alocation_0}{\vect{x}_0} \step{*} \pair{\alocation_{2K}}{\vect{x}_{2K}}$
satisfies PB$_{\sigma}$, (3) $i \in (\aset_1 \cup \cdots \cup \aset_K)$ 
and 
(4) for $j \in \interval{1}{n}$, 
$\vect{x}_{2K+2}(j) < \vect{x}_{2K+1}(j)$ implies $j \in (\aset_1 \cup \cdots \cup \aset_K)$.
Based on Lemmas~\ref{lemma-reversalbounded-unboundedness} and~\ref{lemma-simul},
a characterization for reversal-boundedness can be also defined.

\subsection{A first relaxation}
\label{section-first-relaxation} 

Below, we relax the satisfaction of the property $\mathcal{P}$ 
by allowing negative component values in a controlled way. 
A pseudo-run of the form
$\pair{\alocation_0}{\vect{x_0}} \step{\apath_0'} \pair{\alocation_1}{\vect{x}_1} \step{\apath_1} 
\pair{\alocation_2}{\vect{x}_2} 
             \step{\apath_1'} \pair{\alocation_3}{\vect{x}_3}  
             \cdots 
             \step{\apath_{K-1}'} \pair{\alocation_{2K-1}}{\vect{x}_{2K-1}} \step{\apath_K} 
             \pair{\alocation_{2K}}{\vect{x}_{2K}}$
\defstyle{weakly satisfies $\mathcal{P}$} $\equivdef$ 
it satisfies (P0), (P1), (P2) and (P3) defined as follows:
  for $j \in \interval{1}{n}$, 
every pseudo-configuration $\vect{x}$ such that $\vect{x}(j) < 0$ occurs after some 
$\vect{x}_{2l}$ for which  $\vect{x}_{2l}(j) - \vect{x}_{2l-1}(j) > 0$. 
If the run $\arun$ satisfies $\mathcal{P}$, then viewed as a pseudo-run, it also weakly
satisfies $\mathcal{P}$. Lemma~\ref{lemma-pseudorun-run} below 
states that the existence of pseudo-runs weakly satisfying  $\mathcal{P}$ is equivalent to
the existence of runs satisfying  $\mathcal{P}$ and their length can be compared. 
Later, we shall 
use the witness pseudo-run characterization.

\begin{lemma} \label{lemma-pseudorun-run}
Let $\apseudorun$ be a pseudo-run of length $L$ weakly satisfying $\mathcal{P}$ 
(of length $K$).
Then, there is a run $\arun$ satisfying $\mathcal{P}$ of length at most
$((L \times \pic{\avass})^{K} \times (1 + K^2 \times L \times \pic{\avass}) + L$. 
\end{lemma}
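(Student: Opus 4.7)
The plan is to convert the pseudo-run $\apseudorun$ into a genuine run by iterating each distinguished loop $\apath_l$ of its decomposition with a suitable multiplicity, so that all intermediate pseudo-configurations become non-negative. Write $\apseudorun$ as $\pair{\alocation_0}{\vect{x_0}} \step{\apath_0'} \pair{\alocation_1}{\vect{x_1}} \step{\apath_1} \pair{\alocation_2}{\vect{x_2}} \step{\apath_1'} \cdots \step{\apath_K} \pair{\alocation_{2K}}{\vect{x_{2K}}}$, so by (P0) each $\apath_l$ is a loop at $\alocation_{2l-1}$. For every component $j \in \interval{1}{n}$, let $l_j$ be the smallest index $l$ with $\vect{x_{2l}}(j) - \vect{x_{2l-1}}(j) > 0$; by (P2) any strictly negative effect of some $\apath_l$ on $j$ requires $l_j < l$, and by (P3) the index $l_j$ is defined whenever $\apseudorun$ visits a pseudo-configuration with a negative $j$-th value. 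Set $I_l := \set{j \in \interval{1}{n} : l_j = l}$.

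I would pick multiplicities $N_l \in \Nat$ by downward recursion on $l$ and replace the designated $\apath_l$ by $\apath_l^{N_l + 1}$, which remains a loop at $\alocation_{2l-1}$; when this risks violating (P1) (possible only when the interval $\ainterval_l(j)$ is bounded), one instead inserts the extra $N_l$ copies of $\apath_l$ inside the connector path $\apath_{l-1}'$ just before entering $\alocation_{2l-1}$, which leaves the designated effect of $\apath_l$ at the distinguished checkpoint unchanged. The requirement driving the choice of $N_l$ is that the boost $N_l \cdot (\vect{x_{2l}}(j) - \vect{x_{2l-1}}(j)) \geq N_l$ granted on each $j \in I_l$ cover the total subsequent negative demand on $j$: the maximum dip already present in $\apseudorun$ after position $2l$, bounded by $L \cdot \pic{\avass}$, plus the additional dips caused by pumping $\apath_{l'}$ for each $l' > l$, each bounded by $N_{l'} \cdot \pic{\avass}$ together with a within-iteration transient of $L \cdot \pic{\avass}$. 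This yields the recursion $N_l \leq L \cdot \pic{\avass} + \pic{\avass} \cdot \sum_{l' > l} N_{l'}$, whose backward solution is $N_l \leq (L \cdot \pic{\avass})(1 + \pic{\avass})^{K-l}$.

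It remains to verify that the resulting sequence $\arun$ is an honest run satisfying $\mathcal{P}$: (P0) is immediate; (P2) is preserved because pumping does not change the sign of the net effect on any component; (P1) is preserved by the placement precaution above; and the choice of the $N_l$'s guarantees non-negativity of all component values throughout. The length of $\arun$ is bounded by $L + \sum_{l=1}^{K}(N_l + 1) \cdot |\apath_l| \leq L + K \cdot L \cdot \big((L \cdot \pic{\avass})(1 + \pic{\avass})^{K-1} + 1\big)$, which fits within the stated $((L \cdot \pic{\avass})^K \times (1 + K^2 \cdot L \cdot \pic{\avass})) + L$. The main technical obstacle is the feedback between boosts on different components: pumping $\apath_l$ to pay off a debt on some $j \in I_l$ can reopen a debt on some $j' \in I_{l'}$ with $l' < l$ (precisely the cases allowed by (P2)), and taming this cascade is exactly what the downward recursion on $N_l$ is designed to do; this is also the source of the geometric blow-up in $K$ that appears in the bound.
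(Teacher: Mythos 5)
Your construction --- repeating each loop $\apath_l$ with multiplicities fixed by a downward recursion so that the boost granted to a component at its first strictly increasing loop pays off all later debts, including those created by the pumping itself --- is exactly the paper's proof, which repeats the paths $\apath_i$ hierarchically to eliminate negative values (the same device used for the analogous coverability-graph property), and your treatment of (P1) via placement of the extra copies and of the within-iteration transients is sound. One quantitative slip: each additional copy of $\apath_{l'}$ can decrease an earlier component by up to $\length{\apath_{l'}} \times \pic{\avass}$, not by $\pic{\avass}$, so your recursion should read $N_l \leq L \cdot \pic{\avass} + \pic{\avass} \cdot \sum_{l' > l} N_{l'} \cdot \length{\apath_{l'}}$; as written, the multiplicities could be too small to guarantee non-negativity. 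Since $\sum_{l'} \length{\apath_{l'}} \leq L$, the corrected recursion still solves to multiplicities within the $(L \times \pic{\avass})^{K}$ envelope, so the stated length bound is unaffected.
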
 
\iflong \input{proof-lemma-pseudorun-run} \fi
The principle of the proof of Lemma~\ref{lemma-pseudorun-run} (and part of the proof
of Lemma~\ref{lemma-simul})
is identical to the idea of the proof of the following property of the coverability graph
$CG(\avass,\pair{\alocation_0}{\vect{x_0}})$ (see e.g., details
in~\cite{Reutenauer90}). For every extended configuration $\pair{\alocation}{\vect{y'}} \in \locations \times
(\Nat \cup \set{\infty})^n$  in  
           $CG(\avass,\pair{\alocation_0}{\vect{x_0}})$ and bound $B \in \Nat$,
there is a run $\pair{\alocation_0}{\vect{x_0}} \step{*} \pair{\alocation}{\vect{y}}$
      in $\avass$
      such that for $i \in \interval{1}{n}$, if $\vect{y'}(i) = \infty$ then 
      $\vect{y}(i) \geq B$ otherwise
      $\vect{y}(i) = \vect{y'}(i)$. 
In the proof of Lemma~\ref{lemma-pseudorun-run}, the paths $\apath_i$'s are repeated
hierarchically in order to eliminate negative values.  
Additionally, if 
$\apseudorun$ is a pseudo-run  of length $L$
weakly satisfying $\mathcal{P}$ and $L$ is at most doubly exponential in $ N = 
\length{\avass} + 
\length{\pair{\alocation_0}{\vect{x}_0}} + K + \absmaximum{\mathcal{P}}$,
then there is a run satisfying $\mathcal{P}$ and starting in $\vect{x}_0$ that is also
of length at most doubly exponential in  $N$.
\iflong
So, standard unboundedness admits also
a witness pseudo-run characterization with a disjunction of $n$
generalized unboundedness properties of length 1. But, if a pseudo-run $\apseudorun$ weakly satisfies
 $\mathcal{P}$ of length 1, then $\arun$ is a run satisfying  
$\mathcal{P}$, explaining why only the witness run characterization is relevant for 
standard unboundedness. 
\fi 

\section{\expspace \ Upper Bound}
\label{section-expspace} 

In  this section, we deal with VAS only and we consider a current VAS $\avas$ of dimension $n$
(see Lemma~\ref{lemma-reduction}). 
W.l.o.g., we can assume that $n > 1$, 
otherwise it is easy to show that the generalized unboundedness problem restricted to VAS of dimension
1 can be solved in polynomial space.  Moreover, we  assume that $\pic{\avas} \geq 1$.

\subsection{Approximating generalized unboundedness properties}
\label{section-approximations}
Generalized unboundedness properties apply on runs but as it will be shown below, it would be more convenient
to relax the conditions to pseudo-runs. A first step has been done in Section~\ref{section-first-relaxation};
we shall push further the idea in order to adapt Rackoff's proof. 
Let $\apseudorun = \vect{x}_{0} \step{\apath_{0}'} \vect{x}_{1} \step{\apath_1} \vect{x}_{2}
              \cdots \vect{x}_{2K-1} \step{\apath_K} \vect{x}_{2K}$
be a pseudo-run weakly satisfying $\mathcal{P} = \tuple{\ainterval_1}{\ainterval_K}$. 
We suppose that $\apseudorun$ is induced by the path  
$\atransition_1 \ldots \atransition_k$ with $\apseudorun = \vect{u}_{0} \cdots \vect{u}_k$ and $\amap:
\interval{0}{2K} \rightarrow \interval{0}{k}$ is the map such that
$\vect{x}_i = \vect{u}_{\amap(i)}$ ($\amap(0) = 0$, $\amap(2K)=k$). 
For each position $j \in \interval{0}{\amap(2K-2)}$ along $\apseudorun$, there is a maximal $l_j \in \interval{1}{K}$ (with respect to standard ordering on
$\Nat$) and ${\rm INCR}_j \subseteq \interval{1}{n}$ such that
 $\amap(2l_j-2)  \leq j$ and
${\rm INCR}_j = \set{i \in \interval{1}{n}: \exists \ l' \in \interval{1}{l_j-1} 
\ {\rm such \ that} \ \vect{x_{2l'-1}}(i) < \vect{x_{2l'}}(i)}$.
In the induction proof of Lemma~\ref{lemma-main-induction}, we will need to check properties on suffixes of pseudo-runs and it will be useful
to approximate $\mathcal{P}$ with respect to some suffix  $\tuple{\ainterval_{l_j}}{\ainterval_K}$ and 
to some set of components
${\rm INCR}_j$. Indeed, the suffix $\vect{u}_{l_j} \cdots \vect{u}_k$ weakly satisfies $\tuple{\ainterval_{l_j}}{\ainterval_K}$ assuming that
we know how to increment strictly the components from ${\rm INCR}_j$. Moreover, like the notion of $i$-$B$-boundedness from~\cite{Rackoff78},
we would like to enforce that for each component $j$ from a given set $I$ and for each pseudo-configuration $\vect{y}$ along the pseudo-run
satisfying the approximation property, either $\vect{y}(j)$ belongs to $\interval{0}{B-1}$ or
the prefix pseudo-run terminating on $\vect{y}$ has the ability to increase arbitrarily the value $\vect{y}(j)$ (this will correspond to
condition (P2$'$) below). So, we are now in position to define the approximation property 
$\mathcal{A}[\mathcal{P},l, {\rm INCR},I,B]$. 
Given a generalized unboundedness property $\mathcal{P}$ of 
length $K$, $l \in \interval{1}{K}$, ${\rm INCR} \subseteq \interval{1}{n}$,
$I \subseteq \interval{1}{n}$ and $B \geq 0$,
a pseudo-run of the form below
$$\vect{y}_{2l-2} \step{\apath_{l-1}'} \vect{y}_{2l-1} \step{\apath_l} \vect{y}_{2l}
              \cdots 
             \step{\apath_{K-1}'} \vect{y}_{2K-1} \step{\apath_K} \vect{y}_{2K}$$
satisfies the approximation property
$\mathcal{A}[\mathcal{P},l,{\rm INCR}, I,B]$ (also abbreviated by $\mathcal{A}$) $\equivdef$ the conditions below are verified:
\begin{description}
\itemsep 0 cm
\item[(P1$'$)] For  $l' \in \interval{l}{K}$ and   $j \in \interval{1}{n}$, 
 we have $\vect{y}_{2l'}(j) - \vect{y}_{2l'-1}(j) \in \ainterval_{l'}(j)$.
\item[(P2$'$)] For  $l' \in \interval{l}{K}$ and   $j \in \interval{1}{n}$, 
 if $\vect{y}_{2l'}(j) - \vect{y}_{2l'-1}(j) < 0$, then  (there is
             $l \leq l'' < l'$ such that  $\vect{x}_{2l''}(j) - \vect{x}_{2l''-1}(j) > 0$
or $j \in {\rm INCR}$). 
\item[(P3$'$)]  For every pseudo-configuration $\vect{x}$ in $\arun$ occurring between
            $\vect{y}_{2l'}$ and strictly before $\vect{y}_{2l'+2}$ 
            with $l' \geq l-1$, $\vect{x}(J) \in \interval{0}{B-1}^J$ with 
             $J = I \setminus ({\rm INCR} \cup \set{j: \exists \ l \leq l'' \leq l', \
                  \vect{x}_{2l''}(j) - \vect{x}_{2l''-1}(j) > 0})$.
\end{description}
Condition (P3$'$) reflects the intuition that only the values from components in $J$ require to be controlled.
We also write $\mathcal{A}[\mathcal{P},l, {\rm INCR},I,+ \infty]$  
to denote the property obtained from $\mathcal{A}[\mathcal{P},l, {\rm INCR}, I,B]$
by replacing  $\interval{0}{B-1}^J$  by $\Nat^J$ in (P3$'$). 
Observe that a pseudo-run satisfies  
$\mathcal{A}[\mathcal{P},1,\emptyset, \interval{1}{n},+ \infty]$ iff 
it weakly satisfies
$\mathcal{P}$. 
The property $\mathcal{A}[\mathcal{P},l,  {\rm INCR},I,+ \infty]$ is exactly the condition we need in the proof
of Lemma~\ref{lemma-main-induction} below thanks to the property stated below. 
\begin{lemma} \label{lemma-repeat}
If the pseudo-run $\vect{y}_{2l-2} \step{\apath_{l-1}'} \vect{y}_{2l-1} \step{\apath_l} \vect{y}_{2l}
              \cdots  \step{\apath_K} \vect{y}_{2K}$
satisfies the approximation property $\mathcal{A}[\mathcal{P},l, {\rm INCR},I,+ \infty]$, then 
 $\pair{\apath_{l-1}' (\apath_l)^{n_l} \apath_{l}' (\apath_{l+1})^{n_{l+1}} \cdots 
(\apath_{K})^{n_{K}}}{\vect{y}_{2l-2}}
$
also satisfies it, for  $n_l, \ldots, n_K \geq 1$. 
\end{lemma}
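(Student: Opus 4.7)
The plan is to verify each of the three clauses (P1$'$), (P2$'$), (P3$'$) of the approximation property on the iterated pseudo-run. Let $\vect{e}_{l'} = \vect{y}_{2l'} - \vect{y}_{2l'-1}$ denote the net effect of the $l'$-th loop in the original pseudo-run and let $\vect{y}'_m$ be the pseudo-configurations of the iterated pseudo-run; by linearity, the net effect of the $l'$-th iterated loop is exactly $n_{l'} \vect{e}_{l'}$, and $\vect{y}'_{2l'} = \vect{y}_{2l'} + \sum_{l \leq l'' \leq l'}(n_{l''}-1)\vect{e}_{l''}$.

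For (P1$'$), I rely on the fact that the intervals arising in our applications all have the shape $[a,+\infty[$ with $a \geq 0$, $]-\infty,b]$ with $b \leq 0$, or $]-\infty,+\infty[$ and are therefore closed under multiplication by a positive integer; hence $\vect{e}_{l'}(j) \in \ainterval_{l'}(j)$ forces $n_{l'}\vect{e}_{l'}(j) \in \ainterval_{l'}(j)$. For (P2$'$), multiplication by $n_{l'} \geq 1$ preserves the sign of each component of $\vect{e}_{l'}$, so the set of pairs $(l',j)$ with strictly negative net effect is the same in the two pseudo-runs, and the same $l''$-witness or membership $j \in {\rm INCR}$ that discharges (P2$'$) in the original continues to discharge it in the iterated one.

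The hard part is (P3$'$), because iteration inserts many new intermediate pseudo-configurations whose non-negativity on $J_{l'} = I \setminus ({\rm INCR} \cup \set{j : \exists\, l \leq l'' \leq l',\ \vect{e}_{l''}(j) > 0})$ must be established. The crucial observation is that $J_{l'}$ is invariant under iteration, since it is defined only in terms of the signs of the $\vect{e}_{l''}(j)$'s, which are preserved. For any $j \in J_{l'}$, the negation of a strictly positive prior effect together with (P2$'$) on the original pseudo-run forces $\vect{e}_{l''}(j) = 0$ for all $l \leq l'' \leq l'$. Hence $\vect{y}'_{2l''}(j) = \vect{y}_{2l''}(j)$ throughout this range, and the configurations visited along $\apath_{l'}'$ in the iterated run agree componentwise on $j$ with the corresponding original configurations, so the original (P3$'$) transfers to this first part of the iterated segment.

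It remains to handle the intermediate configurations inside $(\apath_{l'+1})^{n_{l'+1}}$, which have the form $\vect{y}'_{2l'+1}(j) + k\vect{e}_{l'+1}(j) + \vect{z}(j)$ for $0 \leq k < n_{l'+1}$ and $\vect{z}$ a partial sum of $\apath_{l'+1}$. A further application of (P2$'$) applied to $\vect{e}_{l'+1}$ on any $j \in J_{l'}$ (whose membership in $J_{l'}$ forbids both $j \in {\rm INCR}$ and a strictly positive earlier witness) forces $\vect{e}_{l'+1}(j) \geq 0$; combined with $\vect{y}'_{2l'+1}(j) = \vect{y}_{2l'+1}(j)$ and with the non-negativity of $\vect{y}_{2l'+1}(j) + \vect{z}(j)$ inherited from (P3$'$) on the original, this gives $\vect{y}'_{2l'+1}(j) + k\vect{e}_{l'+1}(j) + \vect{z}(j) \geq 0$. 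The three conditions thus propagate to the iterated pseudo-run, completing the verification.
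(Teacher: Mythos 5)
Your treatment of (P2$'$) and (P3$'$) is sound and matches the substance of the paper's argument: for a component $j$ in the controlled set $J$ relative to stage $l'$, membership in $J$ together with (P2$'$) forces every loop effect $\vect{e}_{l''}(j)$ with $l''\leq l'$ to be zero and $\vect{e}_{l'+1}(j)$ to be non-negative, so pumping can only preserve or increase the $j$-values at every intermediate position (the paper compresses this into ``VAS are monotonous''). The gap is in (P1$'$). You decompose the pumped pseudo-run so that the $l'$-th marked loop is the whole block $(\apath_{l'})^{n_{l'}}$, whose net effect is $n_{l'}\vect{e}_{l'}$, and you then need each $\ainterval_{l'}(j)$ to be closed under multiplication by a positive integer. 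You justify this by restricting to ``the intervals arising in our applications'', but the lemma is stated --- and invoked inside the induction of Lemma~\ref{lemma-main-induction} --- for an arbitrary generalized unboundedness property $\mathcal{P}$ given as input, and the definition explicitly allows bounded intervals $[a,b]$ with $a,b\in\Zed$ (as well as $[a,+\infty[$ with $a<0$ and $]-\infty,b]$ with $b>0$), none of which are closed under positive scaling: if $\vect{e}_{l'}(j)=3\in[2,5]$ and $n_{l'}=2$, then $6\notin[2,5]$. So the decomposition you chose genuinely fails (P1$'$) in the general case.

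The repair is a different placement of the markers, which is what the paper does: absorb the first $n_{l'}-1$ copies of $\apath_{l'}$ into the preceding unconstrained segment, i.e.\ take $\vect{y}'_{2l'-1}$ to be the pseudo-configuration reached after $\apath_{l'-1}'(\apath_{l'})^{n_{l'}-1}$ and $\vect{y}'_{2l'}$ to be reached after one further copy of $\apath_{l'}$. Then $\vect{y}'_{2l'}-\vect{y}'_{2l'-1}$ is the effect of a single $\apath_{l'}$ and coincides with the original difference, so (P1$'$) and (P2$'$) hold verbatim with no assumption whatsoever on the shape of the intervals; your argument for (P3$'$) carries over unchanged, since the multiset of traversed pseudo-configurations and the sets $J_{l'}$ are identical under either decomposition.
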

A similar statement 
 does not hold for  pseudo-runs satisfying $\mathcal{A}$ 
(values for components in $J$ might become out of
$\interval{0}{B-1}$) and for runs satisfying 
$\mathcal{P}$ (component values might become negative).
Property  $\mathcal{A}[\mathcal{P},l, {\rm INCR}, I,B]$ can be viewed as a collection
of \emph{local} path increasing formulae in the sense of~\cite{Faouzi&Habermehl09}. 
\iflong \input{proof-lemma-repeat} \fi 
\subsection{Bounding the length of pseudo-runs}
Let us briefly recall the structure of Rackoff's
proof to show that the boundedness problem for VAS is in \expspace. 
\iflong
Let $\pair{\avas}{\vect{x}_0}$ be an initialized 
VAS of dimension $n$.
\fi
A witness run for unboundedness is of the form
$\arun = \vect{x}_{0} \step{*} \vect{y} \step{+} \vect{y}'$ with $\vect{y} \prec \vect{y}'$. 
In~\cite{Rackoff78}, it is shown that $\arun$ can be of length at most
doubly exponential.
In order to get the \expspace \ upper bound, Savitch's theorem is used.
Rackoff's proof to establish the small run property goes as follows.
First, a technical lemma shows that if there is some $i$-$B$-bounded pseudo-run (instance of
 the approximation property 
$\mathcal{A}$), then there is
one of length at most $B^{\length{\avas}^C}$ for some constant $C$. The proof essentially shows that
existence of such a pseudo-run amounts to solving an inequation system and by using~\cite{Borosh&Treybig76},
small solutions exist, whence the obtention of a short  $i$-$B$-bounded pseudo-run. 
The idea of using small solutions of inequation system to solve problems on counter systems
dates back from~\cite{Rackoff78,Gurari&Ibarra78} and nowadays, 
\iflong this is a standard proof technique, see e.g.~\cite{Demri&Lugiez10}.
\else this is a standard proof technique.
\fi
This proof can be extended to numerous properties on pseudo-runs for which intermediate counter value differences
can be expressed in Presburger arithmetic as done in~\cite{Yen92,Faouzi&Habermehl09}. 
Then, a proof by induction on the dimension is performed by using this very technical lemma and
the ability to repeat sequences of transitions; the proof can be extended when the first intermediate configuration is less or equal to the
last configuration of the sequence (leading to the concept of increasing path formula 
in~\cite{Faouzi&Habermehl09}).
 This condition allows to perform the induction
on the dimension with  a unique increasing formula. 
Unfortunately, generalized unboundedness properties are not increasing in the sense 
of~\cite{Faouzi&Habermehl09}
and therefore Rackoff's proof requires to be extended (but the main ingredients remain). 
The generalization of the
technical lemma is presented below; it is not surprising since generalized
unboundedness properties are  Presburger-definable properties. However, not only we 
need to refine the expression $B^{\length{\avas}^C}$ in terms of various 
parameters 
 (length of $\mathcal{P}$,
$\absmaximum{\mathcal{P}}$,
$n$, $\absmaximum{\avas}$)  
in order to get the final \expspace \ upper bound
(or the \pspace \ upper bound with fixed dimension), but also we have to check that the new ingredients
in  the definition of   $\mathcal{A}$ do not prevent us from 
extending~\cite[Lemma 4.4]{Rackoff78}. Finally, it is important to specify the length of small
pseudo-runs with respect to parameters from $\mathcal{P}$. 

\begin{lemma} \label{lemma-simple-loops}
Let $\avas$ be a VAS of dimension $n \geq 2$,
$\mathcal{P}$ be a generalized unboundedness property of length $K$,
$l \in \interval{1}{K}$, $B \geq 2$, 
$I, {\rm INCR} \subseteq \interval{1}{n}$ and 
$\apseudorun$ be a pseudo-run
satisfying  $\mathcal{A}[\mathcal{P},l, {\rm INCR},I,B]$. Then, there exists
a pseudo-run starting by the same pseudo-configuration, satisfying $\mathcal{A}[\mathcal{P},l, {\rm INCR},I,B]$
and of length at most $(1+K) \times (\absmaximum{\avas} \times \absmaximum{\mathcal{P}} \times
B)^{n^{\aconstant_1}}$ for some constant
$\aconstant_1$ independent of $K$, $\absmaximum{\mathcal{P}}$, $\absmaximum{\avas}$, $B$ and $n$.
\end{lemma}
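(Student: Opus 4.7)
The plan is to follow the structure of Rackoff's Lemma 4.4 in \cite{Rackoff78}, adapting it to handle the additional parameters in the approximation property $\mathcal{A}[\mathcal{P},l,{\rm INCR},I,B]$. First I would decompose the pseudo-run $\apseudorun$ into at most $1+(K-l+1)$ contiguous segments corresponding to $\apath_{l-1}'$ and, for each $l' \in \interval{l}{K}$, the block $\apath_{l'}\apath_{l'}'$ (with the endpoint $\vect{y}_{2l'}$ identified by the interval constraint $\ainterval_{l'}$). I will argue that each segment can be shortened independently, the key point being that (P1$'$) is \emph{local} to the $l'$-th segment and that (P3$'$) imposes pointwise bounds that are constant throughout the segment: indeed, the set $J$ used in (P3$'$) can only shrink when moving from segment $l'$ to segment $l'+1$, so within a single segment the forbidden set of ``bounded'' components is fixed.

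Second, I would encode the existence of a short segment as the feasibility of an integer programming system, in the spirit of \cite{Rackoff78,Borosh&Treybig76}. For the segment corresponding to the step $\apath_{l'}$ between $\vect{y}_{2l'-1}$ and $\vect{y}_{2l'}$, the unknowns are the counts $z_t$ of the number of times each transition $t$ of $\avas$ is fired. The constraints are:
\begin{itemize}
\item $\sum_t z_t \vect{b}_t \in \ainterval_{l'}$ componentwise (from (P1$'$)),
\item $\vect{y}_{2l'-1} + \sum_t z_t \vect{b}_t \geq \vect{0}$ outside $J$, and lying in $\interval{0}{B-1}^J$ on $J$ at the final point (an analogous constraint handles $\apath_{l'-1}'$).
\end{itemize}
The intermediate pointwise bounds in (P3$'$) are handled as in Rackoff: if no short realization exists then the pseudo-run admits a repetition of the ``bounded projection'' onto $J$, hence a loop with zero effect on $J$; excising such a loop preserves all constraints of $\mathcal{A}$, so a minimal realization satisfies the bound automatically. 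By Borosh--Treybig, feasibility of such a system with $m$ variables and maximal absolute coefficient $a$ yields a solution whose entries are bounded by $(m\cdot a)^{O(m)}$.

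Third, I would collect the parameters. The number of variables is the number of transitions of $\avas$, hence linear in $\length{\avas}$; the relevant coefficients are bounded by $\maximum{\absmaximum{\avas},\absmaximum{\mathcal{P}},B}$. Plugging this into the Borosh--Treybig estimate yields, for each of the (at most) $1+K$ segments, a length bounded by $(\absmaximum{\avas}\cdot\absmaximum{\mathcal{P}}\cdot B)^{n^{\aconstant_1}}$ for some absolute constant $\aconstant_1$. Summing over the segments gives the factor $(1+K)$, producing the announced bound. Throughout, one has to verify that the replacement pseudo-run still weakly satisfies (P2$'$), but this is immediate because the set of coordinates that get a strictly positive increment in each of the preceding phases is recorded in the very definition of the system, and excising zero-effect loops preserves it.

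The main obstacle, and the reason a straightforward appeal to the increasing path formulae of \cite{Faouzi&Habermehl09} fails, is condition (P3$'$): the ``bounded'' set $J$ is not fixed but varies with the position in the pseudo-run, so a single global IP system does not capture the situation. My workaround is to treat each segment separately with its own fixed $J$, which is legitimate because $J$ only changes across segment boundaries where the endpoint $\vect{y}_{2l'}$ is already pinned down by $\ainterval_{l'}$. A small additional subtlety is that the intermediate-value bound in (P3$'$) holds inside a segment, for which one uses the classical loop-excision trick on the projection onto $J$ to reduce the problem to bounding only the \emph{endpoints} with integer programming.
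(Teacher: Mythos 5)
Your overall strategy --- cut the pseudo-run at the distinguished pseudo-configurations $\vect{y}_{2l-2},\ldots,\vect{y}_{2K}$, shorten each of the $O(K)$ pieces separately via an inequation system solved with the Borosh--Treybig bound, and sum to get the $(1+K)$ factor --- is the same as the paper's, which explicitly presents this lemma as the extension of \cite[Lemma~4.4]{Rackoff78} obtained by reducing the existence of a pseudo-run satisfying $\mathcal{A}$ to the existence of a small solution of an inequation system. Your observation that the set $J$ in (P3$'$) only shrinks along the run and is constant between consecutive distinguished points is also the right structural fact (modulo the detail that the blocks on which $J$ is constant are $\apath_{l'}'\apath_{l'+1}$ rather than $\apath_{l'}\apath_{l'}'$, which only shifts where you cut).

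There is, however, one step that does not work as written: the claim that excising a loop with zero effect on the $J$-projection ``preserves all constraints of $\mathcal{A}$.'' Such a loop generally has a nonzero effect on the coordinates outside $J$, so excising it changes the displacements $\vect{y}_{2l'}-\vect{y}_{2l'-1}$ and can violate (P1$'$) (the interval constraints) as well as the sign conditions needed to preserve (P2$'$); hence ``a minimal realization satisfies the bound automatically'' does not follow. Symmetrically, your integer program over raw transition counts $z_t$ pins down the total displacement but leaves open whether a small solution can be \emph{ordered} so that every intermediate $J$-value stays in $\interval{0}{B-1}$. The two halves must be combined the way Rackoff does: decompose the original (hence realizable) segment into a skeleton whose $J$-projection repeats no value (length at most $B^{\card{J}}$) together with a family of simple loops of the $J$-projection anchored at skeleton positions; take the loop multiplicities as the unknowns of the inequation system, with the interval and sign constraints of (P1$'$)--(P2$'$) as the linear constraints; Borosh--Treybig then bounds the multiplicities, and reinserting the selected loops at their anchors keeps every intermediate $J$-value among values already occurring in the original segment, so (P3$'$) holds by construction. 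With that repair, the parameter count you give does yield the stated bound $(1+K)\times(\absmaximum{\avas}\times\absmaximum{\mathcal{P}}\times B)^{n^{\aconstant_1}}$.
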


The length expression in Lemma~\ref{lemma-simple-loops} can be certainly refined in terms
of $\card{{\rm INCR}}$, $\card{I}$ and $l$ but these values are anyhow bounded by
$n$ and $K$ respectively, which is used in  Lemma~\ref{lemma-simple-loops}.
\iflong \input{proof-lemma-simple-loops} \fi
\iflong 
\begin{lemma}[Pumping] \label{lemma-copy}
Let $I, {\rm INCR} \subseteq \interval{1}{n}$, 
$l \in \interval{1}{K}$ and $\apseudorun$ be a pseudo-run 
$\apseudorun = \vect{x}_{2l-2} \step{\apath_{l-1}'} \vect{x}_{2l-1} \step{\apath_l} \vect{x}_{2l}
              \cdots 
             \step{\apath_{K-1}'} \vect{x}_{2K-1} \step{\apath_K} \vect{x}_{2K}$
satisfying $\mathcal{P}_{I}^{l,{\rm INCR}}$.
Then the pseudo-run below 
$
\pair{\apath_{l-1}' (\apath_l)^{n_l} \apath_{l}' (\apath_{l+1})^{n_{l+1}} \cdots 
(\apath_{K})^{n_{K}}}{\vect{x}_{2l-2}}
$
also satisfies $\mathcal{P}_{I}^{l,{\rm INCR}}$
for $\tuple{1}{1} \preceq \tuple{n_l}{n_K}$. 
\end{lemma}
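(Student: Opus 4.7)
My plan is to verify directly that each of the three defining conditions of the approximation property is preserved under pumping the cycle paths $\apath_l, \ldots, \apath_K$ by the positive factors $n_l, \ldots, n_K$. First I would set up notation: let $\vect{b}_i := \vect{x}_{2i} - \vect{x}_{2i-1}$ denote the net effect of the cycle $\apath_i$, and let $\vect{y}_m$ denote the analogous pseudo-configurations of the pumped pseudo-run. A simple induction using additivity of effects then yields that $\vect{y}_{2m}$ (resp.\ $\vect{y}_{2m+1}$) differs from $\vect{x}_{2m}$ (resp.\ $\vect{x}_{2m+1}$) by the cumulative correction $\sum_{i=l}^{m}(n_i-1)\vect{b}_i$, since the cycles and connectors act by fixed translations independent of the starting pseudo-configuration. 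This reduces each of the three verifications to algebraic manipulations on the $\vect{b}_i$'s.

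For condition (P1$'$), the difference $\vect{y}_{2l'}(j) - \vect{y}_{2l'-1}(j)$ equals $n_{l'}\vect{b}_{l'}(j)$, and membership in $\ainterval_{l'}(j)$ follows because the intervals relevant to the main induction have a single sign (typically $[a,+\infty[$ with $a \geq 0$, $]-\infty,b]$ with $b \leq 0$, or $]-\infty,+\infty[$) and are therefore closed under positive-integer scaling of values lying in them. For condition (P2$'$), the key observation is that $n_{l'}\vect{b}_{l'}(j) < 0$ iff $\vect{b}_{l'}(j) < 0$, so the corresponding clause for the original pseudo-run supplies a witness index $l \leq l'' < l'$ with $\vect{b}_{l''}(j) > 0$ (or else $j \in {\rm INCR}$), and this strict increase survives pumping as $n_{l''}\vect{b}_{l''}(j) > 0$.

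The hard part is condition (P3$'$), since pumping introduces many new intermediate pseudo-configurations inside each iterated cycle. My first observation would be that the index set $J$ attached to each $l'$ is identical in both pseudo-runs, since the signs of each $\vect{b}_{l''}(j)$ are unchanged. Fixing $l' \geq l-1$ and $j \in J$, applying (P2$'$) inside the window $[l,l']$ forces $\vect{b}_{l''}(j) = 0$ for every $l \leq l'' \leq l'$ (a negative effect would force either $j \in {\rm INCR}$ or a prior strict increase, both contradicting $j \in J$), so $\vect{y}_{2l'}(j) = \vect{x}_{2l'}(j)$ and the trace of the connector $\apath_{l'}'$ on the $j$-th component coincides with that of the original run, handling this portion. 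Within the iterated cycle $(\apath_{l'+1})^{n_{l'+1}}$, a further application of (P2$'$) yields $\vect{b}_{l'+1}(j) \geq 0$; letting $\mu_j \geq 0$ be the minimum $j$-th component value along a single traversal of $\apath_{l'+1}$ in the original run (non-negativity coming from the original (P3$'$)), the trace of the $k$-th iteration is shifted by $(k-1)\vect{b}_{l'+1}(j) \geq 0$ with respect to the first, so the overall minimum over all $n_{l'+1}$ iterations stays $\geq \mu_j \geq 0$, completing the verification.
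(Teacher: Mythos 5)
Your verifications of (P2$'$) and (P3$'$) are sound; in fact your treatment of (P3$'$) --- forcing $\vect{b}_{l''}(j)=0$ on the whole window $\interval{l}{l'}$ for $j\in J$ and then observing that the $k$-th traversal of $\apath_{l'+1}$ is the first one shifted up by $(k-1)\vect{b}_{l'+1}(j)\geq 0$ --- is the detailed version of the paper's one-line appeal to monotonicity of VAS. The genuine gap is in (P1$'$), and it is caused by your choice of decomposition. By designating the entire block $(\apath_{l'})^{n_{l'}}$ as the $l'$-th cycle, you must show $n_{l'}\vect{b}_{l'}(j)\in\ainterval_{l'}(j)$ knowing only $\vect{b}_{l'}(j)\in\ainterval_{l'}(j)$, and your justification --- that the intervals are closed under positive-integer scaling --- fails for the intervals the lemma must handle. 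A generalized unboundedness property may use any interval $[a,b]$ with $a,b\in\Zed$, or $[a,+\infty[$ with $a<0$, or $]-\infty,b]$ with $b>0$; for instance $\ainterval_{l'}(j)=[1,1]$ with $\vect{b}_{l'}(j)=1$ gives $n_{l'}\vect{b}_{l'}(j)=n_{l'}\notin[1,1]$ as soon as $n_{l'}\geq 2$. You cannot restrict attention to ``the intervals relevant to the main induction'': the pumping lemma is invoked in Case~3 of Lemma~\ref{lemma-main-induction} for the input property $\mathcal{P}$ itself, which is arbitrary, and Case~2 even manufactures intersected intervals $\ainterval_{l''}(j)\cap[1,+\infty[$ that can be bounded.

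The paper sidesteps this entirely by choosing the other decomposition of the pumped pseudo-run: the first $n_{l'}-1$ copies of $\apath_{l'}$ are absorbed into the preceding connector, so that $\vect{y}_{2l'-1}$ is the pseudo-configuration reached after $\apath_{l'-1}'(\apath_{l'})^{n_{l'}-1}$ and $\vect{y}_{2l'}$ the one reached after a single further $\apath_{l'}$. Then $\vect{y}_{2l'}-\vect{y}_{2l'-1}=\vect{x}_{2l'}-\vect{x}_{2l'-1}$ exactly, so (P1$'$) and (P2$'$) are inherited verbatim, and all the work shifts to (P3$'$), where your shifting argument applies unchanged (and is genuinely needed, since the enlarged connectors now contain the extra iterations). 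This is also precisely the re-decomposition that Case~3 of the main induction relies on afterwards. So your proof is repairable by a one-line change of where the cut points $\vect{y}_{2l'-1}$ are placed, but as written the (P1$'$) step is incorrect.
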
 

\begin{proof} Let $\apseudorun' = 
\vect{y}_{2l-2} \step{\apath_{l-1}' (\apath_l)^{n_l-1}} \vect{y}_{2l-1} \step{\apath_l} \vect{y}_{2l}
              \cdots 
             \step{\apath_{K-1}' (\apath_K)^{n_K-1}} \vect{y}_{2K-1} \step{\apath_K} \vect{y}_{2K}$
be the pseudo-run obtained by copying $n_i$ times the path $\apath_i$. 
For $l' \in \interval{l}{K}$ and $j \in \interval{1}{n}$, 
$\vect{x}_{2l'}(j) - \vect{x}_{2l'-1}(j) = \vect{y}_{2l'}(j) - \vect{y}_{2l'-1}(j)$, whence
$\apseudorun'$ satisfies the conditions (P1$'$) and  (P2$'$). 
Since $\apseudorun$ satisfies condition (P3), for $l' \in \interval{l}{K}$, 
for $J = I \setminus ({\rm INCR} \cup \set{j: \exists \ l \leq l'' < l', \
                  \vect{x}_{2l''}(j) - \vect{x}_{2l''-1}(j) > 0})$,
we have $\vect{x}_{2l-1}(J) \preceq \vect{x}_{2l}(J)$ and 
$\vect{x}_{2l}(J) \preceq \vect{y}_{2l}(J)$. This is sufficient to guarantee that $\apseudorun'$
satisfies (P3) since VAS are monotonous.
\end{proof}
\fi 
\iflong
\begin{lemma} \label{lemma-preliminaries}
The map $g$ is monotonous in the following way (with $i \in \interval{0}{n}$, $l 
\in \interval{1}{K}$ and ${\rm INCR} \subseteq \interval{1}{n}$):
\begin{description}
\itemsep 0 cm
\item[(I)] $i \leq i'$ implies $g^{l,{\rm INCR}}(i') \leq g^{l,{\rm INCR}}(i)$.
\item[(II)]  $l \leq l'$ implies $g^{l',{\rm INCR}}(i) \leq g^{l,{\rm INCR}}(i')$.
\item[(III)] ${\rm INCR} \subseteq  {\rm INCR}'$ implies  $g^{l,{\rm INCR}'}(i) \leq g^{l,{\rm INCR}}(i')$.
\end{description}
\end{lemma}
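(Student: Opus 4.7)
The plan is to verify each of the three monotonicity properties separately, exploiting the fact that $g^{l,{\rm INCR}}(i)$ is defined as a bound on the shortest pseudo-run satisfying the approximation property $\mathcal{A}[\mathcal{P}, l, {\rm INCR}, I, B]$ for some set $I$ depending on $i$ (roughly, $i$ tracks how many components are unconstrained, i.e., $|I|$ decreases as $i$ grows). The general principle in each case is: if we loosen the approximation property being witnessed, then any existing short witness pseudo-run for the stronger property is automatically a witness for the weaker one, so the minimal length can only decrease. I would therefore produce each inequality by exhibiting, for any pseudo-run $\apseudorun$ that witnesses the right-hand side's approximation property with the tighter parameters, the very same $\apseudorun$ as a witness for the looser parameters appearing on the left-hand side.

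For (I), I would take a pseudo-run witnessing the bound $g^{l,{\rm INCR}}(i)$ associated with a set $I$ of size corresponding to $i$. When $i \leq i'$, the corresponding set $I'$ satisfies $I' \subseteq I$ (fewer components need to be bounded in (P3$'$)); the same pseudo-run still satisfies (P1$'$), (P2$'$) and now the weaker (P3$'$), so the infimum length defining $g^{l,{\rm INCR}}(i')$ is at most that for $g^{l,{\rm INCR}}(i)$. For (III), the argument is parallel: enlarging ${\rm INCR}$ to ${\rm INCR}'$ weakens both (P2$'$) (more components are permitted to decrease unconditionally) and (P3$'$) (the set $J$ in (P3$'$) is replaced by a smaller set), so a shorter pseudo-run suffices; combined with (I) to absorb the change from $i$ to $i'$ one gets $g^{l,{\rm INCR}'}(i) \leq g^{l,{\rm INCR}}(i) \leq g^{l,{\rm INCR}}(i')$ whenever $i \leq i'$, and similarly in the other direction via (I).

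For (II), when $l \leq l'$, the approximation property $\mathcal{A}[\mathcal{P}, l', {\rm INCR}, I, B]$ only constrains the suffix $\tuple{\ainterval_{l'}}{\ainterval_K}$, which is shorter than $\tuple{\ainterval_l}{\ainterval_K}$. Any pseudo-run witnessing $g^{l,{\rm INCR}}$ can have its segments $\apath_l, \ldots, \apath_{l'-1}$ deleted (formally one takes the sub-pseudo-run starting from $\vect{y}_{2l'-2}$); the resulting pseudo-run still satisfies (P1$'$), (P2$'$) and (P3$'$) for the shorter suffix, and its length is no larger. Combining this shorter-suffix monotonicity with (I) then yields the two-index form $g^{l',{\rm INCR}}(i) \leq g^{l,{\rm INCR}}(i')$ claimed in the statement.

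The main obstacle I anticipate is not the core direction of each inequality (which is almost definitional once $g$'s meaning is fixed), but rather handling cleanly the asymmetry between $i$ and $i'$ in (II) and (III): one has to chain a structural monotonicity (in $l$ or ${\rm INCR}$) with (I) and keep track of whether the intermediate inequality $g^{l',{\rm INCR}}(i) \leq g^{l,{\rm INCR}}(i)$ (resp. $g^{l,{\rm INCR}'}(i) \leq g^{l,{\rm INCR}}(i)$) is being used in the correct direction. A second minor point is to check that the deletion/relaxation operations preserve the specific form of (P3$'$) whose parameter $B$ is inherited rather than replaced by $+\infty$, which in particular requires that no pseudo-configuration accidentally violates the interval $\interval{0}{B-1}$ when restricting to the shorter suffix --- but this is immediate since (P3$'$) was assumed for the full pseudo-run.
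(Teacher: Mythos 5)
First, a caveat about the comparison you asked for: the paper as given states this lemma \emph{without any proof}, and in fact without ever defining the superscripted quantity $g^{l,{\rm INCR}}(i)$ --- the only $g$ appearing in the compiled text is the plain numeric recurrence $g(0)=(2\mu)^{n^{\aconstant_1}}$ and $g(i)=\bigl(2\mu(\pic{\avas}\times g(i-1))\bigr)^{n^{\aconstant_1}}+g(i-1)$, which depends on neither $l$ nor ${\rm INCR}$ and is manifestly increasing in $i$. So there is no proof in the paper to measure yours against, and your first task would necessarily be to supply the missing definition. Your reading --- that $g^{l,{\rm INCR}}(i)$ is an extremal length of pseudo-runs witnessing $\mathcal{A}[\mathcal{P},l,{\rm INCR},I,\cdot]$, and that monotonicity follows because weakening the approximation property preserves witnesses --- is the natural one, and the ``structural'' halves of (II) and (III) (drop the prefix of the decomposition up to $\vect{y}_{2l'-2}$; observe that enlarging ${\rm INCR}$ literally weakens (P2$'$) and (P3$'$)) are sound as you describe them.

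There remain concrete gaps, however. (a) You reverse-engineer the role of $i$ from the desired inequality (I) (``$\card{I}$ decreases as $i$ grows''), which is in tension with the only visible use of $g$ in the paper, namely the bound $g(\card{I})$ in Lemma~\ref{lemma-main-induction} together with the increasing recurrence; without the actual definition, your argument for (I) assumes what it needs to prove. (b) If $g^{l,{\rm INCR}}(i)$ is a maximum over instances (sets $I$ and starting pseudo-configurations) of a minimum over witnesses, then transporting a witness of the stronger property to the weaker one bounds the wrong side of the max--min: to show $g(\text{weaker})\leq g(\text{stronger})$ you must start from an \emph{arbitrary} instance of the weaker property and exhibit a corresponding instance of the stronger one \emph{that actually admits a witness}, which is not automatic. (c) Your chaining with (I) runs in the wrong direction: you claim $g^{l,{\rm INCR}'}(i)\leq g^{l,{\rm INCR}}(i)\leq g^{l,{\rm INCR}}(i')$ ``whenever $i\leq i'$'', but (I) states that $i\leq i'$ yields $g^{l,{\rm INCR}}(i')\leq g^{l,{\rm INCR}}(i)$, i.e.\ the second inequality of your chain would require $i'\leq i$. (To be fair, items (II) and (III) as printed mix $i$ and $i'$ with no stated relation between them and are almost certainly typos for a single index $i$; but then the right move is to prove the single-index statements outright rather than to chain with (I), which is exactly the pitfall you flagged and then stepped into.)
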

\fi 
For $i \in \interval{0}{n}$, \iflong let us define the value $g(i)$: \fi 
\[g(i) \egdef
       \begin{cases}
       (2 \mu)^{n^{\aconstant_1}} \ {\rm with} \ \mu = (1 + K) \times  \absmaximum{\avas} \times \absmaximum{\mathcal{P}} & \text{if $i = 0$}, \\
       \big(2 \mu 
(\pic{\avas} \times g(i-1))\big)^{n^{\aconstant_1}} + g(i-1) & \text{if $i > 0$}.  
       \end{cases}
       \]
Lemma~\ref{lemma-main-induction} below is an extension of~\cite[Lemmas 4.6 \& 4.7]{Rackoff78}, see 
also~\cite[Lemma 7]{Faouzi&Habermehl09}. 
\begin{lemma} \label{lemma-main-induction} 
 Let $I, {\rm INCR} \subseteq \interval{1}{n}$, $l \in \interval{1}{K}$ and 
$\apseudorun$ be a pseudo-run  satisfying $\mathcal{A}[\mathcal{P},l, {\rm INCR},I, + \infty]$.
Then, there exists a pseudo-run $\apseudorun'$ starting from the same pseudo-configuration,  satisfying
the property $\mathcal{A}[\mathcal{P},l, {\rm INCR},I, + \infty]$
  and of length at most $g(\card{I})$.
\end{lemma}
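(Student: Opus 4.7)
I establish the bound by induction on $\card{I}$, extending the inductive structure of~\cite{Rackoff78} to the generalized approximation property $\mathcal{A}$ (which carries the additional parameters $l$, $\mathcal{P}$ and ${\rm INCR}$). For the \emph{base case} $\card{I}=0$, the set $J$ appearing in (P3') is empty at every position, so (P3') is vacuous independently of $B$. Hence $\mathcal{A}[\mathcal{P},l,{\rm INCR},\emptyset,+\infty]$ coincides with $\mathcal{A}[\mathcal{P},l,{\rm INCR},\emptyset,2]$, and Lemma~\ref{lemma-simple-loops} invoked with $B=2$ yields a pseudo-run of length at most $(1+K)(2\absmaximum{\avas}\absmaximum{\mathcal{P}})^{n^{\aconstant_1}}\leq (2\mu)^{n^{\aconstant_1}}=g(0)$, using $(1+K)\leq \mu$.

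\emph{Inductive step.} Set the threshold $B_I:=\pic{\avas}\cdot g(\card{I}-1)+1$ and distinguish two cases. In \emph{Case A}, if $\apseudorun$ already satisfies the stronger property $\mathcal{A}[\mathcal{P},l,{\rm INCR},I,B_I]$, Lemma~\ref{lemma-simple-loops} shortens it to length at most $(1+K)(\absmaximum{\avas}\absmaximum{\mathcal{P}}\,B_I)^{n^{\aconstant_1}}$; since $B_I\leq 2\pic{\avas}g(\card{I}-1)$ and $(1+K)\leq\mu$, this is at most $(2\mu\pic{\avas}g(\card{I}-1))^{n^{\aconstant_1}}\leq g(\card{I})$. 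In \emph{Case B}, let $\vect{x}^{\ast}$ be the \emph{first} pseudo-configuration along $\apseudorun$ violating (P3') with bound $B_I$, located inside some double segment $(\vect{y}_{2l'},\vect{y}_{2l'+2})$, with $j_0\in J_{l'}$ satisfying $\vect{x}^{\ast}(j_0)\geq B_I$. Split $\apseudorun$ at $\vect{x}^{\ast}$ into a prefix $\apseudorun_{\rm pre}$ and a suffix $\apseudorun_{\rm suf}$. The prefix, viewed under an augmented copy of $\mathcal{P}$ whose final segment pins the endpoint to $\vect{x}^{\ast}$, satisfies $\mathcal{A}$ with the finite bound $B_I$ by minimality of $\vect{x}^{\ast}$; an endpoint-preserving refinement of Lemma~\ref{lemma-simple-loops} (obtained by appending one linear equation to the underlying integer-programming system and reapplying~\cite{Borosh&Treybig76}, with no change to the asymptotic bound) yields a short $\apseudorun_{\rm pre}'$ still ending at $\vect{x}^{\ast}$. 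The suffix, rebased at $\vect{x}^{\ast}$ and re-indexed so that the remaining segments of $\mathcal{P}$ form a smaller property $\mathcal{P}''$, can be viewed as a witness of $\mathcal{A}[\mathcal{P}'',l'',{\rm INCR}'',I\setminus\set{j_0},+\infty]$, where ${\rm INCR}''$ absorbs $j_0$ together with the components already strictly incremented before $\vect{x}^{\ast}$. The induction hypothesis yields a $\apseudorun_{\rm suf}'$ of length at most $g(\card{I}-1)$. The concatenation has total length at most
\[(1+K)(\absmaximum{\avas}\absmaximum{\mathcal{P}}\,B_I)^{n^{\aconstant_1}}+g(\card{I}-1)\leq g(\card{I}),\]
and satisfies $\mathcal{A}[\mathcal{P},l,{\rm INCR},I,+\infty]$: conditions (P1') and (P2') are preserved segment by segment; (P3') with $B=+\infty$ follows from the induction hypothesis for components in $J\setminus\set{j_0}$, and for $j_0$ from $\vect{x}^{\ast}(j_0)\geq \pic{\avas}g(\card{I}-1)+1$ combined with the length bound on $\apseudorun_{\rm suf}'$ (at most $\pic{\avas}g(\card{I}-1)$ can be subtracted from $j_0$ afterwards).

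\emph{Main obstacle.} The delicate point is that $\vect{x}^{\ast}$ generically lies in the interior of a segment rather than at a natural boundary $\vect{y}_{2l''}$. This forces two non-obvious manipulations: (i) an endpoint-preserving strengthening of Lemma~\ref{lemma-simple-loops} so that the shortened prefix still terminates at $\vect{x}^{\ast}$, and (ii) a careful re-indexing of $\mathcal{P}$, $l$ and ${\rm INCR}$ for the suffix so that the inductive hypothesis on the smaller set $I\setminus\set{j_0}$ applies cleanly, while guaranteeing that (P1'), (P2') and (P3') of the original property $\mathcal{A}[\mathcal{P},l,{\rm INCR},I,+\infty]$ are recovered after concatenation.
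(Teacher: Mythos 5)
Your overall strategy (induction on $\card{I}$, threshold $B \approx \pic{\avas}\cdot g(\card{I}-1)$, split at the first violation of (P3$'$), shorten the prefix via Lemma~\ref{lemma-simple-loops} and the suffix via the induction hypothesis) is the paper's strategy, but two steps in your Case~B do not go through as stated. First, the ``endpoint-preserving refinement of Lemma~\ref{lemma-simple-loops}'' is unsound: pinning the shortened prefix to terminate exactly at $\vect{x}^{\ast}$ adds to the integer-programming system the constraint that the transition effects sum to $\vect{x}^{\ast}-\vect{x}_{2l-2}$, and the entries of this vector on components \emph{outside} $J$ are not bounded by any function of $n$, $K$, $\absmaximum{\avas}$, $\absmaximum{\mathcal{P}}$ and $B_I$ --- they can grow with the length of the original, a priori unbounded, pseudo-run. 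So the Borosh--Treybig bound does change, and the claimed length for $\apseudorun_{\rm pre}'$ fails. What saves the argument is that full endpoint preservation is not needed: (P1$'$) and (P2$'$) for the suffix depend only on its transition sequence, and (P3$'$) only on the components in $J$, whose values at $\vect{x}^{\ast}$ \emph{are} bounded (by roughly $B_I+\absmaximum{\avas}$). The paper therefore shortens only up to the last segment boundary $\vect{x}_{2D}$ with Lemma~\ref{lemma-simple-loops}, and compresses the remaining stretch to $\vect{u}_{\alpha}$ by a pigeonhole argument on $\interval{0}{B-1}^J$, preserving just the $J$-coordinates (length $<B^{\card{J}}$, whence the $B^{i+1}$ term in $g$). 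Relatedly, your re-indexing must guarantee that the \emph{shortened} prefix still strictly increments every component that the original prefix did, since those increments justify later negative differences and unbounded values via (P2$'$)/(P3$'$); you flag this as an obstacle but do not resolve it. The paper's device is to replace $\mathcal{P}$ by $\mathcal{P}^{\star}$, intersecting $\ainterval_{l''}(j)$ with $[1,+\infty[$ exactly where the original run had $\vect{x}_{2l''}(j)-\vect{x}_{2l''-1}(j)>0$, which forces the shortened prefix to reproduce the set $\asetter$ of incremented components.

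Second, you treat uniformly the case where $\vect{x}^{\ast}$ lies inside a connecting path $\apath'_{D}$ and the case where it lies inside a loop path $\apath_{D}$. Splitting inside $\apath_{D}$ destroys the very loop whose endpoint difference $\vect{x}_{2D}-\vect{x}_{2D-1}$ witnesses (P1$'$) for index $D$: after cutting, neither the prefix nor the suffix contains a complete copy of that loop, so no ``smaller property $\mathcal{P}''$'' can be satisfied by the suffix while still recovering $\mathcal{A}[\mathcal{P},l,{\rm INCR},I,+\infty]$ after concatenation. The paper needs a separate Case~3 here, invoking the pumping Lemma~\ref{lemma-repeat} to duplicate $\apath_{D}$ so that the first (cut) copy is absorbed into the connecting path and an intact copy survives to serve as loop $D$, reducing to Case~2. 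Without this case analysis and the pumping step, your induction does not close.
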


In the induction step, we need to take advantage simultaneously of  the pigeonhole principle,
the induction hypothesis and Lemma~\ref{lemma-simple-loops}.

\begin{proof} Let 
$\apseudorun = \vect{x}_{2l-2} \step{\apath_{l-1}'} \vect{x}_{2l-1} \step{\apath_l} \vect{x}_{2l}
              \cdots 
             \step{\apath_{K-1}'} \vect{x}_{2K-1} \step{\apath_K} \vect{x}_{2K}$ be a pseudo-run
satisfying the property $\mathcal{A}[\mathcal{P},l, {\rm INCR},I, + \infty]$.
We suppose that $\apseudorun$ is induced by the path  
$\atransition_1 \ldots \atransition_k$ with $\apseudorun = \vect{u}_{0} \cdots \vect{u}_k$ and $\amap:
\interval{2l-2}{2K} \rightarrow \interval{0}{k}$ is the map such that
$\vect{x}_i = \vect{u}_{\amap(i)}$ ($\amap(2l-2) = 0$, $\amap(2K)=k$). 

The proof is by induction on $i = \card{I}$. If $i = 0$, then we apply Lemma~\ref{lemma-simple-loops} with 
$B = 2$ 
and we obtain a pseudo-run satisfying \iflong the approximation property \fi 
$\mathcal{A}[\mathcal{P},l, {\rm INCR},I, + \infty]$
leading to the bound 
$
(\mu \times 2)^{n^{\aconstant_1}}
$.

Now suppose $\card{I} = i+1$ and $J = (I \setminus {\rm INCR})$. 
We pose $B = \pic{\avas} \times g(i)$. 
We recall that $\avas$ is the current VAS with $n \geq 2$. 
We perform a case analysis depending where in  $\apseudorun$ 
a value from a component in $J$ is strictly greater than $B-1$ (if any). \\
{\bf Case 1:} Every configuration $\vect{z}$ in $\apseudorun$ satisfies $\vect{z}(J) \in
 \interval{0}{B-1}^J$, i.e., 
$\apseudorun$ satisfies $\mathcal{A}[\mathcal{P},l, {\rm INCR},I,B]$. \\
Obviously, the case $J = \emptyset$  is captured here. 
By Lemma~\ref{lemma-simple-loops}, there is a pseudo-run $\apseudorun'$ starting at $\vect{x}_{2l-2}$
satisfying   $\mathcal{A}[\mathcal{P},l, {\rm INCR},I,B]$ of length at 
most $(1+K) \times (\absmaximum{\avas}  \times
\absmaximum{\mathcal{P}} \times B)^{n^{\aconstant_1}}$, which is bounded by 
$\big(\mu \times 
(\pic{\avas} \times g(i))\big)^{n^{\aconstant_1}}$. \\
{\bf Case 2:} A value for some component in $J$ is strictly greater than $B-1$ for the first
time within the path  $\apath'_{D}$ for some  $D \in \interval{l-1}{K-1}$. 
Let $\alpha$ be the minimal position such that $\vect{u}_{\alpha+1}(J) \not \in \interval{0}{B-1}^J$  
and $\alpha+1 \in \interval{\amap(2D)+1}{\amap(2D+1)}$, say
$\vect{u}_{\alpha+1}(i_0) \geq B$ for some $i_0 \in J$. 
The pseudo-run $\apseudorun$ can be decomposed as follows with 
$\apath_{D}' = \apath_{D}^1 \atransition_{\alpha+1} \apath_{D}^2$ (${\rm INCR}'$ is defined few lines below):
$$
\underbrace{\vect{x}_{2l-2} \step{\apath_{l-1}'} \vect{x}_{2l-1} 
              \cdots \vect{x}_{2D}}_{\apseudorun_1} 
               = \underbrace{\vect{x}_{2D} \step{\apath_{D}^1} \vect{u}_{\alpha}}_{\apseudorun_2} 
              \step{\atransition_{\alpha+1}}
             \overbrace{
              \underbrace{
              \vect{u}_{\alpha+1}  \step{\apath_{D}^2} \vect{x}_{2D+1}
              \cdots
              \vect{x}_{2K-1} \step{\apath_K} \vect{x}_{2K}}_{\apseudorun_3}}^{{\rm satisfies} \
              \mathcal{A}[\mathcal{P},D+1,{\rm INCR}', (I \setminus \set{i_0}),+\infty]}
$$
We shall construct a pseudo-run of the form $\apseudorun'_1 \apseudorun'_2 \apseudorun'_3$
such that each $\apseudorun'_j$ is obtained by  shortening $\apseudorun_j$ and
the length of $\apseudorun'_1$ [resp. $\apseudorun'_2$, $\apseudorun'_3$]
is bounded by $(\mu \times B)^{n^{\aconstant_1}} + 1$ [resp. $B^{i+1} + 1$,  $g(i) + 1$].

\begin{itemize}
\itemsep 0 cm 
\item If $D > l-1$, then we introduce
$\mathcal{P}^{\star} = \tuple{\ainterval_l'}{\ainterval_D'}$ with
for $l'' \in \interval{l}{D}$ and $j \in \interval{1}{n}$, 
if $\vect{x}_{2l''}(j) - \vect{x}_{2l''-1}(j) > 0$ then 
$\ainterval_{l''}'(j) = \ainterval_{l''}(j) \cap [1,+\infty[$,
otherwise $\ainterval_{l''}'(j) = \ainterval_{l''}(j)$. 
The construction of $\mathcal{P}^{\star}$ allows us to preserve the set of components
in $\interval{l}{D}$ whose values can be arbitrarily increased. 
By  Lemma~\ref{lemma-simple-loops}, there is a pseudo-run 
$\apseudorun_1' = \pair{\atransition_1^1 \cdots 
\atransition_{\beta_1}^1}{\vect{x}_{2l-2}}$  satisfying 
$\mathcal{A}[\mathcal{P}^{\star},1, {\rm INCR}, I, B]$ 
such that $\beta_1 \leq  (\mu \times B)^{n^{\aconstant_1}}$.
Indeed, $\absmaximum{\mathcal{P}^{\star}} \leq \absmaximum{\mathcal{P}}$
and the length of $\mathcal{P}^{\star}$ is obviously bounded by $K$. 
Say $\apseudorun_1' = \vect{y}_{2l-2} \step{*} \vect{y}_{2l-1} \step{*} \vect{y}_{2l}
              \cdots 
             \step{*} \vect{y}_{2D-1} \step{*} \vect{y}_{2D}$.
Suppose that $\apseudorun_1' = \vect{u}_{0}^1 \cdots \vect{u}_{\beta_1}^1$ and $\amap_1:
\interval{2l-2}{2D}
\rightarrow \interval{0}{\beta_1}$ is the map such that
$\vect{y}_i = \vect{u}_{\amap_1(i)}^1$ ($\amap_1(2l-2) = 0$, $\amap_1(2D)=\beta_1$).
If $D = l-1$, then $\apseudorun_1 = \pair{\atransition_1 \cdots \atransition_{\alpha}}{\vect{x}_{2l-2}}$
with an analogous decomposition in terms of $\vect{y}_i$'s. 
We have  
$\set{j: 
\vect{y}_{2l'-1}(j) < \vect{y}_{2l'}(j), \ l' \in \interval{l}{D}} = \set{j: 
\vect{x}_{2l'-1}(j) < \vect{x}_{2l'}(j), \ l' \in \interval{l}{D}}$ ($\egdef \asetter$) --partly
by construction of  $\mathcal{P}^{\star}$.
\item Now, by the piegonhole principle, there is a pseudo-run 
$\apseudorun'_2 = \pair{\atransition_1^2 \cdots 
\atransition_{\beta_2}^2}{\vect{y}_{2D}}$ such that  
 $\vect{u}'_{\alpha} = \vect{y}_{2D} + \atransition_1^2 + \cdots +  \atransition_{\beta_2}^2$,
$\vect{u}'_{\alpha}(J) = \vect{u}_{\alpha}(J)$ and $\beta_2 < B^{\card{J}} \leq B^{i+1}$. 
We pose $\vect{u}'_{\alpha+1} = \vect{u}'_{\alpha} + \atransition_{\alpha+1}$. 
\item Finally, observe that $\pair{\atransition_{\alpha+2} \cdots \atransition_{k}}{\vect{u}'_{\alpha+1}}$
satisfies $\mathcal{A}[\mathcal{P},D+1,{\rm INCR'},(I \setminus \set{i_0}), + \infty]$ with 
${\rm INCR}' \egdef {\rm INCR} \cup \asetter$.
By the induction hypothesis, there is a pseudo-run $\apseudorun_3' = 
\pair{\atransition_1^3 \cdots \atransition_{\beta_3}^3}{\vect{u}'_{\alpha+1}}$
satisfying  
$\mathcal{A}[\mathcal{P},D+1,{\rm INCR'},(I \setminus \set{i_0}), + \infty]$
and such that
$\beta_3 \leq g(i)$.
Because $\vect{u}_{\alpha+1}'(i_0) \geq \pic{\avas}  \times g(i)$, $\apseudorun_3'$ also satisfies
 $\mathcal{A}[\mathcal{P},D+1,{\rm INCR'},I, + \infty]$.
\end{itemize}
Glueing the previous transitions, the pseudo-run 
$\pair{
\atransition_1^1 \cdots \atransition_{\beta_1}^1
\atransition_1^2 \cdots \atransition_{\beta_2}^2
\atransition_{\alpha + 1} 
\atransition_1^3 \cdots \atransition_{\beta_3}^3}
{\vect{x}_{2l-2}}$
satisfies   the approximation property 
$\mathcal{A}[\mathcal{P},l,{\rm INCR},I, + \infty]$.
 and its length is bounded by
$
 (\mu \times B)^{n^{\aconstant_1}} + B^{i+1} +  g(i)
$.\\
{\bf Case 3:}  A value for some component in $J$ is strictly greater than $B-1$ for the first
time within the path  $\apath_{D}$ for some  $D \in \interval{l}{K}$. \\ 
The pseudo-run $\apseudorun$ can be written as follows with 
$\apath_D = \apath_D^1 \apath_D^2$ and $\apath_D^1 \neq \varepsilon$
$$
\vect{x}_{2l-2} \step{\apath_{l-1}'} \vect{x}_{2l-1} 
              \cdots \vect{x}_{2D-1} 
               \step{\apath_{D}^1} \vect{u}_{\alpha+1} \step{\apath_{D}^2} \vect{x}_{2D}
              \cdots
              \vect{x}_{2K-1} \step{\apath_K} \vect{x}_{2K}
$$
By Lemma~\ref{lemma-repeat}, the pseudo-run
$ \apseudorun'
\pair{\apath_{l-1}' \apath_l \cdots \apath'_{D-1} (\apath_D)^2 \apath'_{D} \cdots 
\apath_{K}}{\vect{x}_{2l-2}}
$
also satisfies the approximation property $\mathcal{A}[\mathcal{P},l,{\rm INCR},I, + \infty]$ and can be written as
$
\vect{x}_{2l-2} \step{\apath_{l-1}'} \vect{x}_{2l-1} 
              \cdots \vect{x}_{2D-2} \step{\apath_{D-1}' \apath_D} \vect{x}_{2D} = \vect{z}_{2D-1}
              \step{\apath_D} \vect{z}_{2D} \step{\apath'_{D+1}} \cdots
              \vect{z}_{2K-1} \step{\apath_K} \vect{z}_{2K}
$.
We are therefore back to Case 2.
\end{proof}

\iflong 
Consequently for some constant $\aconstant_2 > \aconstant_1$ (for instance $\aconstant_2 = \aconstant_1 + 1$),
we have 
\[g(i) \leq
       \begin{cases}
       (\mu \times 2)^{n^{\aconstant_2}} & \text{if $i = 0$}, \\
       \big(2 \times \mu \times (\pic{\avas} \times g^l(i-1))\big)^{n^{\aconstant_2}} & \text{if $i > 0$}.  
       \end{cases}
       \]

\begin{lemma}[Small Pseudo-Run Property] \label{lemma-g}
Let $\apseudorun$ be a pseudo-run satisfying the property $\mathcal{P}$. Then, there is a pseudo-run
satisfying $\mathcal{P}$ of length at most 
$(\mu \times 2
\times \pic{\avas})^{n^{(2n+1) \aconstant_2}}$.
\end{lemma}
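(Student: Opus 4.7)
The plan is to derive this bound as an immediate corollary of Lemma~\ref{lemma-main-induction}, followed by a routine unrolling of the recurrence that defines $g$. The starting observation is that a pseudo-run weakly satisfying $\mathcal{P}$ is exactly a pseudo-run that satisfies the approximation property $\mathcal{A}[\mathcal{P},1,\emptyset,\interval{1}{n},+\infty]$, since in this regime condition (P3$'$) becomes vacuous (the bound is $+\infty$) and the approximation reduces to (P0), (P1), (P2) together with (P3). In particular, any run satisfying $\mathcal{P}$ is, when regarded as a pseudo-run, a witness for this approximation property.

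Given such a starting pseudo-run $\apseudorun$, I would apply Lemma~\ref{lemma-main-induction} with the choice $l = 1$, ${\rm INCR} = \emptyset$, $I = \interval{1}{n}$. This directly yields another pseudo-run (starting from the same initial pseudo-configuration) that satisfies $\mathcal{A}[\mathcal{P},1,\emptyset,\interval{1}{n},+\infty]$, and therefore satisfies $\mathcal{P}$, with length bounded by $g(\card{I}) = g(n)$. The remaining task is then purely to bound $g(n)$ in closed form in terms of $\mu$, $\pic{\avas}$, and $n$.

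To solve the recurrence, set $A = 2\mu \pic{\avas}$ and $N = n^{\aconstant_2}$, so that $g(i) \leq (A\,g(i-1))^N$ for $i \geq 1$ and $g(0) \leq A^N$ (since $2\mu \leq A$). A straightforward induction on $i$ gives
\[
g(i) \leq A^{\,N + N^2 + \cdots + N^{i+1}} \leq A^{\,2 N^{i+1}},
\]
using $N \geq 2$. Specializing to $i = n$ yields $g(n) \leq A^{\,2 n^{(n+1)\aconstant_2}}$. Since for $n \geq 2$ we have $2 \leq n^{\aconstant_2}$, the exponent is in turn bounded by $n^{(n+2)\aconstant_2} \leq n^{(2n+1)\aconstant_2}$, so that $g(n) \leq (\mu \cdot 2 \cdot \pic{\avas})^{n^{(2n+1)\aconstant_2}}$, which is the announced inequality.

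There is no real obstacle here: all of the genuine technical work has been absorbed into Lemma~\ref{lemma-simple-loops} and Lemma~\ref{lemma-main-induction}. The only subtlety to check is that the translation between "weakly satisfies $\mathcal{P}$" and $\mathcal{A}[\mathcal{P},1,\emptyset,\interval{1}{n},+\infty]$ is indeed an equivalence so that the main induction lemma applies without further massaging, and that the final slack in the exponent $n^{(2n+1)\aconstant_2}$ is generous enough to absorb the constant factor $2$ and the geometric sum, which it is comfortably once $n \geq 2$ (the case $n = 1$ having already been dispatched at the start of Section~\ref{section-expspace}).
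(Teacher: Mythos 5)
Your proposal is correct and follows essentially the same route as the paper: identify a pseudo-run weakly satisfying $\mathcal{P}$ with one satisfying $\mathcal{A}[\mathcal{P},1,\emptyset,\interval{1}{n},+\infty]$, invoke Lemma~\ref{lemma-main-induction} to get length at most $g(n)$, and then unroll the recurrence for $g$. Your closed-form unrolling via the geometric sum $N+N^2+\cdots+N^{i+1}\leq 2N^{i+1}$ is in fact slightly cleaner than the paper's own induction (which carries an extra factor $\nu^{i+1}$ in the base and absorbs it with a final constant adjustment), and it lands directly on the stated exponent $n^{(2n+1)\aconstant_2}$.
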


\begin{proof}
By induction on $i$, we can show that $g(i) \leq (\mu  \times 2 \times \pic{\avas})^{n^{(2i+1) 
\aconstant_2}}$. For $i = 0$ this is obvious . Otherwise
$$
g^l(i+1) \leq  \big(\mu  \times \pic{\avas} \times g(i-1)\big)^{n^{\aconstant_2}}
\leq 
(\mu  \times
 (\mu \times  2  \times \pic{\avas})^{n^{(2i+1) 
\aconstant_2}})^{n^{\aconstant_2}} \leq \ldots
$$
$$
\leq
(\mu)^{n^{\aconstant_2}} 
(\mu  \times  2  \times \pic{\avas})^{n^{(2i+2) \aconstant_2}})
\leq 
(\mu  \times  2 \times \pic{\avas})^{n^{(2i+3) \aconstant_2}}
$$
Hence, $g^l(n) \leq (\mu \times 2 \times \pic{\avas})^{n^{(2n+1) \aconstant_2}}$,
that is
$$
g^1(n) \leq 2^{(2 \length{\avas} + 1) 2^{\aconstant (2n+1)log(n)}} \times ((1+K) \times \absmaximum{\mathcal{P}})^{ 2^{\aconstant (2n+1)log(n)}}
$$
for some constant $\aconstant$.
\end{proof}
\fi 

\iflong
In the proof of Theorem~\ref{theorem-gene}, we first show that 
$g(n) \leq (\mu \times 2 \times \pic{\avas})^{n^{(2n+1) \aconstant}}$ for some
constant $\aconstant > 1$. 
\else
Now, we are seeking to bound $g(n)$. 
\fi 

\begin{lemma} \label{lemma-length}
If $\apseudorun$ is a pseudo-run weakly satisfying $\mathcal{P}$,
then there is a $\apseudorun'$ starting from the same pseudo-configuration, weakly satisfying
 $\mathcal{P}$ and of length at most  
$(\mu \times 2 \times \pic{\avas})^{n^{(2n+1) \aconstant}}$ for some
 $\aconstant > 1$ with $\mu = (1 + K) \times  \absmaximum{\avas} \times 
\absmaximum{\mathcal{P}}$.
\end{lemma}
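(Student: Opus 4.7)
The plan is to derive Lemma~\ref{lemma-length} immediately from Lemma~\ref{lemma-main-induction} together with an elementary bound on the recurrence defining $g$. Recall the observation made just before Lemma~\ref{lemma-repeat}: a pseudo-run satisfies the approximation property $\mathcal{A}[\mathcal{P},1,\emptyset,\interval{1}{n},+\infty]$ iff it weakly satisfies $\mathcal{P}$. Thus, given $\apseudorun$ weakly satisfying $\mathcal{P}$, applying Lemma~\ref{lemma-main-induction} with the parameters $l=1$, ${\rm INCR}=\emptyset$ and $I=\interval{1}{n}$ yields a pseudo-run $\apseudorun'$ starting from the same pseudo-configuration, still weakly satisfying $\mathcal{P}$, whose length is at most $g(n)$ (since $\card{I}=n$).

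The remaining task is purely arithmetic: bounding $g(n)$ by the expression announced in the statement. I would prove by induction on $i \in \interval{0}{n}$ the slightly stronger inequality
$$
g(i) \;\leq\; (\mu \times 2 \times \pic{\avas})^{n^{(2i+1)\aconstant}}
$$
for a constant $\aconstant > 1$ that depends only on $\aconstant_1$; setting $i=n$ gives the announced bound. The base case $i=0$ is immediate from $g(0) = (2\mu)^{n^{\aconstant_1}}$, which is at most $(2\mu \times \pic{\avas})^{n^{\aconstant}}$ as soon as $\aconstant \geq \aconstant_1$ (recall $\pic{\avas} \geq 1$).

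For the induction step, I would plug the induction hypothesis into
$$
g(i+1) \;=\; \bigl(2\mu \times \pic{\avas} \times g(i)\bigr)^{n^{\aconstant_1}} \;+\; g(i).
$$
The first summand becomes $(2\mu\,\pic{\avas})^{(n^{(2i+1)\aconstant}+1)\cdot n^{\aconstant_1}}$; since
$(n^{(2i+1)\aconstant}+1) \cdot n^{\aconstant_1} \leq 2\, n^{(2i+1)\aconstant + \aconstant_1} \leq n^{(2i+2)\aconstant}$
whenever $\aconstant \geq \aconstant_1 + 2$, this summand is bounded by $(2\mu\,\pic{\avas})^{n^{(2i+2)\aconstant}}$. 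Adding the second summand $g(i) \leq (2\mu\,\pic{\avas})^{n^{(2i+1)\aconstant}}$ and using $a^{N}+a^{M} \leq a^{N+1}$ for $a \geq 2$ and $M \leq N$ finally yields $(2\mu\,\pic{\avas})^{n^{(2i+3)\aconstant}}$, closing the induction.

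The step I expect to be most delicate is precisely this exponent book-keeping: one must absorb both the additive $g(i)$ term and the multiplicative blow-up $n^{\aconstant_1}$ coming from Lemma~\ref{lemma-simple-loops} into a single uniform two-step increase $(2i+1) \rightsquigarrow (2i+3)$ of the exponent. This is what forces the final constant $\aconstant$ in Lemma~\ref{lemma-length} to be strictly larger than $\aconstant_1$, and why the iterated exponent reaches the shape $n^{(2n+1)\aconstant}$ after $n$ unfoldings of the recurrence. No genuinely new combinatorial argument is required beyond Lemma~\ref{lemma-main-induction}.
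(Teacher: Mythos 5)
Your proposal is correct and follows essentially the same route as the paper: invoke Lemma~\ref{lemma-main-induction} with $l=1$, ${\rm INCR}=\emptyset$, $I=\interval{1}{n}$ (using the stated equivalence between $\mathcal{A}[\mathcal{P},1,\emptyset,\interval{1}{n},+\infty]$ and weak satisfaction of $\mathcal{P}$), then bound $g(n)$ by induction on $i$ with an exponent growing by two per step. The only cosmetic difference is that the paper first absorbs the additive $g(i-1)$ term into a slightly larger constant $\aconstant_2$ and tracks a base $\nu^{i+1}$ before folding it back in, whereas you carry the additive term explicitly; both computations check out.
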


\begin{proof}
Let us bound $g(n)$. 
By Lemma~\ref{lemma-main-induction},  for some constant 
$\aconstant_2 > \aconstant_1$ (for instance $\aconstant_2 = \aconstant_1 + 1$),
we have 
\[g(i) \leq
       \begin{cases}
       (2 \mu)^{n^{\aconstant_2}} & \text{if $i = 0$}, \\
       \big(2 \mu (\pic{\avas} \times g(i-1))\big)^{n^{\aconstant_2}} & \text{if $i > 0$}.  
       \end{cases}
       \]
By induction on $i$, we can show that
 $g(i) \leq (\nu^{i+1})^{n^{(2i+1) 
\aconstant_2}}$ with $\nu = 2 \mu \times \pic{\avas}$. For $i = 0$ this is obvious . Otherwise
$$
g(i+1) \leq  \big(2 \mu  \times \pic{\avas} \times g(i)\big)^{n^{\aconstant_2}}
\leq 
(\nu
 (\nu^{i+1})^{n^{(2i+1) 
\aconstant_2}})^{n^{\aconstant_2}} \leq \ldots
$$
$$
\leq
((\nu^{i+2})^{n^{(2i+1) 
\aconstant_2}})^{n^{\aconstant_2}}
\leq
(\nu^{i+2})^{n^{(2i+2) \aconstant_2}}
<
(\nu^{i+2})^{n^{(2i+3) \aconstant_2}}
$$
Hence, $g(n) \leq (\nu^{n+1})^{n^{(2n+1) \aconstant_2}}$.
\iflong
that is
$$
g(n) \leq 2^{(2 \length{\avas} + 1) 2^{\aconstant (2n+1)log(n)}} \times (1+K)^{ 2^{\aconstant (2n+1)log(n)}}
$$
\fi 
As soon as $n \geq 2$, 
there is a constant $\aconstant$ s.t. 
$g(n) \leq (2 \mu  \times \pic{\avas})^{n^{(2n+1) \aconstant}}$.
\end{proof}

Let us conclude the section by the main result of the paper. 

\begin{theorem} \label{theorem-gene} 
(I) The generalized unboundedness problem for VASS is \expspace-com\-plete.
(II)  For each $n \geq 1$, the generalized unboundedness problem  
restricted to VASS of dimension at most $n$ is in \pspace.
\end{theorem}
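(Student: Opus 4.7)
The plan is to combine the reduction from VASS to VAS (Lemma~\ref{lemma-reduction}), the short-pseudo-run bound of Lemma~\ref{lemma-length}, and the pseudo-run-to-run conversion of Lemma~\ref{lemma-pseudorun-run}, so as to bound the length of a witness run, and then to invoke Savitch's theorem. Concretely, given an instance $\pair{\pair{\avass}{\pair{\alocation_0}{\vect{x}_0}}}{\mathcal{P}}$, first apply Lemma~\ref{lemma-reduction} to reduce to an equisatisfiable VAS instance $\pair{\pair{\avas}{\vect{x}'_0}}{\mathcal{P}'}$ of dimension $n' = n+3$, with length $K$ and scale of the property preserved and $\absmaximum{\avas} \leq 2^{O(\length{\avass})}$. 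If a witness run exists then, viewed as a pseudo-run, it already weakly satisfies $\mathcal{P}'$, so Lemma~\ref{lemma-length} yields a weak-pseudo-run of length at most $L_1 = (2\mu \cdot \pic{\avas})^{(n')^{(2n'+1)\aconstant}}$ with $\mu = (1+K) \cdot \absmaximum{\avas} \cdot \absmaximum{\mathcal{P}}$, and Lemma~\ref{lemma-pseudorun-run} then converts it into a genuine witness run of length $L_2$ polynomial in $L_1$, $K$, and $\pic{\avas}$. Writing $s$ for the input size, the quantities $\mu$, $\pic{\avas}$, and $\absmaximum{\mathcal{P}}$ are all at most $2^{O(s)}$; hence $L_2$ is doubly exponential in $s$ in general, and singly exponential in $s$ whenever $n$ is fixed.

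For part (I), I design a nondeterministic algorithm that guesses such a witness one transition at a time. It maintains (a) the current counter valuation, whose components are bounded by $L_2 \cdot \pic{\avas}$ and thus fit in $2^{O(s)}$ bits; (b) a $2^{O(s)}$-bit binary step counter capped by $L_2$; (c) the current block index $l \in \interval{0}{K}$, a single snapshot $\vect{x}_{2l-1}$, and an $n' \times K$ bit matrix recording, for each component, in which earlier blocks its net effect has been strictly positive (this is exactly what (P2) needs); note that (P0) is vacuous after the reduction to VAS. When the algorithm nondeterministically declares that block $l$ has ended, it verifies (P1) and (P2) for that block using the stored data, then updates the bit matrix, discards $\vect{x}_{2l-1}$, and moves on. The whole state uses exponential space, so the algorithm runs in nondeterministic exponential space, and Savitch's theorem yields \expspace. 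The matching lower bound is inherited from the \expspace-hardness of the boundedness problem for VASS~\cite{Lipton76}, which one encodes in logarithmic space as a single generalized unboundedness instance on an enlarged VASS by adding a fresh ``summarising'' counter that receives tokens from every component, making its $j$-unboundedness equivalent to unboundedness of the original VASS.

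For part (II), fixing $n$ turns $(n')^{(2n'+1)\aconstant}$ into a constant, so $L_2$ becomes singly exponential in $s$; the counter values, the step counter, the single snapshot, and the $n' \times K$ bit matrix then all fit in polynomial space. The same algorithm now runs in nondeterministic polynomial space, giving \pspace \ by Savitch. The main technical obstacle throughout is item (c): since the block decomposition of a witness is not given in advance, the algorithm has to commit to the boundaries on the fly, and one must justify that storing just one snapshot together with the $K \times n'$ bit matrix is sufficient; fortunately (P1) is local to one block, while (P2) only requires, per component, a single bit summarising the history of already-completed blocks, which is precisely what the matrix records.
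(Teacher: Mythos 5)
Your proposal is correct and follows essentially the same route as the paper: reduce to VAS via Lemma~\ref{lemma-reduction}, obtain a doubly-exponential (singly-exponential for fixed dimension) witness via Lemma~\ref{lemma-length} and Lemma~\ref{lemma-pseudorun-run}, guess the witness on the fly keeping only one snapshot plus per-component history bits for (P1)/(P2), and close with Savitch and Lipton's lower bound. One cosmetic slip: the bound of Lemma~\ref{lemma-pseudorun-run} is of the form $(L\cdot\pic{\avass})^{K}\cdot(\cdots)$, so $L_2$ is exponential in $K$ rather than ``polynomial in $L_1$, $K$, and $\pic{\avas}$''; since $K$ is at most the input size this does not affect your doubly-exponential (resp.\ singly-exponential) conclusions.
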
%

\iflong 
As a by-product, we can easily regain the exponential space bound mentioned below.

\begin{corollary} \cite{Habermehl97}
Control-state reachability problem and termination for VASS are in \expspace.
\end{corollary}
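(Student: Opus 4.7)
The plan is to reduce both decision problems to the generalized unboundedness problem in logspace and then invoke Theorem~\ref{theorem-gene}(I). Since the reductions I have in mind increase the dimension by at most one and produce generalized unboundedness properties of length $K=1$ and scale $1$, the \expspace \ upper bound transfers immediately.

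For the control-state reachability problem, given $\pair{\avass}{\pair{\alocation_0}{\vect{x}_0}}$ with $\avass=\triple{\locations}{n}{\delta}$ and target control state $\alocation_f \in \locations$, I would construct $\avass' = \triple{\locations}{n+1}{\delta'}$ of dimension $n+1$ by padding every transition vector of $\delta$ with a zero in the last coordinate and adding a single self-loop $\alocation_f \step{\vect{e}_{n+1}} \alocation_f$ that only increments the fresh counter $c_{n+1}$. The initial configuration is $\pair{\alocation_0}{(\vect{x}_0,0)}$. Then $\alocation_f$ is reachable in $\pair{\avass}{\pair{\alocation_0}{\vect{x}_0}}$ iff $\pair{\avass'}{\pair{\alocation_0}{(\vect{x}_0,0)}}$ is $(n+1)$-unbounded: one direction uses the self-loop to pump $c_{n+1}$ arbitrarily after any visit of $\alocation_f$, and the other direction uses the fact that $c_{n+1}$ can change only while visiting $\alocation_f$. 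By the reduction from simultaneous unboundedness to generalized unboundedness recalled in Section~\ref{section-expressive-power}, this yields an instance of length $K=1$ and scale $1$.

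For termination, I would rely on the standard characterization: $\pair{\avass}{\pair{\alocation_0}{\vect{x}_0}}$ admits an infinite run iff there exist $\pair{\alocation_1}{\vect{x}_1}, \pair{\alocation_2}{\vect{x}_2}$ and a decomposition $\pair{\alocation_0}{\vect{x}_0} \step{*} \pair{\alocation_1}{\vect{x}_1} \step{+} \pair{\alocation_2}{\vect{x}_2}$ with $\alocation_1 = \alocation_2$ and $\vect{x}_1 \preceq \vect{x}_2$ (by monotonicity and K\"onig's lemma on any infinite run). This condition is exactly the generalized unboundedness property $\mathcal{P}$ of length $K=1$ defined by $\ainterval_1(j) = [0,+\infty[$ for every $j \in \interval{1}{n}$: condition (P0) enforces $\alocation_1 = \alocation_2$, condition (P1) enforces $\vect{x}_1 \preceq \vect{x}_2$, and condition (P2) holds vacuously since no coordinate decreases. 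So $\pair{\avass}{\pair{\alocation_0}{\vect{x}_0}}$ does \emph{not} terminate iff $\pair{\avass}{\pair{\alocation_0}{\vect{x}_0}}$ satisfies $\mathcal{P}$; termination is then in \expspace \ because the complement of \expspace \ is \expspace \ (by Savitch's theorem, \expspace \ is closed under complement).

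Both reductions are clearly computable in logspace and preserve polynomial size, so Theorem~\ref{theorem-gene}(I) gives the desired \expspace \ upper bound. The only mildly delicate point is checking that the witness decomposition used for termination does force the middle segment $\apath_1$ to be nonempty (so that it can be iterated to produce an infinite run); this is inherited from the decomposition convention used throughout the paper for generalized unboundedness witnesses, and can be enforced syntactically by requiring at least one transition between $\pair{\alocation_1}{\vect{x}_1}$ and $\pair{\alocation_2}{\vect{x}_2}$. Beyond that, there is no real obstacle, since the heavy machinery has been absorbed into Theorem~\ref{theorem-gene}.
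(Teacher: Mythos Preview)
Your approach is sound and close in spirit to the paper's, with one small but genuine gap in the termination half.

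For control-state reachability, your reduction (fresh counter incremented by a self-loop at $\alocation_f$) is essentially the paper's idea; the paper instead increments the fresh counter on every transition \emph{entering} the target state and then phrases the resulting condition as a control-state \emph{repeating} property (existence of an infinite run visiting the state infinitely often). Both variants reduce to a length-$1$ generalized unboundedness property, so the final complexity argument is identical.

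For termination the two routes differ. The paper proceeds indirectly, reducing non-termination to $\card{\locations}$ instances of the control-state repeating problem just established (some control state must recur infinitely often along any infinite run). You instead encode non-termination directly as the single property with $\ainterval_1(j) = [0,+\infty[$ for all $j$. This is more elementary, but it has a real gap: nothing in the paper's definition of ``a run satisfies $\mathcal{P}$'' forces the segment $\apath_1$ to be nonempty, so with your choice of intervals \emph{every} initialized VASS trivially satisfies $\mathcal{P}$ by taking $\vect{x}_1 = \vect{x}_2$. Your proposed fix (``enforce it syntactically'') is not available, since one cannot add side constraints to an instance of the generalized unboundedness problem as formulated. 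The clean repair is the very trick you already used for reachability: add a fresh $(n{+}1)$-th counter incremented by \emph{every} transition of $\avass$, set $\ainterval_1(n{+}1) = [1,+\infty[$, and keep $\ainterval_1(j) = [0,+\infty[$ for $j \leq n$. This forces $\apath_1 \neq \varepsilon$ and makes the equivalence with non-termination exact, while still yielding a length-$1$, scale-$1$ instance.
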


\begin{proof} 
Let $\pair{\avass}{\pair{\alocation}{\vect{x}}}$ be an initialized VASS and $\alocation$ be a control state.
Let us build a $\avas$ of dimension $n' = n + m + 1$ constructed in the standard way for the $n + m$ components and 
the $n + m + 1$ component is incremented exactly when the transition corresponds to a transition in $\avass$ going to 
$\alocation$. Consequently, there is an infinite run from $\pair{\avass}{\pair{\alocation}{\vect{x}}}$ in which $\alocation$
is repeated infinitely often iff in $\avas$, there is a finite run of the form
$\vect{x}' \step{*} \vect{y} \step{+} \vect{y}'$ such that $\vect{y} \leq \vect{y}'$, $\vect{y}(n+m+1) < \vect{y}'(n+m+1)$.
We only require that $\vect{x}'$ restricted to the $n+m$ components correspond to the encoding of 
$\pair{\alocation}{\vect{x}}$ ($\vect{x}'(n+m+1)$ is irrelevant). 
This condition can be easily expressed as a generalized unboundedness property and the modification of parameters from
$\avass$ to $\avas$ allows to get the \expspace \ upper bound by using Lemma~\ref{lemma-g}. 
Termination can be reduced to $\card{\locations}$ instances of control-state reachability problem. 
 \end{proof}
\fi%
\section{Other Applications}%

In this section, we draw conclusions from  Theorem~\ref{theorem-gene}.
First, as a by-product of Theorem~\ref{theorem-gene} and using the reductions from Section~\ref{section-expressive-power}, 
we can easily regain the exponential-space bound mentioned below.

\begin{corollary} \label{corollary-main}
The regularity detection problem
and the strong promptness detection problem are in \expspace.  
The simultaneous unboundedness problem is \expspace-complete. 
For each fixed $n \geq 1$, their restriction to VASS of dimension at most $n$ are in \pspace.
\end{corollary}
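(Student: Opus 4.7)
The plan is to derive each item of the corollary from Theorem~\ref{theorem-gene} combined with the reductions assembled in Section~\ref{section-expressive-power}, invoking Savitch's theorem whenever a universal quantifier or a complement arises.

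For the simultaneous unboundedness problem, Lemma~\ref{lemma-simul} reduces an instance $\triple{\avass}{\pair{\alocation_0}{\vect{x}_0}}{\aset}$ to the existence of a disjointness sequence $\sigma = \aset_1 \cdots \aset_K$ over $\interval{1}{n}$ with $\aset \subseteq \bigcup_l \aset_l$ and $\aset \cap \aset_K \neq \emptyset$, together with a run satisfying $\mathrm{PB}_{\sigma}$. Converting $\mathrm{PB}_{\sigma}$ into the generalized unboundedness property $\mathcal{P}_{\sigma}$ (as sketched in Section~\ref{section-expressive-power}) yields an instance of length $K \leq n$ and scale $1$, whose size is polynomial in that of $\avass$. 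The procedure guesses $\sigma$ in $O(n \log n)$ bits and invokes the \expspace\ algorithm of Theorem~\ref{theorem-gene}(I); since $\textsc{NExpSpace} = \expspace$, the overall cost remains exponential space. For the matching lower bound, place boundedness is the special case $\card{\aset} = 1$ and is \expspace-hard by Lipton's reduction~\cite{Lipton76}. With $n$ fixed, there are only $O(1)$ candidate sequences $\sigma$ and each $\mathcal{P}_{\sigma}$ has constant length and scale, so Theorem~\ref{theorem-gene}(II) supplies the \pspace\ upper bound.

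For regularity detection, Section~\ref{section-expressive-power} recalls that nonregularity amounts to the existence of a witness run for one of $n$ generalized unboundedness properties of length $2$ and bounded scale, indexed by $i \in \interval{1}{n}$. Nondeterministically guessing $i$ and invoking Theorem~\ref{theorem-gene}(I) places nonregularity in $\textsc{NExpSpace} = \expspace$, and closure of \expspace\ under complement gives the regularity bound. For fixed $n$ the same argument with Theorem~\ref{theorem-gene}(II) and closure of \pspace\ under complement yields the \pspace\ bound. The strong promptness detection problem is handled identically: Lemma~\ref{lemma-promptness} gives a logspace reduction to the complement of the simultaneous unboundedness problem, and closure of \expspace\ (respectively \pspace) under complement transfers the upper bounds of the first paragraph.

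The only nontrivial bookkeeping is that, when composing the \textsc{vass}-to-\textsc{vas} translation of Lemma~\ref{lemma-reduction} with the encoding $\sigma \mapsto \mathcal{P}_{\sigma}$, the parameters $K$, $\absmaximum{\mathcal{P}_{\sigma}}$, $\pic{\avas}$ and the new dimension feeding into Theorem~\ref{theorem-gene} must remain polynomial in the input size, so that the doubly exponential bound on pseudo-run lengths collapses to a single exponential space budget; Lemma~\ref{lemma-reduction} and the explicit construction of $\mathcal{P}_{\sigma}$ are already tailored to guarantee this, and the fixed-dimension refinement follows because all of $K$, $\absmaximum{\mathcal{P}_{\sigma}}$ and the dimension become constants.
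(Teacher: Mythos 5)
Your proposal is correct and follows essentially the same route as the paper: the corollary is obtained as a by-product of Theorem~\ref{theorem-gene} via the reductions of Section~\ref{section-expressive-power} (the encoding of ${\rm PB}_{\sigma}$ as $\mathcal{P}_{\sigma}$, the disjunction of length-$2$ properties for nonregularity, and Lemma~\ref{lemma-promptness} for strong promptness), with Savitch's theorem and closure of \expspace\ and \pspace\ under complement absorbing the guesses and complementations, and Lipton's construction supplying the matching lower bound for simultaneous unboundedness. Your bookkeeping remarks on the parameters $K$, the scale and the dimension are exactly the points the paper's reductions are designed to control.
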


\iflong \input{proof-corollary-main} \fi 
The complexity upper bound for regularity detection problem has been left open 
in~\cite{Faouzi&Habermehl09}. 
Decidability of the strong promptness detection problem is 
established in~\cite{Valk&Jantzen85}. 
The \expspace \ upper bound has been already stated in~\cite{Yen92,Faouzi&Habermehl09}. 
We cannot rely on~\cite{Yen92} because of the 
flaw in ~\cite[Lemma 7.7]{Yen92}.
Condition 4. in~\cite[page 13]{Faouzi&Habermehl09} does not characterize strong promptness
(but only promptness) as 
shown in Section~\ref{section-expressive-power}.
Finally, increasing path 
formulae from~\cite{Faouzi&Habermehl09}
cannot characterize strong promptness detection unlike 
generalized unboundedness properties. Therefore, we also believe that the upper bound
for strong promptness detection is new. 
Below, we state how the previous results allow us to characterize the computational complexity
of reversal-boundedness detection problem for VASS and its variant with weak reversal-boundedness. 

\begin{theorem} \label{theorem-RB} \
\iflong
\begin{description}
\item[(I)] Reversal-boundedness detection problem for VASS is  \expspace-complete.
\item[(II)] For each fixed $n \geq 1$, its restriction to VASS of dimension at most $n$
is in \pspace.
\item[(III)] (I) and (II) hold true for strong reversal-boundedness and for weak 
reversal-boundedness.
\end{description}
\else
(I) Reversal-boundedness detection problem for VASS is  \expspace-complete.
(II) For each fixed $n \geq 1$,  its restriction to VASS of dimension at most $n$
is in \pspace.
(III) Properties (I) and (II) also hold true for weak reversal-boundedness detection problem.
\fi 
\end{theorem}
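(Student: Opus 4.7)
For the \expspace \ upper bound in (I), I combine Lemma~\ref{lemma-rb-hp} with Corollary~\ref{corollary-main}. Given an instance $(\avass, \pair{\alocation}{\vect{x}}, i)$, the lemma produces in polynomial space a VAS $\avas$ of dimension $2n+3$ whose binary encoding is polynomial in the input size (since $\absmaximum{\avas}$ is bounded by $(\card{\locations} \cdot 2^n + 1)^2$, exponential in value but polynomial in bits), such that reversal-boundedness with respect to $i$ is equivalent to the negation of $(n+i)$-unboundedness for $\pair{\avas}{\vect{x}'}$. Since simultaneous unboundedness is in \expspace \ and \expspace \ is closed under complementation via Savitch, the upper bound in (I) follows. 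When $n$ is fixed, the dimension $2n+3$ of $\avas$ is also fixed and the \pspace \ clause of Corollary~\ref{corollary-main} settles (II).

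\textbf{For (III)}, the upper bound is obtained from Lemma~\ref{lemma-weak-reversal-boundedness} together with Theorem~\ref{theorem-gene}. The lemma reduces non-weak-reversal-boundedness with respect to $i$ to the existence of a run in $\avass_{rb}$ satisfying ${\rm PB}_\sigma$ for some disjointness sequence $\sigma = \aset_1 \cdots \aset_K$ with $n+i \in \aset_K$ and $i \in \aset_1 \cup \cdots \cup \aset_{K-1}$. Each such ${\rm PB}_\sigma$ is, by the encoding described in Section~\ref{section-expressive-power}, equivalent to a generalized unboundedness property of length at most $2n$ and unit scale, whose representation is polynomial in the input. The number of candidate sequences is at most $2^{O(n \log n)}$, so I enumerate them one by one while reusing the working tape and, for each, call the \expspace \ procedure of Theorem~\ref{theorem-gene}; the total stays in \expspace. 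When $n$ is fixed, both the enumeration and each individual check fit into \pspace, settling (II) for the weak variant.

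\textbf{The lower bounds} go by reduction from the (place) boundedness problem for VASS, which is \expspace-hard by \cite{Lipton76}. For (I), given an instance with distinguished component $i$, I attach two fresh control states $p_1, p_2$ and a fresh counter $c$ initialised to $0$; from every state of $\avass$ I add a silent jump to $p_1$, a transition $p_1 \to p_2$ incrementing $c$ and decrementing $i$, and a transition $p_2 \to p_1$ decrementing both $c$ and $i$. Each traversal of the cycle produces two reversals on $c$ and consumes two units of $i$, so $c$ admits unboundedly many reversals iff $i$ is unbounded in $\avass$. For the weak variant in (III) this gadget does not suffice, since reversals must occur at arbitrarily high values: I therefore first transfer the content of an unbounded component into $c$ via self-loops decrementing $i$ and incrementing $c$, and only then oscillate $c$ against an auxiliary counter $d$ through transitions that alternately swap one unit between $c$ and $d$ in each direction. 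If $\avass$ is unbounded the value of $c$ reached before oscillation can be made arbitrarily large, so reversals on $c$ happen at arbitrarily high values; if $\avass$ is bounded, $c$ is uniformly bounded and the modified VASS is weakly $0$-reversal-$B$-bounded for that bound.

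\textbf{The main obstacle} is the gadget analysis for weak reversal-boundedness: one must arrange the control structure as a strictly linear sequence (original VASS, then transfer phase, then oscillation phase, with no way back) so that the oscillation between $c$ and $d$ does not perturb the modes tracked for the counters inherited from $\avass$, and so that the reversal count on the $\avass$-counters themselves remains bounded whenever $\avass$ is bounded. Once this careful separation is in place, the equivalence between boundedness of $\avass$ and weak reversal-boundedness of the augmented VASS follows from Lemma~\ref{lemma-simul} applied to the fresh components.
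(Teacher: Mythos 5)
Your upper-bound arguments follow the paper's route exactly: part (I) via Lemma~\ref{lemma-rb-hp} composed with Corollary~\ref{corollary-main}, part (II) from the fixed-dimension clause, and part (III) via Lemma~\ref{lemma-weak-reversal-boundedness} plus an enumeration of the (exponentially many) disjointness sequences. The lower-bound gadgets (converting units of an \expspace-hard-to-bound place into reversals of a fresh counter, and, for the weak variant, first pumping that place's content into the fresh counter so the oscillation happens at arbitrarily high values) are sound and in the spirit of what the paper does.

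One step needs more care than you give it: in part (III) you propose to ``call the \expspace\ procedure of Theorem~\ref{theorem-gene}'' on $\avass_{rb}$ for each candidate ${\rm PB}_{\sigma}$. But $\avass_{rb}$ has $\card{\locations}\times 2^n$ control states, so $\length{\avass_{rb}}$ is exponential in the input, and a black-box \expspace\ call on an exponentially large instance only yields \twoexpspace. This is precisely the trap the paper warns against after Lemma~\ref{lemma-rb-hp}: you must first apply the Hopcroft--Pansiot reduction (Lemma~\ref{lemma-reduction}) to obtain a VAS of dimension $2n+3$ whose scale is at most $(\card{\locations}\times 2^n+1)^2$ --- exponential in value but polynomial in bits --- and then observe that in the length bound of Lemma~\ref{lemma-length} the dimension sits in the doubly-exponential exponent while the scale only contributes its logarithm, so the bound stays doubly exponential in the \emph{original} input. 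You make exactly this observation for part (I), so the fix is to say explicitly that the same parameter bookkeeping is performed for each ${\rm PB}_{\sigma}$-instance in part (III); without it the claimed \expspace\ (and fixed-dimension \pspace) bounds for the weak variant do not follow. A second, minor point: your weak-variant gadget must also state the index of the designated counter $c$ in the output instance, since both detection problems are parameterized by a component.
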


\iflong \input{proof-theorem-rb} \fi 
By Theorem~\ref{theorem-RB}(I), once an initialized 
VASS is shown to be reversal-bounded, one can compute 
effectively semilinear sets corresponding to reachability sets, 
\iflong for instance one by control state, \fi 
see recent developments in~\cite{To10}. The size of the representation of such sets is at least polynomial
in  the maximal number of reversals.
However, we know that an initialized VASS can be bounded but still the cardinality of
its reachability set may be nonprimitive
recursive, see e.g.~\cite{Valk&VidalNaquet81}. A similar phenomenon occurs 
with reversal-boundedness, as briefly explained below. 
In case of reversal-boundedness, the maximal reversal can be 
nonprimitive recursive in the size of the initialized VASS in the 
\iflong
worst-case, which, we admit, 
is not an idyllic
situation for analyzing reversal-bounded VASS. 
\else
worst-case.
\fi
Indeed,  
given $n \geq 0$, one can compute in time polynomial
in $n$ an initialized VASS $\pair{\avass_n}{\pair{\alocation_0}{\vect{x}_n}}$ that generates a 
finite reachability
set of cardinal $\mathcal{O}(A(n))$ for some nonprimitive recursive map $A(\cdot)$ similar to Ackermann
function, see e.g., the construction in~\cite{Jantzen86}. 
Moreover, $\pair{\avass_n}{\pair{\alocation_0}{\vect{x}_n}}$ 
can be shown  to admit only finite runs, see details 
in~\cite{Jantzen86}. It is then easy to compute a variant VASS $\avass_n'$ by adding a component and such that
each transition of $\avass_n$ is replaced by itself followed by  incrementating  the new component
and then   decrementing it (creating a reversal). 
Still $\avass_n'$ has no infinite computation, 
$\pair{\avass_n'}{\pair{\alocation_0}{\vect{x}_n'}}$ is reversal-bounded ($\vect{x}'_n$ restricted
to the components of $\avass_n$ is equal to $\vect{x}_n$) and its maximal reversal is in
$\mathcal{O}(A(n))$. 

\section{Concluding Remarks}
\label{section-conclusion}
We have proved the  \expspace-easiness of the generalized unboundedness 
problem  
(both the initialized VASS and the generalized unboundedness property are 
part of the inputs). For example, this allows us to provide the
optimal complexity upper bound for  the 
reversal-boundedness detection problems, place boundedness problem, strong 
promptness detection problem and regularity detection problem. 
Even though our proof technique is clearly tailored along the lines of~\cite{Rackoff78}, we had 
to provide a series of adaptations in order to get the final \expspace \ upper bound (and the \pspace \
upper bound for fixed dimension). In particular, we advocate the use of  witness pseudo-run
characterizations (instead of using runs) when there exist decision procedures using
coverability graphs.

Let us conclude by possible continuations.
\iflong
First, our \expspace \ proof can be obviously extended 
for example by injecting covering constraints, to replace intervals in properties by more complex
sets of integers or to combine our proof technique with the one from~\cite{Faouzi&Habermehl09}. 
\else
Our \expspace \ proof can be obviously extended 
by  replacing intervals in properties by more complex
sets of integers or by adding new constraints between intermediate configurations.
\fi
The robustness of our proof technique still deserves to be determined. 
A challenging question is to determine the complexity of checking when a reachability set 
obtained by an initialized VASS is semilinear. 
\iflong
Another direction consists in considering a richer class of models. 
It is shown in~\cite{Finkel&Sangnier10} that checking
whether an initialized VASS with one zero-test is reversal-bounded is decidable, but with a
nonprimitive recursive worst-case complexity, the existence of an \expspace \ upper
bound being open.
\fi  
Besides, various subclasses of VASS exist for which
decision problems are of lower complexity. 
For instance, in~\cite{MPraveen&Lodaya09}, 
the  boundedness problem is shown to be in \pspace \ for a class of VASS with  so-called 
bounded \defstyle{benefit depth}.
It is unclear for which subclasses of VASS,  the generalized unboundedness problem can be solved in polynomial
space too.%

{\em Acknowledgments:} I would like to thank Thomas Wahl (U. of Oxford) and anonymous referees 
for their suggestions and remarks about a preliminary version of this work. 

\bibliographystyle{eptcs}


\end{document}